\documentclass[a4paper,12pt]{article}

\usepackage{abstract}
\usepackage{amsfonts}
\usepackage{amsmath}
\usepackage{bm}
\usepackage{amssymb}
\usepackage{amsthm}
\usepackage[titletoc,toc,title]{appendix}
\usepackage{array}
\usepackage{authblk}
\usepackage{bbm}
\usepackage{caption}
\usepackage{dsfont}
\usepackage{setspace}
\usepackage{enumitem}
\usepackage[margin=1in]{geometry}
\usepackage{graphicx}
\usepackage{hyperref}
\usepackage{layout}
\usepackage{multirow}
\usepackage[round]{natbib}
\usepackage{leftidx}
\usepackage{subcaption}
\usepackage{tikz}
\usetikzlibrary{shapes,shadows,arrows,positioning,decorations.markings,decorations.pathreplacing}
\DeclareMathOperator*{\esssup}{ess\,sup}
\DeclareMathOperator*{\essinf}{ess\,inf}

\DeclareMathOperator*{\argmin}{arg\,min}

\makeatletter
\providecommand\phantomcaption{\caption@refstepcounter\@captype}
\makeatother

\makeatletter
\def\namedlabel#1#2{\begingroup
#2%
\def\@currentlabel{#2}%
\phantomsection\label{#1}\endgroup
}

\makeatother

\tikzset{
->-/.style={decoration={
markings,\theoremstyle{plain}
\newlist{casenv}{enumerate}{4}
\setlist[casenv]{leftmargin=*,align=left,widest={iiii}}
\setlist[casenv,1]{label={{\itshape\ \casename} \arabic*.},ref=\arabic*}
\setlist[casenv,2]{label={{\itshape\ \casename} \roman*.},ref=\roman*}
\setlist[casenv,3]{label={{\itshape\ \casename\ \alph*.}},ref=\alph*}
\setlist[casenv,4]{label={{\itshape\ \casename} \arabic*.},ref=\arabic*}
mark=at position .5 with {\arrow{>}}},postaction={decorate}},
-<-/.style={decoration={
markings,
mark=at position .5 with {\arrow{<}}},postaction={decorate}},
}

\title{\bf Strategic Risk Reduction:\\Self-Protection and Self-Insurance}
\author[$\star$]{Wing Fung Chong}
\affil[$\star$]{Department of Statistics and Actuarial Science, School of Computing and Data Science, The University of Hong Kong, Hong Kong, China. Email: chongwf@hku.hk.}

\begin{document}

\sloppy

\theoremstyle{definition}
\newtheorem{theorem}{Theorem}[section]
\newtheorem{corollary}[theorem]{Corollary}
\newtheorem{lemma}[theorem]{Lemma}
\newtheorem{proposition}[theorem]{Proposition}

\newtheorem{definition}{Definition}[section]
\newtheorem{problem}{Problem}[section]
\newtheorem{remark}{Remark}[section]
\newtheorem{example}{Example}[section]
\setcounter{section}{0}

\maketitle

\begin{abstract}
This paper studies how a risk holder should combine self-protection and self-insurance strategies when market insurance is absent. Self-protection reduces loss frequency, while self-insurance reduces loss severity. The risk holder incurs a joint risk-reduction cost that allows technological interaction between the two strategies and evaluates residual risk using either Value-at-Risk or Tail Value-at-Risk. In a Bernoulli model, we show that Value-at-Risk leads to a threshold-driven solution in which the optimal strategy is either no risk reduction, pure self-protection, or pure self-insurance, thereby exhibiting a substitution-type structure between the two risk-reduction strategies. By contrast, although Tail Value-at-Risk also admits a left-region/right-region decomposition, its left-region problem creates a direct residual frequency-severity interaction, making the local problem non-convex even in the Bernoulli setting. We solve this problem using an isoquant geometry method based on the marginal-balance curves for self-protection and self-insurance. The analysis identifies boundary, extreme constrained, touching, and crossing candidates, and shows how the confidence level and the cost technology determine whether self-protection and self-insurance behave as substitutes or complements. Illustrative examples compare the Value-at-Risk and Tail Value-at-Risk strategies, show how the confidence level changes the relevant isoquant geometry, and demonstrate that multiple crossings may generate non-unique optimal joint risk-reduction strategies.


%
\end{abstract}
\section{Introduction}

Recent emerging and large-scale risks, including extreme weather events, pandemics, cyber threats, and operational disruptions, have renewed interest in how individuals, firms, and institutions can reduce their own exposure before losses occur. In many such situations, market insurance may be unavailable, incomplete, expensive, or subject to exclusions. Two classical forms of ex-ante risk reduction are self-protection and self-insurance. Self-protection reduces the probability of a loss, while self-insurance reduces the severity of a loss conditional on its occurrence. Using cyber risk as an example, firewalls, access controls, phishing training, and endpoint monitoring are self-protection strategies, while backups, disaster recovery, incident response, and business-continuity planning are self-insurance strategies. Using flood risk as another example, flood barriers, drainage improvements, sealed doors, and backflow valves are self-protection strategies, while elevating electrical equipment, using waterproof materials, relocating valuable assets, and preparing post-flood restoration capacity are self-insurance strategies. This paper studies how a risk holder should strategically combine these two risk-reduction activities when market insurance is set aside.

The economic analysis of self-protection and self-insurance originates with the seminal work of \cite{EhrlichBecker1972}. They introduce a state-preference framework to study the interaction among self-protection, self-insurance, and market insurance. A central insight is that self-insurance and market insurance tend to be substitutes, whereas self-protection and market insurance may be complements or substitutes depending on the underlying environment. Subsequent work has developed this line of research in several directions. \cite{DionneEeckhoudt1985} study how increased risk aversion affects self-protection and self-insurance, showing that the effect is more robust for self-insurance than for self-protection. \cite{Lee1998} considers self-insurance-cum-protection, recognizing that many real-world activities may reduce both loss frequency and loss severity simultaneously. \cite{MuermannKunreuther2008} study optimal investment in self-protection by insured individuals when they face interdependencies through possible contamination from others. More recently, \cite{HofmannPeter2016} revisit self-protection and self-insurance in a two-period setting with savings and consumption smoothing, showing that the timing of investment and the presence of endogenous savings affect the comparative statics of risk-reduction effort. \cite{ChiPeterQi2026} further disentangle risk and time in the analysis of optimal prevention, emphasizing how dynamic considerations shape prevention incentives.

A broader and more recent literature has examined self-protection, self-insurance, prevention, and insurance demand under alternative preference models, risk measures, and dynamic risk environments. \cite{BensalemHernandezSantibanezKaziTani2020} study prevention effort and insurance demand under coherent risk measures in a Stackelberg setting with price incentives, giving special attention to self-protection and self-insurance. \cite{GauchonLoiselRulliereTrufin2020} and \cite{GauchonLoiselRulliereTrufin2021} analyze optimal prevention in risk models, including applications to ruin minimization and large-risk settings. \cite{BensalemHernandezSantibanezKaziTani2023} develop a continuous-time model of self-protection in which prevention effort dynamically reduces risk exposure. Surveys by \cite{CourbagePeterReyTreich2025} and \cite{Peter2025} provide comprehensive discussions of prevention, self-protection, self-insurance, and related insurance-economic models. A number of recent papers further study self-protection, self-insurance, prevention, insurance demand, and related risk-management problems, including \cite{BoonenNgNguyen2025}, \cite{BoonenZheng2025b}, \cite{ChenZhangZhu2025SSRN}, \cite{ChongFengHuZhang2025CyberRisk}, \cite{BoonenZheng2025a}, \cite{ChenZhangZhu2025EJOR}, and \cite{LiWangZhang2025}.

Most of the existing literature studies self-protection or self-insurance through marginal comparative statics, often in the presence of market insurance, or within self-insurance-cum-protection models in which a single prevention activity reduces both loss frequency and loss severity simultaneously. Such models mechanically tie the two margins together. By contrast, this paper focuses on the risk holder's direct strategic choice of residual risk when market insurance is absent. The risk holder chooses both a residual loss frequency and a residual loss severity. Equivalently, the risk holder chooses the amount of self-protection and the amount of self-insurance. This formulation allows us to ask a simple but fundamental question: if a risk holder must manage risk internally, how should self-protection and self-insurance be combined?

We consider a Bernoulli model. Before risk reduction, the loss is either zero or a positive amount. The risk holder chooses a residual probability of the positive loss and a residual severity conditional on the positive loss. Self-protection corresponds to lowering the residual probability, and self-insurance corresponds to lowering the residual severity. The joint cost of risk reduction is specified as a function of the amount of frequency reduction and the amount of severity reduction. This formulation permits technological interaction between self-protection and self-insurance. When increasing one type of risk reduction raises the marginal cost of the other, the two activities are technological substitutes in the cost technology. When increasing one type of risk reduction lowers the marginal cost of the other, they are technological complements in the cost technology. When increasing one type of risk reduction does not change the marginal cost of the other, the two activities are technologically independent in the cost technology.

Returning to the cyber-risk example, when prevention and recovery compete for the same IT budget, staff, and implementation capacity, they are substitutes in the cost technology. When a cybersecurity upgrade creates shared information and fixed setup costs, making prevention and recovery cheaper to implement together, they are complements in the cost technology. When prevention and recovery are provided by separate vendors with independent pricing and implementation, they are independent in the cost technology. Similarly, in the flood-risk example, flood-prevention works and severity-reduction renovations may be substitutes if they compete for the same renovation budget and contractors; they may be complements if a building retrofit creates shared design, engineering, and construction costs; and they may be independent if prevention devices and severity-reduction measures are purchased and installed through separate channels.

The paper studies two risk-measure objectives, namely, Value-at-Risk and Tail Value-at-Risk. Although the loss model is Bernoulli, the resulting optimization problems are quite different. Under Value-at-Risk, the problem is governed by a probability threshold. If the residual loss probability lies below this threshold, the positive loss is outside the Value-at-Risk tail. If the residual loss probability exceeds the threshold, the residual severity enters the objective. This threshold structure leads to a simple characterization. The optimal Value-at-Risk strategy is either no risk reduction, a pure self-protection threshold strategy, or a pure self-insurance strategy. The sign of the technological interaction between self-protection and self-insurance in the cost technology does not change this qualitative threshold-driven form. Thus, Value-at-Risk exhibits a substitution-type structure between the two risk-reduction strategies: the optimizer does not require their simultaneous use.

The Tail Value-at-Risk problem is fundamentally different. In the region where the residual loss probability lies below the Tail Value-at-Risk threshold, the objective contains a direct joint effect of residual frequency and residual severity. This tail-risk interaction creates a nonseparable optimization problem. As a result, the Tail Value-at-Risk objective need not be convex even in the Bernoulli setting. Standard convex analysis is therefore insufficient. The key economic object is the net interaction between residual frequency and residual severity. This net interaction combines the technological interaction in the risk-reduction cost with the tail-risk interaction generated by Tail Value-at-Risk.

The main technical contribution of the paper is to develop an isoquant geometry method for solving the Tail Value-at-Risk problem.\footnote{A related isoquant method was used by \cite{MuermannKunreuther2008} to study first- and second-best self-protection strategies between two interdependent risk holders facing possible contamination from each other.} The first marginal-balance isoquant consists of strategies at which the marginal effect of changing residual frequency is zero. The second marginal-balance isoquant consists of strategies at which the marginal effect of changing residual severity is zero. These two isoquants are the marginal-balance curves for self-protection and self-insurance. When the net interaction is positive, both isoquants are downward-sloping and generate a lower-left/upper-right separation geometry. When the net interaction is negative, both isoquants are upward-sloping, generating an upper-left/lower-right separation geometry, and the problem can be reduced to the positive-net-interaction case by reflecting the severity coordinate. When the net interaction is zero, the isoquants become coordinatewise: one is vertical and the other is horizontal.

Under positive net interaction, the paper gives a complete characterization of the left-region Tail Value-at-Risk optimizer. If at least one marginal-balance isoquant does not cut through the feasible rectangle, the problem reduces to a one-dimensional boundary minimization. If both isoquants cut through but do not intersect, the solution is one of two extreme constrained marginal-balance candidates. If the isoquants touch without crossing, an endpoint candidate may expand into a whole common marginal-balance component. If the isoquants cross, the relevant candidate set alternates across crossing components according to the initial vertical ordering of the isoquants and the parity of the number of crossings. These cases show that Tail Value-at-Risk can generate genuinely joint self-protection/self-insurance optima, unlike the Value-at-Risk problem, which remains fully threshold-driven.

The economic interpretation of the Tail Value-at-Risk solution depends on the interaction between the cost technology and the tail-risk term. If the cost function is supermodular, the cost-side interaction reinforces the Tail Value-at-Risk interaction. The net interaction is positive, and the marginal-balance curves are downward-sloping. In reduction variables, this corresponds to a substitution-type marginal tradeoff between self-protection and self-insurance, in which more self-protection is locally balanced against less self-insurance, and vice versa. If the cost function is sufficiently submodular, the cost-side complementarity may dominate the Tail Value-at-Risk interaction at lower confidence levels, leading to negative net interaction and upward-sloping marginal-balance curves. In reduction variables, this corresponds to a complementarity-type marginal tradeoff. As the confidence level increases, the Tail Value-at-Risk interaction becomes stronger and shifts the problem toward the positive-net-interaction regime.

The paper contributes to the literature in four ways. First, it isolates the internal risk-reduction problem of a risk holder who combines self-protection and self-insurance without relying on market insurance. This perspective is particularly relevant when insurance is unavailable, incomplete, or unaffordable. Second, it shows that Value-at-Risk and Tail Value-at-Risk lead to qualitatively different risk-reduction strategies. Value-at-Risk is fully threshold-driven, whereas Tail Value-at-Risk is governed by joint marginal-balance geometry. Third, it introduces an isoquant-based method for solving a non-convex Tail Value-at-Risk problem generated by the tail-risk interaction between residual frequency and residual severity. This method identifies boundary candidates, separated-isoquant candidates, touching components, and crossing components, and it gives a transparent economic interpretation of how cost-side and tail-risk interactions determine whether self-protection and self-insurance behave as substitutes or complements in the optimal risk-reduction strategy. Fourth, it provides illustrative examples that compare the Value-at-Risk and Tail Value-at-Risk strategies under a common quadratic cost specification, show how the confidence level shifts the Tail Value-at-Risk problem across negative, zero, and positive net-interaction regimes, and demonstrate that higher-order cost functions can generate multiple crossings and non-unique optimal joint strategies.

The rest of the paper is organized as follows. Section \ref{sec:problem} introduces the risk-reduction model, the cost technology, and the risk-measure formulation. Section \ref{sec:VaR} solves the Value-at-Risk problem and shows that its solution is governed by the probability threshold. Section \ref{sec:tvar} solves the Tail Value-at-Risk problem. It develops the isoquant geometry, analyzes positive, negative, and zero net interaction, and studies the effect of the confidence level on the net interaction. Section \ref{sec:illustrative} provides illustrative examples, while Section \ref{sec:conclusion} concludes and discusses possible extensions.

\section{Risk Reduction Problem}\label{sec:problem}
Let $t=0$ denote the current time, and let $\left(\Omega,\mathcal{F},\mathbb{P}\right)$ be a probability space. A risk holder (RH) faces a non-negative $\mathcal{F}$-measurable loss random variable $X$, whose realization is not observed until a future time $t=T>0$. Throughout this paper, the pre-mitigation risk $X$ is assumed to be non-degenerate Bernoulli under the objective probability measure $\mathbb{P}$. That is, there exist $\overline{p}\in\left(0,1\right)$ and $\overline{l}>0$ such that $\mathbb{P}\left(X=0\right)=1-\overline{p}$ and $\mathbb{P}\left(X=\overline{l}\right)=\overline{p}$. The value $\overline{p}$ represents the baseline probability of a positive loss, while $\overline{l}$ represents the baseline loss severity conditional on a positive loss.

At time $t=0$, the RH chooses ex-ante risk reduction strategies before the realization of $X$. Self-protection reduces the frequency of the loss and is represented by the choice of a residual loss probability $p\in\left[\underline{p},\overline{p}\right]$, where $\underline{p}\in\left(0,\overline{p}\right)$ is the lowest attainable positive loss probability. Self-insurance reduces the severity of the loss and is represented by the choice of a residual loss severity $l\in\left[\underline{l},\overline{l}\right]$, where $\underline{l}\in\left(0,\overline{l}\right)$ is the lowest attainable positive loss severity. Given an admissible strategy $\left(p,l\right)\in\left[\underline{p},\overline{p}\right]\times\left[\underline{l},\overline{l}\right]$, the mitigated risk $X\left(p,l\right)$ is Bernoulli with $\mathbb{P}\left(X\left(p,l\right)=0\right)=1-p$ and $\mathbb{P}\left(X\left(p,l\right)=l\right)=p$.

The pair $\left(p,l\right)$ is the residual-risk representation of the admissible strategy. For ease of exposition, it is also useful to introduce the corresponding reduction-amount representation, $x=\overline{p}-p$ and $y=\overline{l}-l$. Here, $x$ measures the amount of frequency reduction achieved through self-protection, while $y$ measures the amount of severity reduction achieved through self-insurance. The two representations are equivalent; the mapping $\left(p,l\right)\leftrightarrow\left(x,y\right)$ is one-to-one between $\left[\underline{p},\overline{p}\right]\times\left[\underline{l},\overline{l}\right]$ and $\left[0,\overline{x}\right]\times\left[0,\overline{y}\right]$, where $\overline{x}=\overline{p}-\underline{p}$ and $\overline{y}=\overline{l}-\underline{l}$. Throughout this paper, the analysis is mainly conducted in the residual-risk variables $\left(p,l\right)$, while the reduction variables $\left(x,y\right)$ are used when they provide clearer exposition.

For each admissible risk-reduction strategy $\left(p,l\right)$, the RH incurs an ex-ante risk-reduction cost. Since $p$ and $l$ are residual-risk variables, it is economically convenient to define this cost in terms of the reduction amounts $x$ and $y$. Let $c:\left[0,\overline{x}\right]\times\left[0,\overline{y}\right]\rightarrow\left[0,\infty\right)$ denote the joint cost function of self-protection and self-insurance, where $c\left(x,y\right)$ is the cost of reducing the positive-loss probability by $x$ and the positive-loss severity by $y$. We assume that $c\left(0,0\right)=0$, so that zero risk reduction incurs zero cost, and we assume that $c$ is continuous on its compact domain. Equivalently, in the residual-risk representation, for $\left(p,l\right)\in\left[\underline{p},\overline{p}\right]\times\left[\underline{l},\overline{l}\right]$, define $\pi\left(p,l\right)=c\left(\overline{p}-p,\overline{l}-l\right)$. Then $\pi\left(\overline{p},\overline{l}\right)=0$, and $\pi$ is continuous on its compact domain.

For mathematical simplicity, assume that $c$, and hence $\pi$, is twice continuously differentiable on the interior of its domain and admits the corresponding one-sided first- and second-order derivatives on the boundary. We denote these partial derivatives by $c_x,c_y,c_{xx},c_{xy},c_{yy}$, and $\pi_p,\pi_l,\pi_{pp},\pi_{pl},\pi_{ll}$. We assume that $c_x>0$, $c_y>0$, $c_{xx}>0$, and $c_{yy}>0$ on the interior of the domain, with the corresponding one-sided inequalities on the boundary. The first two conditions capture positive marginal costs of risk reduction, in either frequency or severity, when the other risk-reduction activity is held fixed. The last two conditions capture strictly increasing own marginal costs of risk reduction. Equivalently, $\pi_p=-c_x<0$, $\pi_l=-c_y<0$, $\pi_{pp}=c_{xx}>0$, and $\pi_{ll}=c_{yy}>0$.

The cross-partial derivative captures the technological relation between self-protection and self-insurance. A strictly supermodular joint cost function, meaning $\pi_{pl}=c_{xy}>0$, implies that increasing one type of risk reduction raises the marginal cost of the other. In this sense, self-protection and self-insurance are technological substitutes in the cost technology. Conversely, a strictly submodular joint cost function, meaning $\pi_{pl}=c_{xy}<0$, implies that increasing one type of risk reduction lowers the marginal cost of the other, so that self-protection and self-insurance are technological complements. When $\pi_{pl}=c_{xy}=0$ identically, the marginal cost of each risk-reduction activity is independent of the level of the other activity. This case is equivalent to the separable cost specification $c\left(x,y\right)=c_{\text{F}}\left(x\right)+c_{\text{S}}\left(y\right)$, up to the normalization by $c_{\text{F}}\left(0\right)=c_{\text{S}}\left(0\right)=0$. Equivalently, $\pi\left(p,l\right)=\pi_{\text{F}}\left(p\right)+\pi_{\text{S}}\left(l\right)$, where $\pi_{\text{F}}\left(p\right)=c_{\text{F}}\left(\overline{p}-p\right)$ and $\pi_{\text{S}}\left(l\right)=c_{\text{S}}\left(\overline{l}-l\right)$. In this case, $c_{\text{F}}$ and $c_{\text{S}}$ are continuous, strictly increasing, strictly convex, and twice continuously differentiable in the reduction amounts, while $\pi_{\text{F}}$ and $\pi_{\text{S}}$ are continuous, strictly decreasing, strictly convex, and twice continuously differentiable in the residual-risk variables.

\begin{example}\label{eg:cost_function}
Consider the quadratic family of joint cost functions: for $\left(x,y\right)\in\left[0,\overline{x}\right]\times\left[0,\overline{y}\right]$,
\begin{equation}
c\left(x,y\right)=\frac{1}{2}\left(Ax^2+2\delta xy+By^2\right)+ax+by,
\label{eq:cost_function}
\end{equation}
where $A>0$, $B>0$, $a>\max\left\{0,-\delta\overline{y}\right\}$, and $b>\max\left\{0,-\delta\overline{x}\right\}$. Then $c\left(0,0\right)=0$, and $c$ is continuous on its compact domain. Moreover, $c_x\left(x,y\right)=Ax+\delta y+a$, and $c_y\left(x,y\right)=\delta x+By+b$. The parameter restrictions on $a$ and $b$ ensure that $c_x>0$ and $c_y>0$ on the domain. Also, $c_{xx}=A>0$ and $c_{yy}=B>0$. Therefore, $c$ satisfies the required assumptions. Finally, since $c_{xy}=\delta$, we have that $c$ is strictly supermodular if $\delta>0$, is strictly submodular if $\delta<0$, and is separable if $\delta=0$.
\qed
\end{example}

Once an admissible risk-reduction strategy $\left(p,l\right)$ is chosen at time $t=0$, it is held until the future time $t=T$. Hence, viewed from time $t=0$, the RH's terminal wealth at time $t=T$ is the $\mathcal{F}$-measurable random variable
\begin{equation*}
W\left(p,l\right)=w-\pi\left(p,l\right)-X\left(p,l\right),
\end{equation*}
where $w\geq 0$ is the RH's initial wealth at time $t=0$. The status quo corresponds to the admissible pair $\left(\overline{p},\overline{l}\right)$, or equivalently $\left(x,y\right)=\left(0,0\right)$. Since $\pi\left(\overline{p},\overline{l}\right)=0$ and $X\left(\overline{p},\overline{l}\right)=X$, the status-quo wealth is $W\left(\overline{p},\overline{l}\right)=w-\pi\left(\overline{p},\overline{l}\right)-X\left(\overline{p},\overline{l}\right)=w-X$.

At time $t=0$, the RH chooses an admissible risk reduction strategy in order to maximize their preference over terminal wealth. Let ${\bm\theta}=(p,l)$ denote a generic admissible strategy, and write $W\left({\bm\theta}\right)=W\left(p,l\right)$. The RH's preference functional is denoted by $V\left(\cdot\right)$. Since the status quo corresponds to $\left(\overline{p},\overline{l}\right)$, a rational RH should only consider strategies that deliver a preference level at least as high as that of the status quo. Therefore, the RH solves
\begin{equation}
\sup_{{\bm\theta}\in\tilde{\Theta}}V\left(W\left({\bm\theta}\right)\right),
\label{eq:original_problem_rationality}
\end{equation}
where
\begin{equation}
\tilde{\Theta}=\left\{{\bm\theta}=\left(p,l\right)\in\left[\underline{p},\overline{p}\right]\times\left[\underline{l},\overline{l}\right]:V\left(W\left(\overline{p},\overline{l}\right)\right)\leq V\left(W\left({\bm\theta}\right)\right)\right\}.
\label{eq:tilde_Theta}
\end{equation}
For ${\bm\theta}\in\tilde{\Theta}$, the inequality $V\left(W\left(\overline{p},\overline{l}\right)\right)\leq V\left(W\left({\bm\theta}\right)\right)$ is referred to as the rationality constraint.

\subsection{General Preliminary Results}

\begin{proposition}\label{prop:non-empty_convex}
The status-quo strategy $\left(\overline{p},\overline{l}\right)$ belongs to $\tilde{\Theta}$. Hence, $\tilde{\Theta}$ is non-empty.
\end{proposition}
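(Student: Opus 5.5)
The plan is to verify the two membership conditions in the definition \eqref{eq:tilde_Theta} directly for ${\bm\theta}=\left(\overline{p},\overline{l}\right)$. First, admissibility: since $\underline{p}<\overline{p}$ and $\underline{l}<\overline{l}$, the pair $\left(\overline{p},\overline{l}\right)$ lies in the rectangle $\left[\underline{p},\overline{p}\right]\times\left[\underline{l},\overline{l}\right]$. Equivalently, it corresponds to $\left(x,y\right)=\left(0,0\right)\in\left[0,\overline{x}\right]\times\left[0,\overline{y}\right]$.

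Second, the rationality constraint. With ${\bm\theta}=\left(\overline{p},\overline{l}\right)$, the required inequality reads $V\left(W\left(\overline{p},\overline{l}\right)\right)\leq V\left(W\left(\overline{p},\overline{l}\right)\right)$. This holds by reflexivity of $\leq$ on the extended reals, provided $V\left(W\left(\overline{p},\overline{l}\right)\right)$ is well defined.

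The only point needing a remark is that well-definedness. As computed in the paper, $\pi\left(\overline{p},\overline{l}\right)=c\left(0,0\right)=0$ and $X\left(\overline{p},\overline{l}\right)=X$, so $W\left(\overline{p},\overline{l}\right)=w-X$. This is a bounded Bernoulli-type random variable taking values in $\left\{w,w-\overline{l}\right\}$, so it lies in the domain of any reasonable preference functional. In particular, VaR- and TVaR-based functionals, used later in the paper, are finite on it. There is no real obstacle here.

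Non-emptiness of $\tilde{\Theta}$ then follows immediately, since it contains the status-quo strategy.
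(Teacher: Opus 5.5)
Your proposal is correct and follows the paper's own argument. The paper likewise checks that $\left(\overline{p},\overline{l}\right)$ lies in the rectangle $\left[\underline{p},\overline{p}\right]\times\left[\underline{l},\overline{l}\right]$ and that the rationality constraint holds trivially by reflexivity of $\leq$; your extra remark on the well-definedness of $V\left(W\left(\overline{p},\overline{l}\right)\right)$ is harmless but not needed there.
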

\begin{proof}
Since $\left(\overline{p},\overline{l}\right)\in\left[\underline{p},\overline{p}\right]\times\left[\underline{l},\overline{l}\right]$, and $V\left(W\left(\overline{p},\overline{l}\right)\right)\leq V\left(W\left(\overline{p},\overline{l}\right)\right)$, we have $\left(\overline{p},\overline{l}\right)\in\tilde{\Theta}$. Therefore, $\tilde{\Theta}\neq\emptyset$.
\end{proof}
\noindent
Consider the following unconstrained variant of Problem \eqref{eq:original_problem_rationality}:
\begin{equation}
\sup_{{\bm\theta}\in\Theta}V\left(W\left({\bm\theta}\right)\right),
\label{eq:original_problem}
\end{equation}
where
\begin{equation}
\Theta=\left\{{\bm\theta}=\left(p,l\right)\in\left[\underline{p},\overline{p}\right]\times\left[\underline{l},\overline{l}\right]\right\}=\left[\underline{p},\overline{p}\right]\times\left[\underline{l},\overline{l}\right].
\label{eq:original_feasible_set}
\end{equation}
Clearly, $\tilde{\Theta}\subseteq\Theta$.

\begin{corollary}\label{coro:non-empty_convex_theta}
The status-quo strategy $\left(\overline{p},\overline{l}\right)$ belongs to $\Theta$. Hence, $\Theta$ is non-empty.
\end{corollary}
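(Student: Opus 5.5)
The corollary follows at once from Proposition \ref{prop:non-empty_convex} together with the inclusion $\tilde{\Theta}\subseteq\Theta$ noted just after the definition \eqref{eq:original_feasible_set}. Proposition \ref{prop:non-empty_convex} gives $\left(\overline{p},\overline{l}\right)\in\tilde{\Theta}$, and the inclusion then yields $\left(\overline{p},\overline{l}\right)\in\Theta$. Non-emptiness of $\Theta$ is immediate, since $\Theta$ contains this element.

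Alternatively, one can argue directly from the definition of $\Theta$ without invoking the rationality constraint at all. Since $\underline{p}\in\left(0,\overline{p}\right)$, we have $\underline{p}<\overline{p}$, so $\overline{p}\in\left[\underline{p},\overline{p}\right]$. Likewise $\underline{l}\in\left(0,\overline{l}\right)$ gives $\overline{l}\in\left[\underline{l},\overline{l}\right]$. Hence $\left(\overline{p},\overline{l}\right)\in\left[\underline{p},\overline{p}\right]\times\left[\underline{l},\overline{l}\right]=\Theta$.

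I will present the first route as the proof, since it reuses Proposition \ref{prop:non-empty_convex}, and note the direct check only as a remark. There is no real obstacle; the only point to confirm is that the intervals are non-degenerate with $\overline{p}$ and $\overline{l}$ as their right endpoints, and this is guaranteed by the standing assumptions $\underline{p}<\overline{p}$ and $\underline{l}<\overline{l}$.
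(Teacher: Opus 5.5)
Your proposal is correct and matches the paper's intended argument. The paper gives no separate proof, but it states the corollary immediately after noting $\tilde{\Theta}\subseteq\Theta$, so the result follows from Proposition \ref{prop:non-empty_convex} exactly as in your first route; your direct check is also valid, being the same membership $\left(\overline{p},\overline{l}\right)\in\left[\underline{p},\overline{p}\right]\times\left[\underline{l},\overline{l}\right]$ already used in the paper's proof of that proposition.
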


Let $\tilde{\Theta}^*\subseteq\tilde{\Theta}$ denote the set of attainable optima of Problem \eqref{eq:original_problem_rationality}, and let $\Theta^*\subseteq\Theta$ denote the set of attainable optima of Problem \eqref{eq:original_problem}. Whenever the unconstrained problem has an attainable optimum, the rationality constraint does not affect the optimal solution; any unconstrained optimum must weakly dominate the status quo, which itself is feasible. The following proposition formalizes this observation.
\begin{proposition}\label{prop:remove_rationality}
Assume that $\Theta^*$ is non-empty. Then Problems \eqref{eq:original_problem_rationality} and \eqref{eq:original_problem} are equivalent in the sense that, for any ${\bm\theta}^*\in\Theta^*$,
\begin{equation}
\sup_{{\bm\theta}\in\tilde{\Theta}}V\left(W\left({\bm\theta}\right)\right)=\max_{{\bm\theta}\in\tilde{\Theta}}V\left(W\left({\bm\theta}\right)\right)=V\left(W\left({\bm\theta}^*\right)\right)=\max_{{\bm\theta}\in\Theta}V\left(W\left({\bm\theta}\right)\right)=\sup_{{\bm\theta}\in\Theta}V\left(W\left({\bm\theta}\right)\right),
\label{eq:all_equivalence}
\end{equation}
and $\tilde{\Theta}^*=\Theta^*$.
\end{proposition}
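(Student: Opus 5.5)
The plan is to show that any unconstrained optimum ${\bm\theta}^*\in\Theta^*$ is feasible for the constrained problem. Once that is established, the chain of equalities in \eqref{eq:all_equivalence} follows from $\tilde{\Theta}\subseteq\Theta$ by a sandwich argument. The identity of the optimal sets is then obtained by two inclusions. Throughout, the only properties used are that $\tilde{\Theta}\subseteq\Theta$ and that the status quo $\left(\overline{p},\overline{l}\right)$ lies in $\Theta$ (Corollary \ref{coro:non-empty_convex_theta}).

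\textbf{Step 1: feasibility of ${\bm\theta}^*$.} Fix ${\bm\theta}^*\in\Theta^*$. By definition $V\left(W\left({\bm\theta}^*\right)\right)=\max_{{\bm\theta}\in\Theta}V\left(W\left({\bm\theta}\right)\right)=\sup_{{\bm\theta}\in\Theta}V\left(W\left({\bm\theta}\right)\right)$. Since $\left(\overline{p},\overline{l}\right)\in\Theta$ by Corollary \ref{coro:non-empty_convex_theta}, we get $V\left(W\left(\overline{p},\overline{l}\right)\right)\leq V\left(W\left({\bm\theta}^*\right)\right)$. This is exactly the rationality constraint, so ${\bm\theta}^*\in\tilde{\Theta}$.

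\textbf{Step 2: equalities.} Because $\tilde{\Theta}\subseteq\Theta$,
\[
V\left(W\left({\bm\theta}^*\right)\right)\leq\sup_{{\bm\theta}\in\tilde{\Theta}}V\left(W\left({\bm\theta}\right)\right)\leq\sup_{{\bm\theta}\in\Theta}V\left(W\left({\bm\theta}\right)\right)=V\left(W\left({\bm\theta}^*\right)\right).
\]
The first inequality uses ${\bm\theta}^*\in\tilde{\Theta}$ from Step 1. Hence all these quantities coincide. Moreover, the supremum over $\tilde{\Theta}$ is attained at ${\bm\theta}^*$, so it is a maximum, which gives \eqref{eq:all_equivalence}.

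\textbf{Step 3: $\tilde{\Theta}^*=\Theta^*$.}
\begin{itemize}
\item For $\Theta^*\subseteq\tilde{\Theta}^*$: any ${\bm\theta}^*\in\Theta^*$ lies in $\tilde{\Theta}$ by Step 1 and attains the common value by Step 2, so ${\bm\theta}^*\in\tilde{\Theta}^*$.
\item For $\tilde{\Theta}^*\subseteq\Theta^*$: if $\hat{{\bm\theta}}\in\tilde{\Theta}^*$, then $\hat{{\bm\theta}}\in\Theta$ and $V\left(W\left(\hat{{\bm\theta}}\right)\right)=\max_{\tilde{\Theta}}V\left(W\right)=\max_{\Theta}V\left(W\right)$ by Step 2, so $\hat{{\bm\theta}}\in\Theta^*$.
\end{itemize}

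No step is a genuine obstacle. The only point needing care is that the value $V$ may be extended-real or that the suprema might not be attained a priori. The non-emptiness assumption on $\Theta^*$ resolves this, and it is the reason Step 1 is done first.
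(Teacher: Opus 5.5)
Your proof is correct and follows essentially the same route as the paper: you show that any ${\bm\theta}^*\in\Theta^*$ satisfies the rationality constraint by comparing it with the status quo, deduce the chain of equalities from $\tilde{\Theta}\subseteq\Theta$ by a sandwich argument, and then obtain $\tilde{\Theta}^*=\Theta^*$ from the two inclusions. The only cosmetic difference is that you use $\left(\overline{p},\overline{l}\right)\in\Theta$ directly, whereas the paper writes the comparison for an arbitrary ${\bm\theta}\in\tilde{\Theta}$, implicitly relying on $\tilde{\Theta}$ being non-empty.
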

\begin{proof}
Let ${\bm\theta}^*\in\Theta^*$. For any ${\bm\theta}\in\tilde{\Theta}\subseteq\Theta$,
\begin{equation*}
V\left(W\left(\overline{p},\overline{l}\right)\right)\leq V\left(W\left({\bm\theta}\right)\right)\leq V\left(W\left({\bm\theta}^*\right)\right)=\max_{{\bm\theta}\in\Theta}V\left(W\left({\bm\theta}\right)\right)=\sup_{{\bm\theta}\in\Theta}V\left(W\left({\bm\theta}\right)\right),
\end{equation*}
which implies that ${\bm\theta}^*\in\tilde{\Theta}$. Hence, $\sup_{{\bm\theta}\in\tilde{\Theta}}V\left(W\left({\bm\theta}\right)\right)=\max_{{\bm\theta}\in\tilde{\Theta}}V\left(W\left({\bm\theta}\right)\right)=V\left(W\left({\bm\theta}^*\right)\right)$. This proves \eqref{eq:all_equivalence} and implies that ${\bm\theta}^*\in\tilde{\Theta}^*$. Therefore, $\Theta^*\subseteq\tilde{\Theta}^*$. Since $\Theta^*$ is non-empty, $\tilde{\Theta}^*$ is also non-empty.

Conversely, let $\tilde{\bm\theta}^*\in\tilde{\Theta}^*$. Since $\tilde{\Theta}\subseteq\Theta$, we have $\tilde{\bm\theta}^*\in\Theta$. By \eqref{eq:all_equivalence},
\begin{equation*}
V\left(W\left(\tilde{\bm\theta}^*\right)\right)=\max_{{\bm\theta}\in\tilde{\Theta}}V\left(W\left({\bm\theta}\right)\right)=\sup_{{\bm\theta}\in\tilde{\Theta}}V\left(W\left({\bm\theta}\right)\right)=\sup_{{\bm\theta}\in\Theta}V\left(W\left({\bm\theta}\right)\right)=\max_{{\bm\theta}\in\Theta}V\left(W\left({\bm\theta}\right)\right).
\end{equation*}
Thus $\tilde{\bm\theta}^*\in\Theta^*$, which proves that $\tilde{\Theta}^*\subseteq\Theta^*$. Hence, $\tilde{\Theta}^*=\Theta^*$.
\end{proof}
\noindent
By Proposition \ref{prop:remove_rationality}, whenever the unconstrained problem admits an attainable optimum, it is equivalent to solve Problem \eqref{eq:original_problem} over the rectangular feasible set $\Theta$ in \eqref{eq:original_feasible_set}. The following proposition provides a sufficient condition for attainability of the unconstrained problem.
\begin{proposition}\label{coro:upper_semicontinuous}
Assume that the map ${\bm\theta}\mapsto V\left(W\left({\bm\theta}\right)\right)$ is upper semicontinuous on $\Theta$. Then Problems \eqref{eq:original_problem_rationality} and \eqref{eq:original_problem} are equivalent, and $\tilde{\Theta}^*=\Theta^*$.
\end{proposition}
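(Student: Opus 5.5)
The plan is to reduce the statement to Proposition \ref{prop:remove_rationality}: it suffices to show that $\Theta^*$ is non-empty, that is, that the unconstrained Problem \eqref{eq:original_problem} attains its supremum on $\Theta$. Once this is established, Proposition \ref{prop:remove_rationality} gives the full chain of equalities \eqref{eq:all_equivalence} together with $\tilde{\Theta}^*=\Theta^*$, and nothing further is needed.

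To show $\Theta^*\neq\emptyset$, I will invoke the extreme value theorem for upper semicontinuous functions on compact sets. The feasible set $\Theta=\left[\underline{p},\overline{p}\right]\times\left[\underline{l},\overline{l}\right]$ is a closed and bounded rectangle in $\mathbb{R}^2$, hence compact. It is also non-empty by Corollary \ref{coro:non-empty_convex_theta}. Let $f({\bm\theta})=V\left(W\left({\bm\theta}\right)\right)$ and $s=\sup_{{\bm\theta}\in\Theta}f({\bm\theta})\in(-\infty,+\infty]$. I will pick a maximizing sequence ${\bm\theta}_n\in\Theta$ with $f({\bm\theta}_n)\to s$. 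By Bolzano--Weierstrass, a subsequence converges to some ${\bm\theta}^*$, which lies in $\Theta$ because $\Theta$ is closed.

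Upper semicontinuity then gives $f({\bm\theta}^*)\geq\limsup_k f({\bm\theta}_{n_k})=s$. Since $f({\bm\theta}^*)\leq s$ trivially, we get $f({\bm\theta}^*)=s$. In particular $s$ is finite, assuming $V$ is real-valued, which the paper implicitly assumes since it compares preference levels. Therefore ${\bm\theta}^*\in\Theta^*$.

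The only point requiring care is the finiteness of the supremum. It is handled by the fact that $f$ takes finite values: upper semicontinuity forces the limit to be attained at a point where $f$ is finite. No other obstacle is expected.
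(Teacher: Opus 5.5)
Your proposal is correct and follows the paper's proof: compactness of $\Theta$ together with the Weierstrass extreme value theorem for upper semicontinuous functions gives $\Theta^*\neq\emptyset$, and Proposition \ref{prop:remove_rationality} then gives the equivalence and $\tilde{\Theta}^*=\Theta^*$. The only difference is that you prove the extreme value theorem via a maximizing sequence, including the finiteness of the supremum from real-valuedness of $V$, whereas the paper simply cites it.
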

\begin{proof}
The feasible set $\Theta=\left[\underline{p},\overline{p}\right]\times\left[\underline{l},\overline{l}\right]$ is compact. Together with the upper semicontinuity assumption, the Weierstrass extreme value theorem implies that Problem \eqref{eq:original_problem} attains its maximum on $\Theta$. Therefore, $\Theta^*\neq\emptyset$. The result immediately follows from Proposition \ref{prop:remove_rationality}.
\end{proof}

\subsection{Preferences via Risk Measures}
Suppose that the RH evaluates terminal wealth through a risk measure $\rho$, in the sense that, for a generic terminal wealth random variable $Z$, $V\left(Z\right)=-\rho\left(-Z\right)$. Assume that $\rho$ is translation invariant. Hence, Problem \eqref{eq:original_problem_rationality} is equivalent to
\begin{equation}
\inf_{{\bm\theta}\in\tilde{\Theta}}\pi\left({\bm\theta}\right)+\rho\left(X\left({\bm\theta}\right)\right),
\label{eq:risk_problem}
\end{equation}
where the feasible set $\tilde{\Theta}$ is given in \eqref{eq:tilde_Theta}. Similarly, Problem \eqref{eq:original_problem} is equivalent to
\begin{equation}
\inf_{{\bm\theta}\in\Theta}\pi\left({\bm\theta}\right)+\rho\left(X\left({\bm\theta}\right)\right),
\label{eq:risk_problem_2}
\end{equation}
where the feasible set $\Theta$ is given in \eqref{eq:original_feasible_set}.

The following corollary gives a sufficient condition, in terms of the risk-measure objective, under which Proposition \ref{coro:upper_semicontinuous} applies.
\begin{corollary}\label{coro:lower_semicontinuous}
Assume that the map ${\bm\theta}\mapsto \rho\left(X\left({\bm\theta}\right)\right)$ is lower semicontinuous on $\Theta$. Then Problems \eqref{eq:risk_problem} and \eqref{eq:risk_problem_2} are equivalent, and $\tilde{\Theta}^*=\Theta^*$.
\end{corollary}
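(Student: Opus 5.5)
The plan is to reduce the corollary to Proposition \ref{coro:upper_semicontinuous}. We show that, under the stated assumption, the map ${\bm\theta}\mapsto V\left(W\left({\bm\theta}\right)\right)$ is upper semicontinuous on $\Theta$. Equivalence of Problems \eqref{eq:risk_problem} and \eqref{eq:risk_problem_2} then follows from the equivalence of Problems \eqref{eq:original_problem_rationality} and \eqref{eq:original_problem}.

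First, write the objective explicitly. By the definition $V\left(Z\right)=-\rho\left(-Z\right)$ and $W\left({\bm\theta}\right)=w-\pi\left({\bm\theta}\right)-X\left({\bm\theta}\right)$, we have
\begin{equation*}
V\left(W\left({\bm\theta}\right)\right)=-\rho\left(X\left({\bm\theta}\right)+\pi\left({\bm\theta}\right)-w\right).
\end{equation*}
Here $\pi\left({\bm\theta}\right)-w$ is a deterministic constant for each fixed ${\bm\theta}$. Translation invariance of $\rho$ therefore gives
\begin{equation*}
V\left(W\left({\bm\theta}\right)\right)=w-\pi\left({\bm\theta}\right)-\rho\left(X\left({\bm\theta}\right)\right).
\end{equation*}
The same identity underlies the stated equivalence of \eqref{eq:original_problem_rationality} with \eqref{eq:risk_problem} and of \eqref{eq:original_problem} with \eqref{eq:risk_problem_2}. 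In particular, maximizing $V$ over a set is the same as minimizing $\pi+\rho\left(X\left(\cdot\right)\right)$ over that set, with the same optimizers. Likewise, the rationality constraint in \eqref{eq:tilde_Theta} becomes $\pi\left({\bm\theta}\right)+\rho\left(X\left({\bm\theta}\right)\right)\leq\rho\left(X\right)$.

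Second, verify semicontinuity. The cost $\pi$ is continuous on $\Theta$ by assumption. The map ${\bm\theta}\mapsto\rho\left(X\left({\bm\theta}\right)\right)$ is lower semicontinuous by hypothesis. A sum of a continuous function and a lower semicontinuous function is lower semicontinuous, so ${\bm\theta}\mapsto\pi\left({\bm\theta}\right)+\rho\left(X\left({\bm\theta}\right)\right)$ is lower semicontinuous. Its negative plus the constant $w$, which equals $V\left(W\left({\bm\theta}\right)\right)$, is therefore upper semicontinuous on $\Theta$.

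Third, invoke Proposition \ref{coro:upper_semicontinuous}. It gives that Problems \eqref{eq:original_problem_rationality} and \eqref{eq:original_problem} are equivalent and $\tilde{\Theta}^*=\Theta^*$. Through the identity of the first step, these transfer directly to Problems \eqref{eq:risk_problem} and \eqref{eq:risk_problem_2}: optimal values agree up to the sign change and the constant $w$, and optimizer sets coincide. There is no real obstacle here. The only care needed is in applying translation invariance with the correct sign convention, namely $\rho\left(Z+m\right)=\rho\left(Z\right)+m$ for constants $m$, and in noting that $\pi\left({\bm\theta}\right)$ is indeed deterministic.
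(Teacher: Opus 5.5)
Your proposal is correct and follows the same route as the paper: continuity of $\pi$ makes $\pi+\rho\left(X\left(\cdot\right)\right)$ lower semicontinuous, so ${\bm\theta}\mapsto V\left(W\left({\bm\theta}\right)\right)$ is upper semicontinuous, and Proposition \ref{coro:upper_semicontinuous} then applies. Your use of translation invariance to write $V\left(W\left({\bm\theta}\right)\right)=w-\pi\left({\bm\theta}\right)-\rho\left(X\left({\bm\theta}\right)\right)$ simply makes explicit a step the paper leaves implicit.
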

\begin{proof}
Since $\pi$ is continuous on $\Theta$, the map ${\bm\theta}\mapsto \pi\left({\bm\theta}\right)+\rho\left(X\left({\bm\theta}\right)\right)$ is lower semicontinuous on $\Theta$. Therefore, the map ${\bm\theta}\mapsto V\left(W\left({\bm\theta}\right)\right)$ is upper semicontinuous on $\Theta$. The result follows from Proposition \ref{coro:upper_semicontinuous}.
\end{proof}

This paper focuses on the cases in which $\rho$ is Value-at-Risk (VaR) or Tail Value-at-Risk (TVaR). For a generic random variable $Z$, let $F_Z\left(z\right)=\mathbb{P}\left(Z\leq z\right)$ and $S_Z\left(z\right)=\mathbb{P}\left(Z>z\right)=1-F_Z\left(z\right)$ denote its distribution function and survival function, respectively, for $z\in\left[\essinf Z,\esssup Z\right]$. For a confidence level $\alpha\in\left[0,1\right]$, VaR is defined by
\begin{align*}
\text{VaR}_{\alpha}\left(Z\right)=&\;
F_Z^{-1}\left(\alpha\right)=\inf\left\{z\in\left[\essinf Z,\esssup Z\right]:F_Z\left(z\right)\geq\alpha\right\}\\=&\;\inf\left\{z\in\left[\essinf Z,\esssup Z\right]:S_Z\left(z\right)\leq 1-\alpha\right\}=S_Z^{-1}\left(1-\alpha\right),
\end{align*}
with the convention that $\inf\emptyset=\esssup Z$. For $\alpha\in\left[0,1\right)$, TVaR is defined by
\begin{equation*}
\text{TVaR}_{\alpha}\left(Z\right)=\frac{1}{1-\alpha}\int_{\alpha}^{1}\text{VaR}_{\gamma}\left(Z\right)\text{d}\gamma,
\end{equation*}
and, for $\alpha=1$, define $\text{TVaR}_{1}\left(Z\right)=\esssup Z$.

In the case of VaR, for $\alpha\in\left[0,1\right]$ and ${\bm\theta}=\left(p,l\right)\in\Theta$,
\begin{equation*}
\text{VaR}_{\alpha}\left(X\left({\bm\theta}\right)\right)=\mathds{1}_{\left(1-\alpha,1\right]}\left(p\right)l=\begin{cases}
0,&\text{ for }p\in\left[0,1-\alpha\right],\\
l,&\text{ for }p\in\left(1-\alpha,1\right].
\end{cases}
\end{equation*}
This map is lower semicontinuous on $\Theta$. In the case of TVaR, for $\alpha\in\left[0,1\right)$ and ${\bm\theta}=\left(p,l\right)\in\Theta$,
\begin{equation*}
\text{TVaR}_{\alpha}\left(X\left({\bm\theta}\right)\right)=\min\left\{\frac{p}{1-\alpha},1\right\}l=
\begin{cases}
\dfrac{pl}{1-\alpha},&\text{ for }p\in\left[0,1-\alpha\right],\\
l,&\text{ for }p\in\left(1-\alpha,1\right].
\end{cases}
\end{equation*}
For $\alpha=1$, $\text{TVaR}_{1}\left(X\left({\bm\theta}\right)\right)=\esssup X\left({\bm\theta}\right)=l$. Thus ${\bm\theta}\mapsto\text{TVaR}_{\alpha}\left(X\left({\bm\theta}\right)\right)$ is continuous on $\Theta$ for all $\alpha\in\left[0,1\right]$.

By Corollary \ref{coro:lower_semicontinuous}, in the VaR case, it is sufficient to solve the unconstrained problem
\begin{equation}
\inf_{\left(p,l\right)\in\Theta}\pi\left(p,l\right)+\mathds{1}_{\left(1-\alpha,1\right]}\left(p\right)l.
\label{eq:VaR_problem}
\end{equation}
In the TVaR case with $\alpha\in\left[0,1\right)$, it is sufficient to solve the unconstrained problem
\begin{equation}
\inf_{\left(p,l\right)\in\Theta}\pi\left(p,l\right)+\min\left\{\frac{p}{1-\alpha},1\right\}l.
\label{eq:TVaR_problem}
\end{equation}
For $\alpha=1$, the corresponding TVaR problem is
\begin{equation}
\inf_{\left(p,l\right)\in\Theta}\pi\left(p,l\right)+l.
\label{eq:TVaR_problem_2}
\end{equation}

The VaR and TVaR formulations above lead to two distinct optimization problems. The VaR problem is governed by the threshold $p=1-\alpha$; the severity $l$ enters the objective only when the residual loss probability exceeds this threshold. Thus VaR generates a discontinuous, threshold-driven risk-reduction problem. By contrast, the TVaR problem is continuous for $\alpha\in\left[0,1\right]$, but, for $\alpha\in\left[0,1\right)$, it introduces a direct interaction between residual frequency and residual severity through the bilinear term $pl/\left(1-\alpha\right)$ whenever $p\in\left[0,1-\alpha\right]$. This term is the main source of the economic geometry of the TVaR problem and implies that the objective need not be convex. The following sections solve the VaR and TVaR problems by decomposing the feasible set according to the threshold $p=1-\alpha$, comparing the resulting local solutions, and interpreting the optimal strategies in terms of self-protection and self-insurance.

\section{Value-at-Risk}\label{sec:VaR}
This section solves the VaR problem \eqref{eq:VaR_problem} with the feasible set $\Theta$ in \eqref{eq:original_feasible_set}. For ${\bm\theta}=\left(p,l\right)\in\Theta$, denote the objective function in \eqref{eq:VaR_problem} by
\begin{equation*}
F\left({\bm\theta}\right)=F\left(p,l\right)=\pi\left(p,l\right)+\mathds{1}_{\left(1-\alpha,1\right]}\left(p\right)l.
\end{equation*}
The feasible set can be decomposed according to whether the residual loss probability lies below or above the VaR probability threshold $1-\alpha$. Define
\begin{equation}
\Theta^{\left(\text{L}\right)}=\left\{\left(p,l\right)\in\Theta:p\leq 1-\alpha\right\},
\label{eq:Theta^L}
\end{equation}
and
\begin{equation}
\Theta^{\left(\text{R}\right)}=\left\{\left(p,l\right)\in\Theta:1-\alpha<p\right\}.
\label{eq:Theta^R}
\end{equation}
On $\Theta^{\left(\text{L}\right)}$, the VaR term is zero, whereas on $\Theta^{\left(\text{R}\right)}$, the VaR term is equal to $l$. Thus, the corresponding local problems are
\begin{equation}
\inf_{\left(p,l\right)\in\Theta^{\left(\text{L}\right)}}F(p,l)=\inf_{\left(p,l\right)\in\Theta^{\left(\text{L}\right)}}\pi\left(p,l\right),
\label{eq:VaR_problem<=}
\end{equation}
and
\begin{equation}
\inf_{\left(p,l\right)\in\Theta^{\left(\text{R}\right)}}F(p,l)=\inf_{\left(p,l\right)\in\Theta^{\left(\text{R}\right)}}\pi\left(p,l\right)+l.
\label{eq:VaR_problem>}
\end{equation}

\begin{proposition}\label{lemma:VaR_problem<=}
Assume that $\Theta^{\left(\text{L}\right)}$ is non-empty. Then the unique optimal solution of Problem \eqref{eq:VaR_problem<=} is ${\bm\theta}^{*\left(\text{L}\right)}=\left(p^{*\left(\text{L}\right)},l^{*\left(\text{L}\right)}\right)=\left(\min\left\{\overline{p},1-\alpha\right\},\overline{l}\right)$.
\end{proposition}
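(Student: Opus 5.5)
On $\Theta^{(\text{L})}$ the VaR term vanishes, so Problem \eqref{eq:VaR_problem<=} is simply the minimization of the cost $\pi$ over $\Theta^{(\text{L})}$. The plan is to exploit the strict monotonicity of $\pi$ in each coordinate, $\pi_p=-c_x<0$ and $\pi_l=-c_y<0$, which says that lowering residual risk is always costly. The minimizer should therefore push both $p$ and $l$ as high as the constraints allow. We first describe the feasible set. Since $\Theta^{(\text{L})}=\{(p,l)\in\Theta:p\leq 1-\alpha\}$ is non-empty, we have $\underline{p}\leq 1-\alpha$. Hence
\begin{equation*}
\Theta^{(\text{L})}=\left[\underline{p},\min\left\{\overline{p},1-\alpha\right\}\right]\times\left[\underline{l},\overline{l}\right],
\end{equation*}
which is a non-empty compact rectangle. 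Write $\hat{p}=\min\{\overline{p},1-\alpha\}$.

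The key step is to show that $\pi(p,l)>\pi(\hat{p},\overline{l})$ for every $(p,l)\in\Theta^{(\text{L})}$ with $(p,l)\neq(\hat{p},\overline{l})$. We do this by moving along the two coordinates separately:
\begin{equation*}
\pi(p,l)\geq\pi(p,\overline{l})\geq\pi(\hat{p},\overline{l}).
\end{equation*}
For the first inequality, the map $l\mapsto\pi(p,l)$ is continuous on $[\underline{l},\overline{l}]$. Its derivative is $\pi_l<0$ on the interior, with the one-sided version at the endpoints. By the mean value theorem it is strictly decreasing, so the first inequality is strict whenever $l<\overline{l}$. For the second inequality, the same argument applied to $p\mapsto\pi(p,\overline{l})$ on $[\underline{p},\hat{p}]$ gives strictness whenever $p<\hat{p}$. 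A point different from $(\hat{p},\overline{l})$ has either $l<\overline{l}$ or $p<\hat{p}$, so at least one of the two inequalities is strict. This gives both optimality and uniqueness of ${\bm\theta}^{*(\text{L})}=(\hat{p},\overline{l})$. Attainment is automatic, since the candidate lies in $\Theta^{(\text{L})}$.

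The only delicate point is justifying strict monotonicity along the boundary edges. The segment $\{l=\overline{l}\}$ lies on the boundary of the domain, where the paper only assumes one-sided derivatives with the corresponding one-sided sign conditions. I would handle this by noting that the assumptions give $\pi_p<0$ along that edge in the one-sided sense. Combined with continuity of $\pi$, this yields strict decrease via the mean value theorem applied to the restriction $p\mapsto\pi(p,\overline{l})$, which is differentiable on the open interval $(\underline{p},\hat{p})$. If one prefers to avoid boundary derivatives, there is an alternative route. Use strict monotonicity on interior horizontal lines $l<\overline{l}$, which has full derivative information, and pass to the limit $l\uparrow\overline{l}$ by continuity to obtain weak monotonicity on the edge. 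Strictness then follows from the one-sided sign condition. Either way, the degenerate case $\hat{p}=\underline{p}$ is trivial, because the $p$-interval is then a single point.
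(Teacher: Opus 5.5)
Your proposal is correct and follows essentially the same route as the paper: identify $\Theta^{(\text{L})}$ as the rectangle $[\underline{p},\min\{\overline{p},1-\alpha\}]\times[\underline{l},\overline{l}]$, push both coordinates to their upper bounds using strict monotonicity of $\pi$, and get uniqueness from strictness. Your extra check of strict monotonicity along the boundary edges, via the one-sided sign conditions and the mean value theorem, fills in a detail the paper leaves implicit.
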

\begin{proof}
Since $\Theta^{\left(\text{L}\right)}\neq\emptyset$, we have $\underline{p}\leq 1-\alpha$, and hence
\begin{equation*}
\Theta^{\left(\text{L}\right)}=\left[\underline{p},\min\left\{\overline{p},1-\alpha\right\}\right]\times\left[\underline{l},\overline{l}\right].
\end{equation*}
The cost function $\pi$ is strictly decreasing in both $p$ and $l$. Therefore, over $\Theta^{\left(\text{L}\right)}$, the value of $\pi\left(p,l\right)$ is minimized by choosing the largest feasible residual probability and the largest feasible residual severity. Hence, ${\bm\theta}^{*\left(\text{L}\right)}=\left(\min\left\{\overline{p},1-\alpha\right\},\overline{l}\right)$. Uniqueness follows from the strict monotonicity of $\pi$ in both arguments.
\end{proof}

\begin{proposition}\label{lemma:VaR_problem>}
Assume that $\Theta^{\left(\text{R}\right)}$ is non-empty. Define
\begin{equation}
\tilde{l}^*\in\argmin_{l\in\left[\underline{l},\overline{l}\right]}\pi\left(\overline{p},l\right)+l.
\label{eq:VaR_argmin}
\end{equation}
Then the unique optimal solution of Problem \eqref{eq:VaR_problem>} is ${\bm\theta}^{*\left(\text{R}\right)}=\left(p^{*\left(\text{R}\right)},l^{*\left(\text{R}\right)}\right)=\left(\overline{p},\tilde{l}^*\right)$. Moreover,
\begin{equation*}
\tilde{l}^*=
\begin{cases}
\underline{l},&\text{ if }0\leq \pi_l\left(\overline{p},\underline{l}\right)+1,\\
\tilde{l},&\text{ if }\pi_l\left(\overline{p},\underline{l}\right)+1<0<\pi_l\left(\overline{p},\overline{l}\right)+1,\\
\overline{l},&\text{ if }\pi_l\left(\overline{p},\overline{l}\right)+1\leq 0,
\end{cases}
\end{equation*}
where $\tilde{l}\in\left(\underline{l},\overline{l}\right)$ is the unique solution of $\pi_l\left(\overline{p},l\right)+1=0$ under the condition that $\pi_l\left(\overline{p},\underline{l}\right)+1<0<\pi_l\left(\overline{p},\overline{l}\right)+1$.
\end{proposition}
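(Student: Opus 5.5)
Since $\Theta^{\left(\text{R}\right)}\neq\emptyset$, we must have $1-\alpha<\overline{p}$, and then
\begin{equation*}
\Theta^{\left(\text{R}\right)}=\left(\max\left\{\underline{p},1-\alpha\right\},\overline{p}\right]\times\left[\underline{l},\overline{l}\right]
\end{equation*}
(with the left endpoint closed if $\underline{p}>1-\alpha$). The plan is to remove the frequency variable first, in the spirit of the proof of Proposition \ref{lemma:VaR_problem<=}, and then solve a one-dimensional severity problem. On $\Theta^{\left(\text{R}\right)}$ the objective is $\pi\left(p,l\right)+l$, and the penalty term $l$ does not depend on $p$. For each fixed $l$, strict monotonicity of $\pi$ in $p$ ($\pi_p=-c_x<0$) gives $\pi\left(p,l\right)+l>\pi\left(\overline{p},l\right)+l$ for every feasible $p<\overline{p}$. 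Since $\overline{p}$ is feasible in $\Theta^{\left(\text{R}\right)}$, this yields
\begin{equation*}
\inf_{\Theta^{\left(\text{R}\right)}}\left(\pi+l\right)=\inf_{l\in\left[\underline{l},\overline{l}\right]}\left(\pi\left(\overline{p},l\right)+l\right).
\end{equation*}
It also shows that any minimizer must have $p=\overline{p}$. The fact that the left boundary of $\Theta^{\left(\text{R}\right)}$ may be open causes no trouble, because the optimum sits at the closed right end.

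Next I solve the reduced problem in $l$. Set $g\left(l\right)=\pi\left(\overline{p},l\right)+l$ on $\left[\underline{l},\overline{l}\right]$. It is continuous on a compact interval, so a minimizer exists. It is also strictly convex, since $g''\left(l\right)=\pi_{ll}\left(\overline{p},l\right)=c_{yy}>0$, using the one-sided derivatives on the boundary $p=\overline{p}$ as the standing assumptions allow. Strict convexity makes the minimizer $\tilde{l}^*$ unique, so the definition by $\argmin$ in \eqref{eq:VaR_argmin} is unambiguous. Together with the first step, $\left(\overline{p},\tilde{l}^*\right)$ is then the unique optimizer.

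Finally I read off the three cases from $g'\left(l\right)=\pi_l\left(\overline{p},l\right)+1$, which is strictly increasing and continuous.
\begin{itemize}
\item If $g'\left(\underline{l}\right)\geq 0$, then $g'>0$ on $\left(\underline{l},\overline{l}\right]$, so $g$ is strictly increasing and $\tilde{l}^*=\underline{l}$.
\item If $g'\left(\overline{l}\right)\leq 0$, then $g'<0$ on $\left[\underline{l},\overline{l}\right)$, so $g$ is strictly decreasing and $\tilde{l}^*=\overline{l}$.
\item Otherwise $g'\left(\underline{l}\right)<0<g'\left(\overline{l}\right)$. The intermediate value theorem and strict monotonicity of $g'$ give a unique root $\tilde{l}\in\left(\underline{l},\overline{l}\right)$. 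There $g$ changes from decreasing to increasing, so $\tilde{l}$ is the minimizer.
\end{itemize}

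The main point needing care is the regularity at the boundary $p=\overline{p}$ (the edge $x=0$), where only one-sided derivatives are assumed. I would handle it by noting that $l\mapsto\pi\left(\overline{p},l\right)$ is $C^2$ with $\pi_{ll}>0$ under the stated one-sided conventions. If that is in doubt, I would argue instead by continuity of $\pi_l$ up to the boundary, approximating from $p<\overline{p}$.
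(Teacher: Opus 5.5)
Your proposal is correct and follows the paper's proof essentially step by step. You pin $p=\overline{p}$ by strict monotonicity of $\pi$ in $p$, then use compactness and strict convexity of $l\mapsto\pi(\overline{p},l)+l$ to get existence, uniqueness and the three-case first-order characterization. One small caveat on your fallback: approximating from $p<\overline{p}$ only gives a non-decreasing $\pi_l(\overline{p},\cdot)$ in the limit, hence weak convexity. So it is the standing one-sided assumption $\pi_{ll}>0$ on the edge $p=\overline{p}$, which the paper also relies on, that delivers uniqueness.
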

\begin{proof}
Since $\Theta^{\left(\text{R}\right)}\neq\emptyset$, we have $1-\alpha<\overline{p}$, and hence $\left(\overline{p},l\right)\in\Theta^{\left(\text{R}\right)}$ for every $l\in\left[\underline{l},\overline{l}\right]$. For any $\left(p,l\right)\in\Theta^{\left(\text{R}\right)}$ with $p<\overline{p}$, strict monotonicity of $\pi$ in $p$ gives $\pi\left(\overline{p},l\right)+l<\pi\left(p,l\right)+l$. Therefore, any optimizer of Problem \eqref{eq:VaR_problem>} must have $p^{*\left(\text{R}\right)}=\overline{p}$. The problem is thus reduced to
\begin{equation}
\inf_{l\in\left[\underline{l},\overline{l}\right]}\pi\left(\overline{p},l\right)+l.
\label{eq:var_one_dim_severity_problem}
\end{equation}
The objective is continuous on the compact interval $\left[\underline{l},\overline{l}\right]$, so a minimizer exists. Since $\pi_{ll}>0$, the one-dimensional function $l\mapsto \pi\left(\overline{p},l\right)+l$ is strictly convex on $\left[\underline{l},\overline{l}\right]$. Hence the minimizer $\tilde{l}^*$ is unique.

The stated first-order characterization follows from the strict convexity of the one-dimensional objective and the corresponding one-sided derivatives at the boundary. If the derivative at $\underline{l}$ is non-negative, the minimum is attained at $\underline{l}$. If the derivative at $\overline{l}$ is non-positive, the minimum is attained at $\overline{l}$. Otherwise, the unique minimizer is the interior point $\tilde{l}$ satisfying $\pi_l\left(\overline p,\tilde{l}\right)+1=0$.
\end{proof}

The next theorem gives the complete solution of the VaR problem.
\begin{theorem}\label{thm:var}
Let $\tilde{l}^*$ be defined as in Proposition \ref{lemma:VaR_problem>}. The optimal solution of the VaR problem \eqref{eq:VaR_problem} is characterized as follows.
\begin{enumerate}
\item[(i)] If $\alpha\in\left[0,1-\overline{p}\right]$, then ${\bm\theta}^*=\left(\overline{p},\overline{l}\right)$.
\item[(ii)] If $\alpha\in\left(1-\overline{p},1-\underline{p}\right]$, then ${\bm\theta}^*\in\argmin_{{\bm\theta}\in\left\{\left(1-\alpha,\overline{l}\right),\left(\overline{p},\tilde{l}^*\right)\right\}}F\left({\bm\theta}\right)$.
\item[(iii)] If $\alpha\in\left(1-\underline{p},1\right]$, then ${\bm\theta}^*=\left(\overline{p},\tilde{l}^*\right)$.
\end{enumerate}
\end{theorem}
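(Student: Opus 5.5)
The plan is to use the decomposition $\Theta=\Theta^{(\text{L})}\cup\Theta^{(\text{R})}$ into two disjoint pieces. Because of this decomposition, $\inf_{\Theta}F=\min\{\inf_{\Theta^{(\text{L})}}F,\inf_{\Theta^{(\text{R})}}F\}$, where an infimum over an empty piece is read as $+\infty$. A global optimizer must therefore be a local optimizer of whichever piece achieves the smaller value. Propositions \ref{lemma:VaR_problem<=} and \ref{lemma:VaR_problem>} give the unique local optimizers whenever the corresponding piece is non-empty. The proof then reduces to three tasks: determine, for each range of $\alpha$, which of $\Theta^{(\text{L})}$ and $\Theta^{(\text{R})}$ are non-empty; plug in the local optimizers; and compare their values.

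Recall that $\Theta=[\underline{p},\overline{p}]\times[\underline{l},\overline{l}]$. The piece $\Theta^{(\text{L})}$ is non-empty if and only if $\underline{p}\le 1-\alpha$, that is, $\alpha\le 1-\underline{p}$. The piece $\Theta^{(\text{R})}$ is non-empty if and only if $1-\alpha<\overline{p}$, that is, $\alpha>1-\overline{p}$. This gives three cases.

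\emph{Case (i): $\alpha\in[0,1-\overline{p}]$.} Here $\Theta^{(\text{R})}=\emptyset$ and $\Theta=\Theta^{(\text{L})}$. Since $\min\{\overline{p},1-\alpha\}=\overline{p}$, Proposition \ref{lemma:VaR_problem<=} gives the unique optimizer $(\overline{p},\overline{l})$.

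\emph{Case (iii): $\alpha\in(1-\underline{p},1]$.} Here $\Theta^{(\text{L})}=\emptyset$ and $\Theta=\Theta^{(\text{R})}$, because $\underline{p}>1-\alpha$. Proposition \ref{lemma:VaR_problem>} gives the unique optimizer $(\overline{p},\tilde{l}^*)$.

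\emph{Case (ii): $\alpha\in(1-\overline{p},1-\underline{p}]$.} Both pieces are non-empty. Since $1-\alpha<\overline{p}$, we have $\min\{\overline{p},1-\alpha\}=1-\alpha$, so the left optimizer is $(1-\alpha,\overline{l})$. The right optimizer is $(\overline{p},\tilde{l}^*)$. Any global minimizer lies in one of the two pieces, and there it must coincide with that piece's unique local minimizer. Hence ${\bm\theta}^*$ belongs to the argmin of $F$ over these two points. Conversely, whichever of the two points has the smaller $F$-value attains $\inf_\Theta F$.

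The only point that needs care is existence: $F$ is discontinuous, so one should confirm that the infimum is attained. This is immediate here, since each non-empty piece attains its own minimum by Propositions \ref{lemma:VaR_problem<=}--\ref{lemma:VaR_problem>}. It is also consistent with the lower semicontinuity already noted before Corollary \ref{coro:lower_semicontinuous}.

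A smaller subtlety is the boundary $\alpha=1-\overline{p}$, which belongs to case (i). At that value $1-\alpha=\overline{p}$, so the point $(\overline{p},\cdot)$ lies in $\Theta^{(\text{L})}$ rather than $\Theta^{(\text{R})}$, as required by the strict inequality in the definition of $\Theta^{(\text{R})}$. I expect no substantive obstacle beyond checking these endpoint inclusions correctly.
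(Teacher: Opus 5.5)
Your proposal is correct and follows the paper's own proof. Both split $\Theta$ at the threshold $p=1-\alpha$, use the three ranges of $\alpha$ to decide which of $\Theta^{(\text{L})}$ and $\Theta^{(\text{R})}$ is non-empty, and then apply Propositions \ref{lemma:VaR_problem<=} and \ref{lemma:VaR_problem>}, comparing the two local optimizers $(1-\alpha,\overline{l})$ and $(\overline{p},\tilde{l}^*)$ in the intermediate range.
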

\begin{proof}
Suppose first that $\alpha\in\left[0,1-\overline{p}\right]$. Then $\overline{p}\leq 1-\alpha$, so every feasible residual probability satisfies $p\leq 1-\alpha$. Hence $\Theta=\Theta^{\left(\text{L}\right)}$, and the problem reduces to Problem \eqref{eq:VaR_problem<=}. By Proposition \ref{lemma:VaR_problem<=}, the unique optimizer is ${\bm\theta}^*={\bm\theta}^{*\left(\text{L}\right)}=\left(\min\left\{\overline{p},1-\alpha\right\},\overline{l}\right)=\left(\overline{p},\overline{l}\right)$.

Next, suppose that $\alpha\in\left(1-\underline{p},1\right]$. Then $1-\alpha<\underline{p}$, so every feasible residual probability satisfies $1-\alpha<p$. Hence $\Theta=\Theta^{\left(\text{R}\right)}$, and the problem reduces to Problem \eqref{eq:VaR_problem>}. By Proposition \ref{lemma:VaR_problem>}, the unique optimizer is ${\bm\theta}^*={\bm\theta}^{*\left(\text{R}\right)}=\left(\overline{p},\tilde{l}^*\right)$.

Finally, suppose that $\alpha\in\left(1-\overline{p},1-\underline{p}\right]$. Then $\underline{p}\leq 1-\alpha<\overline{p}$, so both $\Theta^{\left(\text{L}\right)}$ and $\Theta^{\left(\text{R}\right)}$ are non-empty. By Proposition \ref{lemma:VaR_problem<=}, the optimal candidate on $\Theta^{\left(\text{L}\right)}$ is ${\bm\theta}^{*\left(\text{L}\right)}=\left(\min\left\{\overline{p},1-\alpha\right\},\overline{l}\right)=\left(1-\alpha,\overline{l}\right)$, with objective value $F\left({\bm\theta}^{*\left(\text{L}\right)}\right)=\pi\left(1-\alpha,\overline{l}\right)$. By Proposition \ref{lemma:VaR_problem>}, the optimal candidate on $\Theta^{\left(\text{R}\right)}$ is ${\bm\theta}^{*\left(\text{R}\right)}=\left(\overline{p},\tilde{l}^*\right)$, with objective value $F\left({\bm\theta}^{*\left(\text{R}\right)}\right)=\pi\left(\overline{p},\tilde{l}^*\right)+\tilde{l}^*$. Comparing these two objective values gives the stated result.
\end{proof}

Theorem \ref{thm:var} shows that VaR induces a threshold-driven structure. In the low-confidence region $\alpha\in\left[0,1-\overline{p}\right]$, the positive loss lies outside the VaR tail for every feasible strategy, so no risk reduction is optimal. In the high-confidence region $\alpha\in\left(1-\underline{p},1\right]$, the positive loss is always inside the VaR tail, and the optimal strategy involves no self-protection but may involve self-insurance. In the intermediate region $\alpha\in\left(1-\overline{p},1-\underline{p}\right]$, the RH compares a pure self-protection strategy that moves the residual probability to the VaR threshold, with a pure self-insurance strategy that keeps the residual probability at its baseline level. Thus, under VaR, optimal risk reduction is threshold-driven and does not require the simultaneous use of self-protection and self-insurance.

This qualitative structure is independent of the sign of the cross-partial derivative of the joint cost function. Whether $\pi_{pl}=c_{xy}>0$, $\pi_{pl}=c_{xy}<0$, or $\pi_{pl}=c_{xy}=0$, the VaR problem is still governed by the threshold $1-\alpha$. Recall that, economically, $\pi_{pl}=c_{xy}>0$ corresponds to self-protection and self-insurance being technological substitutes in the cost technology, because increasing one type of risk reduction raises the marginal cost of the other. Conversely, $\pi_{pl}=c_{xy}<0$ corresponds to technological complements, because increasing one type of risk reduction lowers the marginal cost of the other. The case $\pi_{pl}=c_{xy}=0$ corresponds to the separable specification, in which the marginal cost of each activity is independent of the level of the other. Under VaR, these technological relations do not change the threshold-driven form of the optimal solution; the RH either chooses no risk reduction, a pure self-protection threshold strategy, or a pure self-insurance strategy. In the right-region case, the residual severity choice $\tilde{l}^*$ is determined by the one-dimensional boundary problem \eqref{eq:var_one_dim_severity_problem}, rather than by the global sign of $\pi_{pl}=c_{xy}$. This feature contrasts with the TVaR problem studied next, where the bilinear tail-risk term $pl/\left(1-\alpha\right)$ creates a direct interaction between residual frequency and residual severity.

\section{Tail Value-at-Risk}\label{sec:tvar}

This section solves the TVaR problem \eqref{eq:TVaR_problem} with the feasible set $\Theta$ in \eqref{eq:original_feasible_set}. For ${\bm\theta}=\left(p,l\right)\in\Theta$, denote the objective function in \eqref{eq:TVaR_problem} by $G\left({\bm\theta}\right)=G\left(p,l\right)$. More explicitly, for $\alpha\in\left[0,1\right)$,
\begin{equation}
G\left(p,l\right)=\pi\left(p,l\right)+\min\left\{\frac{p}{1-\alpha},1\right\}l,
\label{eq:tvar_problem_alpha<1}
\end{equation}
whereas, for $\alpha=1$,
\begin{equation}
G\left(p,l\right)=\pi\left(p,l\right)+l.
\label{eq:tvar_problem_alpha=1}
\end{equation}

\subsection{Two Local Problems}

Let $\alpha\in\left[0,1\right)$. As in the VaR case, the feasible set is decomposed according to the threshold $1-\alpha$. Define $\Theta^{\left(\text{L}\right)}$ and $\Theta^{\left(\text{R}\right)}$ as in \eqref{eq:Theta^L} and \eqref{eq:Theta^R}; that is,
\begin{equation*}
\Theta^{\left(\text{L}\right)}=\left\{\left(p,l\right)\in\Theta:p\leq 1-\alpha\right\},
\end{equation*}
and
\begin{equation*}
\Theta^{\left(\text{R}\right)}=\left\{\left(p,l\right)\in\Theta:1-\alpha<p\right\}.
\end{equation*}
On the left region $\Theta^{\left(\text{L}\right)}$, the TVaR objective in \eqref{eq:tvar_problem_alpha<1} becomes
\begin{equation*}
G\left(p,l\right)=\pi\left(p,l\right)+\frac{pl}{1-\alpha}.
\end{equation*}
On the right region $\Theta^{\left(\text{R}\right)}$, it becomes
\begin{equation*}
G\left(p,l\right)=\pi\left(p,l\right)+l.
\end{equation*}
Therefore, for $\alpha\in\left[0,1\right)$, the TVaR problem can be solved by comparing the local solutions of
\begin{equation}
\inf_{\left(p,l\right)\in\Theta^{\left(\text{L}\right)}}G\left(p,l\right)=\inf_{\left(p,l\right)\in\Theta^{\left(\text{L}\right)}}\pi\left(p,l\right)+\frac{pl}{1-\alpha},
\label{eq:TVaR_problem<=}
\end{equation}
and
\begin{equation}
\inf_{\left(p,l\right)\in\Theta^{\left(\text{R}\right)}}G\left(p,l\right)=\inf_{\left(p,l\right)\in\Theta^{\left(\text{R}\right)}}\pi\left(p,l\right)+l.
\label{eq:TVaR_problem>}
\end{equation}
\noindent
The right-region problem \eqref{eq:TVaR_problem>} is identical to the right-region VaR problem in \eqref{eq:VaR_problem>}. Hence, by Proposition \ref{lemma:VaR_problem>}, whenever $\Theta^{\left(\text{R}\right)}$ is non-empty, its unique optimal solution is
${\bm\theta}^{*\left(\text{R}\right)}=\left(\overline{p},\tilde{l}^*\right)$, where $\tilde{l}^*$ is given in \eqref{eq:VaR_argmin}.

When $\alpha=1$, we have $1-\alpha=0<\underline{p}$, and hence every feasible residual loss probability satisfies $1-\alpha=0<p$. Therefore, $\Theta^{\left(\text{L}\right)}=\emptyset$ and $\Theta^{\left(\text{R}\right)}=\Theta$, and the TVaR problem is solved by the same right-region solution ${\bm\theta}^{*\left(\text{R}\right)}=\left(\overline{p},\tilde{l}^*\right)$.

Thus, the remainder of this section focuses on the case $\alpha\in\left[0,1\right)$. The main technical difficulty of the TVaR problem lies in solving the left-region problem \eqref{eq:TVaR_problem<=} when $\Theta^{\left(\text{L}\right)}\neq\emptyset$. The rest of this section studies this left-region problem.

\subsection{Isoquant Geometry}\label{sec:iso_geo}

This subsection develops the geometric structure of the left-region TVaR problem \eqref{eq:TVaR_problem<=} when $\alpha\in\left[0,1\right)$ and $\Theta^{\left(\text{L}\right)}\neq\emptyset$. The first-order derivatives of $G$ are
\begin{equation*}
G_p\left(p,l\right)=\pi_p\left(p,l\right)+\frac{l}{1-\alpha},
\end{equation*}
and
\begin{equation*}
G_l\left(p,l\right)=\pi_l\left(p,l\right)+\frac{p}{1-\alpha}.
\end{equation*}
The corresponding first-order isoquants are
\begin{equation*}
\mathcal{I}_{\text{F}}=\left\{\left(p,l\right)\in\Theta^{\left(\text{L}\right)}:G_p\left(p,l\right)=\pi_p\left(p,l\right)+\frac{l}{1-\alpha}=0\right\},
\end{equation*}
and
\begin{equation*}
\mathcal{I}_{\text{S}}=\left\{\left(p,l\right)\in\Theta^{\left(\text{L}\right)}:G_l\left(p,l\right)=\pi_l\left(p,l\right)+\frac{p}{1-\alpha}=0\right\}.
\end{equation*}
The isoquant $\mathcal{I}_{\text{F}}$ consists of points at which the marginal effect of changing the residual loss probability is zero, while $\mathcal{I}_{\text{S}}$ consists of points at which the marginal effect of changing the residual loss severity is zero. In other words, $\mathcal{I}_{\text{F}}$ is the marginal-balance curve for self-protection, and $\mathcal{I}_{\text{S}}$ is the marginal-balance curve for self-insurance.

The geometry of these two isoquants is governed by the cross-partial derivative of the left-region TVaR objective:
\begin{equation}
G_{pl}\left(p,l\right)=\pi_{pl}\left(p,l\right)+\frac{1}{1-\alpha}.
\label{eq:G_pl}
\end{equation}
Since $\pi_{pl}=c_{xy}$, the cross-partial derivative can be interpreted as the net interaction between residual frequency and residual severity. The component $c_{xy}$ captures the technological interaction in the joint risk-reduction cost, whereas $1/\left(1-\alpha\right)$ comes from the TVaR tail-risk term $pl/\left(1-\alpha\right)$. Therefore, $G_{pl}$ is the net interaction term, combining the cost-side technological interaction and the risk-measure-side interaction.

The second-order derivatives of $G$ are
\begin{equation*}
G_{pp}\left(p,l\right)=\pi_{pp}\left(p,l\right)>0,
\end{equation*}
and
\begin{equation*}
G_{ll}\left(p,l\right)=\pi_{ll}\left(p,l\right)>0,
\end{equation*}
where the positivity follows from the maintained own-curvature assumptions. Together with the sign of $G_{pl}$, these curvatures determine the qualitative shapes of the isoquants.

Suppose first that $G_{pl}\left(p,l\right)>0$ on $\Theta^{\left(\text{L}\right)}$. Where the isoquants can be represented as graphs, their slopes are
\begin{equation}
\frac{\text{d}l}{\text{d}p}\left(p,l\right)\bigg\vert_{\left(p,l\right)\in\mathcal{I}_{\text{F}}}=-\frac{\pi_{pp}\left(p,l\right)}{\pi_{pl}\left(p,l\right)+\frac{1}{1-\alpha}}=-\frac{G_{pp}\left(p,l\right)}{G_{pl}\left(p,l\right)}<0,
\label{eq:mathcal_I_F_negative}
\end{equation}
and
\begin{equation}
\frac{\text{d}l}{\text{d}p}\left(p,l\right)\bigg\vert_{\left(p,l\right)\in\mathcal{I}_{\text{S}}}=-\frac{\pi_{pl}\left(p,l\right)+\frac{1}{1-\alpha}}{\pi_{ll}\left(p,l\right)}=-\frac{G_{pl}\left(p,l\right)}{G_{ll}\left(p,l\right)}<0.
\label{eq:mathcal_I_S_negative}
\end{equation}
Therefore, under positive net interaction, both first-order isoquants are downward-sloping. Moreover, since $G_{pp}\left(p,l\right)>0$, $G_{pl}\left(p,l\right)>0$, and $G_{ll}\left(p,l\right)>0$, both first-order derivatives $G_p$ and $G_l$ increase when moving northeast and decrease when moving southwest in the $\left(p,l\right)$-plane. In this case, the downward-sloping isoquants generate a lower-left/upper-right separation geometry.

Suppose next that $G_{pl}\left(p,l\right)<0$ on $\Theta^{\left(\text{L}\right)}$. Where the isoquants can be represented as graphs, their slopes are
\begin{equation*}
\frac{\text{d}l}{\text{d}p}\left(p,l\right)\bigg\vert_{\left(p,l\right)\in\mathcal{I}_{\text{F}}}=-\frac{G_{pp}\left(p,l\right)}{G_{pl}\left(p,l\right)}>0,
\end{equation*}
and
\begin{equation*}
\frac{\text{d}l}{\text{d}p}\left(p,l\right)\bigg\vert_{\left(p,l\right)\in\mathcal{I}_{\text{S}}}=-\frac{G_{pl}\left(p,l\right)}{G_{ll}\left(p,l\right)}>0.
\end{equation*}
Therefore, under negative net interaction, both first-order isoquants are upward-sloping. The directional monotonicities of the two marginal-balance curves are now opposite. Since $G_{pp}\left(p,l\right)>0$ and $G_{lp}\left(p,l\right)<0$, the first-order derivative $G_p$ increases when moving southeast and decreases when moving northwest in the $\left(p,l\right)$-plane. Conversely, since $G_{pl}\left(p,l\right)<0$ and $G_{ll}\left(p,l\right)>0$, the first-order derivative $G_l$ increases when moving northwest and decreases when moving southeast in the plane. In this case, the separation geometry is reversed; the upward-sloping isoquants generate an upper-left/lower-right separation structure.

Finally, suppose that $G_{pl}\left(p,l\right)=0$ on $\Theta^{\left(\text{L}\right)}$. This is a degenerate case in which the cost-side interaction exactly offsets the TVaR tail-risk interaction, so that $c_{xy}=-1/\left(1-\alpha\right)$. In this case, the left-region objective has zero cross-partial derivative. The isoquant $\mathcal{I}_{\text{F}}$ is vertical in the $\left(p,l\right)$-plane, while the isoquant $\mathcal{I}_{\text{S}}$ is horizontal in the plane. Thus the marginal-balance geometry becomes coordinatewise rather than upward- or downward-sloping.

The following analysis focuses on cases in which the sign of $G_{pl}$ is constant on $\Theta^{\left(\text{L}\right)}$. If $G_{pl}$ changes sign within $\Theta^{\left(\text{L}\right)}$, then the isoquants may change their monotonicity inside the feasible rectangle, leading to a substantially more complicated constrained geometry. Such mixed-sign cases are mathematically interesting, but they are left for future work.

\subsection{Isoquant Analysis}

This subsection studies the left-region TVaR problem \eqref{eq:TVaR_problem<=} when $\alpha\in\left[0,1\right)$ and $\Theta^{\left(\text{L}\right)}\neq\emptyset$. Define $\overline{p}_{\alpha}=\min\left\{\overline{p},1-\alpha\right\}$. Then
\begin{equation*}
\Theta^{\left(\text{L}\right)}=\left[\underline{p},\overline{p}_{\alpha}\right]\times\left[\underline{l},\overline{l}\right].
\end{equation*}

Since all sign regions in this subsection are subsets of the left-region feasible set $\Theta^{\left(\text{L}\right)}$, we suppress the superscript L for notational simplicity. Define
\begin{equation*}
\Theta_{\text{F}}^{\left(<\right)}=\left\{\left(p,l\right)\in\Theta^{\left(\text{L}\right)}:G_p\left(p,l\right)<0\right\},
\end{equation*}
\begin{equation*}
\Theta_{\text{F}}^{\left(=\right)}=\left\{\left(p,l\right)\in\Theta^{\left(\text{L}\right)}:G_p\left(p,l\right)=0\right\},
\end{equation*}
and
\begin{equation*}
\Theta_{\text{F}}^{\left(>\right)}=\left\{\left(p,l\right)\in\Theta^{\left(\text{L}\right)}:G_p\left(p,l\right)>0\right\}.
\end{equation*}
Similarly, define
\begin{equation*}
\Theta_{\text{S}}^{\left(<\right)}=\left\{\left(p,l\right)\in\Theta^{\left(\text{L}\right)}:G_l\left(p,l\right)<0\right\},
\end{equation*}
\begin{equation*}
\Theta_{\text{S}}^{\left(=\right)}=\left\{\left(p,l\right)\in\Theta^{\left(\text{L}\right)}:G_l\left(p,l\right)=0\right\},
\end{equation*}
and
\begin{equation*}
\Theta_{\text{S}}^{\left(>\right)}=\left\{\left(p,l\right)\in\Theta^{\left(\text{L}\right)}:G_l\left(p,l\right)>0\right\}.
\end{equation*}
The corresponding weak sign regions are defined in the obvious way; for example, $\Theta_{\text{F}}^{\left(\leq\right)}=\Theta_{\text{F}}^{\left(<\right)}\cup\Theta_{\text{F}}^{\left(=\right)}$ and $\Theta_{\text{F}}^{\left(\geq\right)}=\Theta_{\text{F}}^{\left(=\right)}\cup\Theta_{\text{F}}^{\left(>\right)}$. The sets $\Theta_{\text{S}}^{\left(\leq\right)}$ and $\Theta_{\text{S}}^{\left(\geq\right)}$ are defined similarly.

The sign of $G_p$ indicates the local direction in which changing the residual loss probability reduces the left-region TVaR objective; if $G_p<0$, increasing $p$ locally reduces $G$, whereas if $G_p>0$, decreasing $p$ locally reduces $G$. Similarly, the sign of $G_l$ indicates the local direction in which changing the residual loss severity reduces the objective. Hence, the global solution of the left-region problem can be studied by examining how the two first-order isoquants $\mathcal{I}_{\text{F}}=\Theta_{\text{F}}^{\left(=\right)}$ and $\mathcal{I}_{\text{S}}=\Theta_{\text{S}}^{\left(=\right)}$ divide the rectangular feasible set $\Theta^{\left(\text{L}\right)}$.

\subsubsection{Positive Net Interaction}\label{sec:positive_net}

Throughout this subsection, assume that $G_{pl}>0$ on $\Theta^{\left(\text{L}\right)}$. As discussed in Section \ref{sec:iso_geo}, under this positive net interaction condition, the first-order isoquants $\mathcal{I}_{\text{F}}=\Theta_{\text{F}}^{\left(=\right)}$ and $\mathcal{I}_{\text{S}}=\Theta_{\text{S}}^{\left(=\right)}$ are downward-sloping whenever they are non-empty and can be represented as graphs.  Moreover, the first-order derivatives $G_p$ and $G_l$ are increasing in both residual-risk variables. This gives rise to a lower-left/upper-right separation geometry. The following lemma formalizes this property.

\begin{lemma}\label{eq:gradient_property}
Assume that $\Theta^{\left(\text{L}\right)}\neq\emptyset$ and $G_{pl}>0$ on $\Theta^{\left(\text{L}\right)}$. Then the following statements hold.
\begin{enumerate}
\item[(i)] For any $\left(p,l\right)\in\Theta_{\text{F}}^{\left(\leq\right)}$, and for any $\left(\tilde{p},\tilde{l}\right)\in\Theta^{\left(\text{L}\right)}\setminus\left\{\left(p,l\right)\right\}$ with $\tilde{p}\leq p$ and $\tilde{l}\leq l$, we have $\left(\tilde{p},\tilde{l}\right)\in\Theta_{\text{F}}^{\left(<\right)}$.
\item[(ii)] For any $\left(p,l\right)\in\Theta_{\text{F}}^{\left(\geq\right)}$, and for any $\left(\tilde{p},\tilde{l}\right)\in\Theta^{\left(\text{L}\right)}\setminus\left\{\left(p,l\right)\right\}$ with $\tilde{p}\geq p$ and $\tilde{l}\geq l$, we have $\left(\tilde{p},\tilde{l}\right)\in\Theta_{\text{F}}^{\left(>\right)}$.
\item[(iii)] For any $\left(p,l\right)\in\Theta_{\text{S}}^{\left(\leq\right)}$, and for any $\left(\tilde{p},\tilde{l}\right)\in\Theta^{\left(\text{L}\right)}\setminus\left\{\left(p,l\right)\right\}$ with $\tilde{p}\leq p$ and $\tilde{l}\leq l$, we have $\left(\tilde{p},\tilde{l}\right)\in\Theta_{\text{S}}^{\left(<\right)}$.
\item[(iv)] For any $\left(p,l\right)\in\Theta_{\text{S}}^{\left(\geq\right)}$, and for any $\left(\tilde{p},\tilde{l}\right)\in\Theta^{\left(\text{L}\right)}\setminus\left\{\left(p,l\right)\right\}$ with $\tilde{p}\geq p$ and $\tilde{l}\geq l$, we have $\left(\tilde{p},\tilde{l}\right)\in\Theta_{\text{S}}^{\left(>\right)}$.
\end{enumerate}
Consequently, if $\mathcal{I}_{\text{F}}=\Theta_{\text{F}}^{\left(=\right)}$ or $\mathcal{I}_{\text{S}}=\Theta_{\text{S}}^{\left(=\right)}$ is non-empty and can be represented as a graph, then it is downward-sloping. When such an isoquant cuts through the rectangle $\Theta^{\left(\text{L}\right)}$, it separates the rectangle into lower-left and upper-right sign regions.
\end{lemma}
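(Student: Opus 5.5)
The plan is to prove strict coordinatewise monotonicity of $G_p$ and $G_l$ on the rectangle $\Theta^{\left(\text{L}\right)}$ and then read off (i)--(iv) directly. The key observation is that $\Theta^{\left(\text{L}\right)}=\left[\underline{p},\overline{p}_{\alpha}\right]\times\left[\underline{l},\overline{l}\right]$ is convex (a product of intervals). So for any two points $(\tilde p,\tilde l)\le(p,l)$ componentwise with $(\tilde p,\tilde l)\neq(p,l)$, the segment $\gamma(t)=(\tilde p,\tilde l)+t\,(p-\tilde p,\,l-\tilde l)$, $t\in[0,1]$, stays in $\Theta^{\left(\text{L}\right)}$. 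Along this segment we apply the fundamental theorem of calculus:
\begin{equation*}
G_p(p,l)-G_p(\tilde p,\tilde l)=\int_0^1\Big[G_{pp}(\gamma(t))\,(p-\tilde p)+G_{pl}(\gamma(t))\,(l-\tilde l)\Big]\,\text{d}t .
\end{equation*}
The analogous identity holds for $G_l$, with the integrand $G_{pl}(\gamma(t))(p-\tilde p)+G_{ll}(\gamma(t))(l-\tilde l)$.

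The sign argument then runs as follows. We have $G_{pp}=\pi_{pp}>0$, $G_{ll}=\pi_{ll}>0$, and by hypothesis $G_{pl}>0$ on $\Theta^{\left(\text{L}\right)}$, including the one-sided boundary versions. Both increments $p-\tilde p$ and $l-\tilde l$ are non-negative, and at least one is strictly positive. Hence the integrand is strictly positive for every $t$, which gives $G_p(\tilde p,\tilde l)<G_p(p,l)$ and $G_l(\tilde p,\tilde l)<G_l(p,l)$. The four statements follow at once:
\begin{itemize}
\item for (i), if $G_p(p,l)\le 0$ then $G_p(\tilde p,\tilde l)<0$;
\item for (ii), swap the roles of the two points, so $G_p(p,l)\ge0$ gives $G_p(\tilde p,\tilde l)>0$;
\item (iii) and (iv) are the same argument applied to $G_l$.
\end{itemize}

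The main technical obstacle is justifying the integral identity when the segment runs along the boundary, where only one-sided derivatives are assumed. I would handle this by noting that $\pi$ is $C^2$ on the interior and admits continuous one-sided extensions. The cleanest route is to first prove the strict inequality for interior points, and then pass to boundary points by continuity of $G_p$ and $G_l$ on $\Theta^{\left(\text{L}\right)}$. Continuity alone only yields a weak inequality in the limit, so strictness needs an extra step. Choose an intermediate point $(\hat p,\hat l)$ in the order interval between the two points and apply the one-variable mean value theorem along each axis-parallel edge separately. On such an edge only one-sided derivatives in a single direction are needed, and those are exactly what the standing assumptions provide. For a purely axis-parallel comparison, e.g.\ $\tilde l=l$ and $\tilde p<p$, this reduces to $G_p(\cdot,l)$ being strictly increasing, which follows since its derivative $G_{pp}=\pi_{pp}>0$. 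A general comparison is then the concatenation of a horizontal and a vertical move, with one of the two steps strict.

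For the final consequences, suppose $\mathcal{I}_{\text{F}}$ is a graph $l=g(p)$ and $p_1<p_2$ lie in its domain. If $g(p_2)\ge g(p_1)$, then (ii) applied at $(p_1,g(p_1))\in\Theta_{\text{F}}^{(\geq)}$ would force $G_p(p_2,g(p_2))>0$, which is a contradiction; so $g$ is strictly decreasing. The same argument works for $\mathcal{I}_{\text{S}}$, and for a graph of the form $p=h(l)$. Separation then follows from (i)--(iv). Every point weakly southwest of a point of the isoquant lies in the strict negative region, and every point weakly northeast lies in the strict positive region. Hence a downward-sloping isoquant that crosses the rectangle splits it into a lower-left region where $G_p<0$ (resp.\ $G_l<0$) and an upper-right region where $G_p>0$ (resp.\ $G_l>0$).
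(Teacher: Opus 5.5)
Your proposal is correct and follows essentially the paper's argument. The paper also uses $G_{pp},G_{ll}>0$ and $G_{pl}>0$ to make $G_p$ and $G_l$ strictly increasing in each coordinate, then reads (i)--(iv) off this componentwise monotonicity. Your axis-parallel two-step comparison makes explicit what the paper asserts in one line, and it suffices on its own, so the segment integral and the interior-then-limit detour can be dropped.

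The one genuine difference is the final assertion. You deduce that the isoquant graph is strictly decreasing directly from (ii), whereas the paper uses the implicit-differentiation slope formulas. Your route has the small advantage of not requiring the graph to be differentiable.
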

\begin{proof}
We first prove the statements involving  $\Theta_{\text{F}}^{\left(\leq\right)}$ and $\Theta_{\text{F}}^{\left(\geq\right)}$.

Since $G_{pp}\left(p,l\right)=\pi_{pp}\left(p,l\right)>0$ and $G_{pl}\left(p,l\right)>0$, the first-order derivative $G_p$ is strictly increasing in both $p$ and $l$. Therefore, if $G_{p}\left(p,l\right)\leq 0$ and $\left(\tilde{p},\tilde{l}\right)\neq\left(p,l\right)$ satisfies $\tilde{p}\leq p$ and $\tilde{l}\leq l$, with at least one inequality strict, then $G_{p}\left(\tilde{p},\tilde{l}\right)<G_{p}\left(p,l\right)\leq 0$. Hence, $\left(\tilde{p},\tilde{l}\right)\in\Theta_{\text{F}}^{\left(<\right)}$. This proves (i).

Similarly, if $G_{p}\left(p,l\right)\geq 0$ and $\left(\tilde{p},\tilde{l}\right)\neq\left(p,l\right)$ satisfies $\tilde{p}\geq p$ and $\tilde{l}\geq l$, with at least one inequality strict, then $G_{p}\left(\tilde{p},\tilde{l}\right)>G_{p}\left(p,l\right)\geq 0$. Hence, $\left(\tilde{p},\tilde{l}\right)\in\Theta_{\text{F}}^{\left(>\right)}$. This proves (ii).

For the severity derivative, note that $G_{lp}\left(p,l\right)=G_{pl}\left(p,l\right)>0$ and $G_{ll}\left(p,l\right)=\pi_{ll}\left(p,l\right)>0$. Thus $G_l$ is also strictly increasing in both $p$ and $l$. The same argument proves (iii) and (iv). The final assertion follows from the slope formulas \eqref{eq:mathcal_I_F_negative} and \eqref{eq:mathcal_I_S_negative}.
\end{proof}

The first case arises when at least one of the two first-order isoquants does not cut through the rectangle $\Theta^{\left(\text{L}\right)}$. Then at least one of the first-order derivatives has a constant sign on $\Theta^{\left(\text{L}\right)}$, and the two-dimensional problem reduces to a one-dimensional boundary problem.

For $p_0\in\left[\underline{p},\overline{p}_{\alpha}\right]$, define
\begin{equation}
l^*\left(p_0\right)\in\argmin_{l\in\left[\underline{l},\overline{l}\right]}G\left(p_0,l\right),
\label{eq:boundary_l^*}
\end{equation}
and, for $l_0\in\left[\underline{l},\overline{l}\right]$, define
\begin{equation}
p^*\left(l_0\right)\in\argmin_{p\in\left[\underline{p},\overline{p}_{\alpha}\right]}G\left(p,l_0\right).
\label{eq:boundary_p^*}
\end{equation}
Since $G_{ll}>0$ and $G_{pp}>0$ on $\Theta^{\left(\text{L}\right)}$, these one-dimensional minimizers are unique.

\begin{proposition}\label{prop:tvar_first_result}
Assume that $\Theta^{\left(\text{L}\right)}\neq\emptyset$ and $G_{pl}>0$ on $\Theta^{\left(\text{L}\right)}$. Then the following statements hold.

\begin{enumerate}
\item[(i)] If $G_p\left(\overline{p}_{\alpha},\overline{l}\right)\leq 0$, then the optimal solution of the left-region problem \eqref{eq:TVaR_problem<=} is ${\bm\theta}^{*\left(\text{L}\right)}=\left(p^{*\left(\text{L}\right)},l^{*\left(\text{L}\right)}\right)=\left(\overline{p}_{\alpha},l^*\left(\overline{p}_{\alpha}\right)\right)$.

\item[(ii)] If $G_p\left(\underline{p},\underline{l}\right)\geq 0$, then the optimal solution of the left-region problem \eqref{eq:TVaR_problem<=} is ${\bm\theta}^{*\left(\text{L}\right)}=\left(\underline{p},l^*\left(\underline{p}\right)\right)$.

\item[(iii)] If $G_l\left(\overline{p}_{\alpha},\overline{l}\right)\leq 0$, then the optimal solution of the left-region problem \eqref{eq:TVaR_problem<=} is ${\bm\theta}^{*\left(\text{L}\right)}=\left(p^*\left(\overline{l}\right),\overline{l}\right)$.

\item[(iv)] If $G_l\left(\underline{p},\underline{l}\right)\geq 0$, then the optimal solution of the left-region problem \eqref{eq:TVaR_problem<=} is ${\bm\theta}^{*\left(\text{L}\right)}=\left(p^*\left(\underline{l}\right),\underline{l}\right)$.
\end{enumerate}
\end{proposition}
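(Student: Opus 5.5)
The plan is to use Lemma \ref{eq:gradient_property} to show that, in each of the four cases, one first-order derivative has a constant strict sign on all of $\Theta^{\left(\text{L}\right)}$ except possibly one corner. The two-dimensional problem then collapses onto one edge of the rectangle, and the uniqueness of the one-dimensional minimizers $l^*\left(\cdot\right)$ and $p^*\left(\cdot\right)$ in \eqref{eq:boundary_l^*}--\eqref{eq:boundary_p^*} gives the stated solution. Existence of a minimizer of the left-region problem is immediate, because $G$ is continuous on the compact rectangle $\Theta^{\left(\text{L}\right)}=\left[\underline{p},\overline{p}_{\alpha}\right]\times\left[\underline{l},\overline{l}\right]$.

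For case (i), suppose $G_p\left(\overline{p}_{\alpha},\overline{l}\right)\leq 0$. Every point of the rectangle lies weakly southwest of $\left(\overline{p}_{\alpha},\overline{l}\right)$, so Lemma \ref{eq:gradient_property}(i) gives $G_p<0$ on $\Theta^{\left(\text{L}\right)}\setminus\left\{\left(\overline{p}_{\alpha},\overline{l}\right)\right\}$. Fix $l\in\left[\underline{l},\overline{l}\right]$. Then $p\mapsto G\left(p,l\right)$ has derivative strictly negative on $\left[\underline{p},\overline{p}_{\alpha}\right)$; at $l=\overline{l}$ the derivative may vanish only at the right endpoint. Hence this map is strictly decreasing on $\left[\underline{p},\overline{p}_{\alpha}\right]$, by the mean value theorem or integration of $G_p$. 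So for any $\left(p,l\right)$ with $p<\overline{p}_{\alpha}$ we have $G\left(p,l\right)>G\left(\overline{p}_{\alpha},l\right)$, and therefore every minimizer has $p=\overline{p}_{\alpha}$. The problem reduces to $\min_{l}G\left(\overline{p}_{\alpha},l\right)$, whose unique minimizer is $l^*\left(\overline{p}_{\alpha}\right)$ because $G_{ll}=\pi_{ll}>0$. This gives uniqueness of ${\bm\theta}^{*\left(\text{L}\right)}$.

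Case (ii) is the mirror image. From $G_p\left(\underline{p},\underline{l}\right)\geq 0$, Lemma \ref{eq:gradient_property}(ii) gives $G_p>0$ away from that corner. So $G\left(\cdot,l\right)$ is strictly increasing for every $l$, which forces $p=\underline{p}$, and then $l^*\left(\underline{p}\right)$ is the solution. Cases (iii) and (iv) follow in the same way, with the roles of $p$ and $l$ swapped and Lemma \ref{eq:gradient_property}(iii)--(iv) in place of (i)--(ii). Here $G\left(p,\cdot\right)$ is strictly decreasing, respectively increasing, in $l$, which forces $l=\overline{l}$, respectively $l=\underline{l}$. The remaining one-dimensional problem in $p$ has the unique minimizer $p^*\left(\overline{l}\right)$, respectively $p^*\left(\underline{l}\right)$, by $G_{pp}=\pi_{pp}>0$.

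The only delicate point is the exceptional corner where the derivative may equal zero. I would handle it by noting that a derivative that is strictly signed except at a single endpoint still yields strict monotonicity on the closed interval. One also uses that the one-sided boundary derivatives assumed in the model make Lemma \ref{eq:gradient_property} valid up to the boundary. No further obstacle is expected.
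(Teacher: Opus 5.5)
Your proposal is correct and follows essentially the same route as the paper. You use Lemma~\ref{eq:gradient_property} to get a strict sign of $G_p$ (or $G_l$) away from one corner, deduce strict monotonicity along every horizontal (or vertical) segment to pin that coordinate to the boundary, and conclude by uniqueness of the one-dimensional minimizer; your mean-value-theorem treatment of the exceptional corner and the compactness remark for existence only make explicit what the paper leaves implicit.
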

\begin{proof}
We prove (i). Suppose that $G_p\left(\overline{p}_{\alpha},\overline{l}\right)\leq 0$. Then $\left(\overline{p}_{\alpha},\overline{l}\right)\in\Theta_{\text{F}}^{\left(\leq\right)}$. By Lemma \ref{eq:gradient_property}, for any $\left(p,l\right)\in\Theta^{\left(\text{L}\right)}\setminus\left\{\left(\overline{p}_{\alpha},\overline{l}\right)\right\}$, we have $\left(p,l\right)\in\Theta_{\text{F}}^{\left(<\right)}$. Therefore, $G_p\left(p,l\right)<0$ for all $\left(p,l\right)\in\Theta^{\left(\text{L}\right)}\setminus\left\{\left(\overline{p}_{\alpha},\overline{l}\right)\right\}$. In particular, for each fixed $l\in\left[\underline{l},\overline{l}\right]$, $G_p\left(p,l\right)<0$ for all $p<\overline{p}_{\alpha}$. Hence $p\mapsto G\left(p,l\right)$ is strictly decreasing on $\left[\underline{p},\overline{p}_{\alpha}\right]$. Thus any optimal solution must have $p^{*\left(\text{L}\right)}=\overline{p}_{\alpha}$. The left-region problem therefore reduces to the one-dimensional problem
\begin{equation*}
\inf_{l\in\left[\underline{l},\overline{l}\right]}G\left(\overline{p}_{\alpha},l\right),
\end{equation*}
whose unique minimizer is $l^*\left(\overline{p}_{\alpha}\right)$. Hence, ${\bm\theta}^{*\left(\text{L}\right)}=\left(\overline{p}_{\alpha},l^*\left(\overline{p}_{\alpha}\right)\right)$. This proves (i).

The proofs of (ii)--(iv) are analogous.
\end{proof}

Proposition \ref{prop:tvar_first_result} corresponds to the case in which at least one marginal-balance isoquant does not cut through the feasible rectangle. Economically, one of the two risk-reduction margins has a uniform local direction of improvement over the entire left-region problem. Hence the RH is pushed to a boundary in one residual-risk dimension. The remaining choice is then a one-dimensional tradeoff along that boundary. In this case, the optimal strategy is governed by a single constrained marginal balance. Either the RH fixes residual frequency at its left or right boundary and chooses residual severity optimally, or fixes residual severity at its lower or upper boundary and chooses residual frequency optimally.

We next consider the case in which both first-order isoquants cut through the rectangle $\Theta^{\left(\text{L}\right)}$, but they do not intersect. By ``cutting through'', we mean that each isoquant is non-empty and separates $\Theta^{\left(\text{L}\right)}$ into two non-empty sign regions. By ``not intersecting'', we mean that the isoquants neither touch nor cross each other. Equivalently, $\Theta_{\text{F}}^{\left(<\right)}$, $\Theta_{\text{F}}^{\left(=\right)}=\mathcal{I}_{\text{F}}$, $\Theta_{\text{F}}^{\left(>\right)}$, $\Theta_{\text{S}}^{\left(<\right)}$, $\Theta_{\text{S}}^{\left(=\right)}=\mathcal{I}_{\text{S}}$, and $\Theta_{\text{S}}^{\left(>\right)}$ are all non-empty, but $\Theta_{\text{F}}^{\left(=\right)}\cap\Theta_{\text{S}}^{\left(=\right)}=\mathcal{I}_{\text{F}}\cap\mathcal{I}_{\text{S}}=\emptyset$. Since the two isoquants do not intersect, the frequency marginal-balance isoquant $\Theta_{\text{F}}^{\left(=\right)}=\mathcal{I}_{\text{F}}$ lies entirely in either $\Theta_{\text{S}}^{\left(<\right)}$ or $\Theta_{\text{S}}^{\left(>\right)}$. Equivalently, the severity marginal-balance isoquant $\Theta_{\text{S}}^{\left(=\right)}=\mathcal{I}_{\text{S}}$ lies entirely in either $\Theta_{\text{F}}^{\left(>\right)}$ or $\Theta_{\text{F}}^{\left(<\right)}$, respectively. These two configurations lead to different extreme candidates.

To describe the possible configurations, define the joint sign regions
\begin{equation*}
\Theta_{\text{F,S}}^{\left(<,<\right)}=\Theta_{\text{F}}^{\left(<\right)}\cap\Theta_{\text{S}}^{\left(<\right)}=\left\{\left(p,l\right)\in\Theta^{\left(\text{L}\right)}:G_p\left(p,l\right)<0,\;G_l\left(p,l\right)<0\right\},
\end{equation*}
\begin{equation*}
\Theta_{\text{F,S}}^{\left(<,=\right)}=\Theta_{\text{F}}^{\left(<\right)}\cap\Theta_{\text{S}}^{\left(=\right)}=\left\{\left(p,l\right)\in\Theta^{\left(\text{L}\right)}:G_p\left(p,l\right)<0,\;G_l\left(p,l\right)=0\right\},
\end{equation*}
\begin{equation*}
\Theta_{\text{F,S}}^{\left(<,>\right)}=\Theta_{\text{F}}^{\left(<\right)}\cap\Theta_{\text{S}}^{\left(>\right)}=\left\{\left(p,l\right)\in\Theta^{\left(\text{L}\right)}:G_p\left(p,l\right)<0,\;G_l\left(p,l\right)>0\right\},
\end{equation*}
and define $\Theta_{\text{F,S}}^{\left(=,<\right)}$, $\Theta_{\text{F,S}}^{\left(=,=\right)}$, $\Theta_{\text{F,S}}^{\left(=,>\right)}$, $\Theta_{\text{F,S}}^{\left(>,<\right)}$, $\Theta_{\text{F,S}}^{\left(>,=\right)}$, $\Theta_{\text{F,S}}^{\left(>,>\right)}$ similarly. In particular,
\begin{equation*}
\Theta_{\text{F,S}}^{\left(=,=\right)}=\Theta_{\text{F}}^{\left(=\right)}\cap\Theta_{\text{S}}^{\left(=\right)}=\mathcal{I}_{\text{F}}\cap\mathcal{I}_{\text{S}}=\left\{\left(p,l\right)\in\Theta^{\left(\text{L}\right)}:G_p\left(p,l\right)=0,\;G_l\left(p,l\right)=0\right\}.
\end{equation*}

Using the one-dimensional minimizers defined in \eqref{eq:boundary_l^*} and \eqref{eq:boundary_p^*}, define the upper-left and lower-right candidates by
\begin{equation}
{\bm\theta}^{\text{UL}}=\left(p^{\text{UL}},l^{\text{UL}}\right)=\left(p^*\left(\overline{l}\right),l^*\left(\underline{p}\right)\right),
\label{eq:theta_UL}
\end{equation}
and
\begin{equation}
{\bm\theta}^{\text{LR}}=\left(p^{\text{LR}},l^{\text{LR}}\right)=\left(p^*\left(\underline{l}\right),l^*\left(\overline{p}_{\alpha}\right)\right).
\label{eq:theta_LR}
\end{equation}
The candidate ${\bm\theta}^{\text{UL}}$ combines the upper-boundary constrained marginal balance for $p$ with the left-boundary constrained marginal balance for $l$, whereas ${\bm\theta}^{\text{LR}}$ combines the lower-boundary constrained marginal balance for $p$ with the right-boundary constrained marginal balance for $l$.

\begin{proposition}\label{prop:tvar_second_result}
Assume that $\Theta^{\left(\text{L}\right)}\neq\emptyset$ and $G_{pl}>0$ on $\Theta^{\left(\text{L}\right)}$. Suppose that $G_p\left(\underline{p},\underline{l}\right)<0<G_p\left(\overline{p}_{\alpha},\overline{l}\right)$ and $G_l\left(\underline{p},\underline{l}\right)<0<G_l\left(\overline{p}_{\alpha},\overline{l}\right)$, but $\Theta_{\text{F,S}}^{\left(=,=\right)}=\emptyset$. Then the following statements hold.

\begin{enumerate}
\item[(i)] If $\Theta_{\text{F}}^{\left(=\right)}\subseteq\Theta_{\text{S}}^{\left(<\right)}$ or, equivalently, $\Theta_{\text{S}}^{\left(=\right)}\subseteq\Theta_{\text{F}}^{\left(>\right)}$, then the optimal solution of the left-region problem \eqref{eq:TVaR_problem<=} is ${\bm\theta}^{*\left(\text{L}\right)}={\bm\theta}^{\text{UL}}=\left(p^*\left(\overline{l}\right),l^*\left(\underline{p}\right)\right)$.
\item[(ii)] If $\Theta_{\text{F}}^{\left(=\right)}\subseteq\Theta_{\text{S}}^{\left(>\right)}$ or, equivalently, $\Theta_{\text{S}}^{\left(=\right)}\subseteq\Theta_{\text{F}}^{\left(<\right)}$, then the optimal solution of the left-region problem \eqref{eq:TVaR_problem<=} is ${\bm\theta}^{*\left(\text{L}\right)}={\bm\theta}^{\text{LR}}=\left(p^*\left(\underline{l}\right),l^*\left(\overline{p}_{\alpha}\right)\right)$.
\end{enumerate}
\end{proposition}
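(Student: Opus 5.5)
The plan is a first-order (KKT) enumeration over the rectangle $\Theta^{(\text{L})}=[\underline{p},\overline{p}_{\alpha}]\times[\underline{l},\overline{l}]$. Since $G$ is continuous on this compact rectangle, a minimizer exists. Because the constraints are box constraints, any minimizer $(p,l)$ satisfies the coordinatewise one-sided first-order conditions: either $G_p(p,l)=0$ with $p$ interior, or $G_p\geq 0$ at $p=\underline{p}$, or $G_p\leq 0$ at $p=\overline{p}_{\alpha}$; and analogously for $G_l$ in $l$. I will show that, under the hypotheses of (i), exactly one point of $\Theta^{(\text{L})}$ satisfies these conditions, and that this point coincides with ${\bm\theta}^{\text{UL}}$.

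\textbf{Step 1: ${\bm\theta}^{\text{UL}}$ is a genuine boundary point.} I first show that $p^*(\overline{l})=\underline{p}$ or $l^*(\underline{p})=\overline{l}$; that is, $G_p(\underline{p},\overline{l})\geq 0$ or $G_l(\underline{p},\overline{l})\leq 0$. Suppose instead that $G_p(\underline{p},\overline{l})<0<G_l(\underline{p},\overline{l})$. Since $G_l(\underline{p},\underline{l})<0$ and $G_l$ is strictly increasing in $l$ (Lemma \ref{eq:gradient_property}), there is $l_1\in(\underline{l},\overline{l})$ with $G_l(\underline{p},l_1)=0$. By the hypothesis $\Theta_{\text{S}}^{(=)}\subseteq\Theta_{\text{F}}^{(>)}$, this gives $G_p(\underline{p},l_1)>0$. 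But monotonicity of $G_p$ in $l$ gives $G_p(\underline{p},l_1)<G_p(\underline{p},\overline{l})<0$, a contradiction. So ${\bm\theta}^{\text{UL}}$ is one of two things:
\begin{itemize}
\item the top-edge point $(p^*(\overline{l}),\overline{l})$, when $G_p(\underline{p},\overline{l})<0$; or
\item the left-edge point $(\underline{p},l^*(\underline{p}))$, when $G_p(\underline{p},\overline{l})\geq 0$.
\end{itemize}

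\textbf{Step 2: eliminate every other location.} I use the two inclusions $\mathcal{I}_{\text{F}}\subseteq\Theta_{\text{S}}^{(<)}$ and $\mathcal{I}_{\text{S}}\subseteq\Theta_{\text{F}}^{(>)}$ together with the monotonicity in Lemma \ref{eq:gradient_property}.
\begin{itemize}
\item \emph{Interior points} would need $G_p=G_l=0$, which is excluded because $\Theta_{\text{F,S}}^{(=,=)}=\emptyset$.
\item \emph{Bottom-edge interior points} need $G_p=0$ and $G_l\geq 0$, contradicting $\mathcal{I}_{\text{F}}\subseteq\Theta_{\text{S}}^{(<)}$.
\item \emph{Right-edge interior points} need $G_l=0$ and $G_p\leq 0$, contradicting $\mathcal{I}_{\text{S}}\subseteq\Theta_{\text{F}}^{(>)}$.
\item \emph{The corner $(\underline{p},\underline{l})$} fails because $G_p(\underline{p},\underline{l})<0$.
\item \emph{The corner $(\overline{p}_{\alpha},\overline{l})$} fails because $G_p>0$ and $G_l>0$ there.
\item \emph{The corner $(\overline{p}_{\alpha},\underline{l})$} would need $G_p\leq 0\leq G_l$. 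Along the bottom edge $G_l$ rises from a negative value, so it vanishes at some $p_1\leq\overline{p}_{\alpha}$. At that point $G_p(p_1,\underline{l})\leq G_p(\overline{p}_{\alpha},\underline{l})\leq 0$, again contradicting $\mathcal{I}_{\text{S}}\subseteq\Theta_{\text{F}}^{(>)}$.
\end{itemize}

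The survivors are top-edge points with $G_p=0$ and $G_l\leq 0$, left-edge points with $G_l=0$ and $G_p\geq 0$, and the corner $(\underline{p},\overline{l})$. Their existence is governed by the sign of $G_p(\underline{p},\overline{l})$.
\begin{itemize}
\item If $G_p(\underline{p},\overline{l})<0$, then $G_p(\underline{p},l)<0$ for every $l$, so no left-edge or corner candidate survives. The top-edge candidate is unique by strict convexity in $p$ and equals $(p^*(\overline{l}),\overline{l})$.
\item If $G_p(\underline{p},\overline{l})\geq 0$, no top-edge point has $G_p=0$ with $p>\underline{p}$. The remaining left-edge or corner candidate is unique by strict convexity in $l$ and equals $(\underline{p},l^*(\underline{p}))$.
\end{itemize}
Either way, the unique first-order point is ${\bm\theta}^{\text{UL}}$, so it is the minimizer.

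\textbf{Remaining pieces.} The equivalence of the two formulations in (i) follows from non-intersection combined with the cut-through hypotheses: the isoquants are disjoint, each lies in a single strict sign region of the other, and the lower-left/upper-right ordering forces the pairing of those regions. Case (ii) is the mirror argument with the roles of the top/left and bottom/right edges exchanged, giving ${\bm\theta}^{\text{LR}}$.

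The main obstacle is Step 1: showing that the mixed-coordinate candidate $(p^*(\overline{l}),l^*(\underline{p}))$ is actually a single boundary point. This is what makes the proposition's formula correct, and it relies on the corner-sign argument above.
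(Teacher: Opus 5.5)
Your argument is correct, and it takes a genuinely different route from the paper's proof.

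\textbf{How the paper argues.} The paper proves (i) by dominance. It first discards $\Theta_{\text{F,S}}^{(<,<)}$ and $\Theta_{\text{F,S}}^{(>,>)}$ using northeast and southwest improving moves. It then splits into three cases according to the signs of $G_p$ and $G_l$ at the corner $(\underline{p},\overline{l})$, and shows that the surviving region $\Theta_{\text{F,S}}^{(=,<)}\cup\Theta_{\text{F,S}}^{(>,<)}\cup\Theta_{\text{F,S}}^{(>,=)}$ lies southeast of ${\bm\theta}^{\text{UL}}$. Finally, it integrates $\text{d}G=G_p\,\text{d}p+G_l\,\text{d}l\leq 0$ along a northwest monotone path inside that region.

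\textbf{How your route compares.} You replace all of this with Weierstrass existence plus the coordinatewise one-sided first-order conditions on the box, and show that the first-order set is contained in $\{{\bm\theta}^{\text{UL}}\}$. This buys rigor and brevity. You never need the monotone descent paths, whose existence inside the sign region the paper asserts rather than constructs. Your Step~1 also proves what the paper takes for granted with ``under the present ordering'' in its cases (I) and (III): that $G_p(\underline{p},\overline{l})<0<G_l(\underline{p},\overline{l})$ cannot occur. This is what makes $(p^*(\overline{l}),l^*(\underline{p}))$ a single boundary point.

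What the dominance viewpoint buys is reuse in Propositions~\ref{prop:tvar_third_result} and~\ref{prop:tvar_crossing}. There the paper uses it to discard common marginal-balance components that a pure first-order enumeration would keep, such as the saddle crossing in the three-crossing example. In the present separated case the first-order set is a singleton, so nothing is lost.

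Part (ii) does follow from (i), for instance formally by interchanging the roles of $p$ and $l$. That swap preserves all hypotheses, turns condition (ii) into condition (i), and maps ${\bm\theta}^{\text{UL}}$ to ${\bm\theta}^{\text{LR}}$.

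\textbf{One point to tighten.} Your elimination step uses both inclusions, so the equivalence of the two formulations matters. Your justification of it is only a sketch, and it tacitly uses connectedness of the isoquants. It follows directly from the corner hypotheses:
\begin{enumerate}
\item[(a)] Suppose some $(p_s,l_s)\in\mathcal{I}_{\text{S}}$ had $G_p(p_s,l_s)<0$. The segment to $(\overline{p}_{\alpha},\overline{l})$, where $G_p>0$, would then contain a point of $\mathcal{I}_{\text{F}}$ that is distinct from and weakly northeast of $(p_s,l_s)$.
\item[(b)] By Lemma~\ref{eq:gradient_property}(iv), that point lies in $\Theta_{\text{S}}^{(>)}$, contradicting $\mathcal{I}_{\text{F}}\subseteq\Theta_{\text{S}}^{(<)}$.
\item[(c)] The case $G_p(p_s,l_s)=0$ is excluded by $\Theta_{\text{F,S}}^{(=,=)}=\emptyset$.
\end{enumerate}
The converse is symmetric, using the corner $(\underline{p},\underline{l})$ and Lemma~\ref{eq:gradient_property}(i). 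The paper itself simply asserts this equivalence.
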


\begin{proof}
We prove (i). Suppose that $\Theta_{\text{F}}^{\left(=\right)}\subseteq\Theta_{\text{S}}^{\left(<\right)}$. Equivalently, $\Theta_{\text{S}}^{\left(=\right)}\subseteq\Theta_{\text{F}}^{\left(>\right)}$. Under this configuration, the frequency marginal-balance isoquant lies entirely in the region where the severity derivative is negative, while the severity marginal-balance isoquant lies entirely in the region where the frequency derivative is positive. Therefore, the feasible rectangle is decomposed as
\begin{equation*}
\Theta^{\left(\text{L}\right)}=\Theta_{\text{F,S}}^{\left(<,<\right)}\cup\Theta_{\text{F,S}}^{\left(=,<\right)}\cup\Theta_{\text{F,S}}^{\left(>,<\right)}\cup\Theta_{\text{F,S}}^{\left(>,=\right)}\cup\Theta_{\text{F,S}}^{\left(>,>\right)}.
\end{equation*}
Indeed, the joint sign regions $\Theta_{\text{F,S}}^{\left(<,=\right)}$, $\Theta_{\text{F,S}}^{\left(<,>\right)}$, $\Theta_{\text{F,S}}^{\left(=,=\right)}$, $\Theta_{\text{F,S}}^{\left(=,>\right)}$ are empty under the assumed configuration.

We first show that no optimal solution can lie in the two regions $\Theta_{\text{F,S}}^{\left(<,<\right)}$ and $\Theta_{\text{F,S}}^{\left(>,>\right)}$. Let $\left(p,l\right)\in\Theta_{\text{F,S}}^{\left(<,<\right)}$. Then $G_p\left(p,l\right)<0$ and $G_l\left(p,l\right)<0$. By Lemma \ref{eq:gradient_property}, there exists a feasible northeast movement from $\left(p,l\right)$ along which the objective strictly decreases until the path reaches the boundary of $\Theta_{\text{F,S}}^{\left(=,<\right)}\cup\Theta_{\text{F,S}}^{\left(>,<\right)}\cup\Theta_{\text{F,S}}^{\left(>,=\right)}$. Along such a movement, the directional derivative of $G$ is negative. Hence every strategy in $\Theta_{\text{F,S}}^{\left(<,<\right)}$ is strictly dominated by some strategy in $\Theta_{\text{F,S}}^{\left(=,<\right)}\cup\Theta_{\text{F,S}}^{\left(>,<\right)}\cup\Theta_{\text{F,S}}^{\left(>,=\right)}$. Thus no strategy in $\Theta_{\text{F,S}}^{\left(<,<\right)}$ can be optimal. Similar arguments show that every strategy in $\Theta_{\text{F,S}}^{\left(>,>\right)}$ is strictly dominated by some strategy in $\Theta_{\text{F,S}}^{\left(=,<\right)}\cup\Theta_{\text{F,S}}^{\left(>,<\right)}\cup\Theta_{\text{F,S}}^{\left(>,=\right)}$, and thus no strategy in $\Theta_{\text{F,S}}^{\left(>,>\right)}$ can be optimal.

Next, we show that ${\bm\theta}^{\text{UL}}=\left(p^{\text{UL}},l^{\text{UL}}\right)\in\Theta_{\text{F,S}}^{\left(=,<\right)}\cup\Theta_{\text{F,S}}^{\left(>,<\right)}\cup\Theta_{\text{F,S}}^{\left(>,=\right)}$, and that every strategy in this set lies southeast of ${\bm\theta}^{\text{UL}}$. Under the assumed configuration, there are the following three cases.
\begin{enumerate}
\item[(I)] Suppose that $\left(\underline{p},\overline{l}\right)\in\Theta_{\text{F}}^{\left(\leq\right)}$. Then, under the present ordering, $\left(\underline{p},\overline{l}\right)\in\Theta_{\text{S}}^{\left(<\right)}$. In this case, ${\bm\theta}^{\text{UL}}=\left(p^{\text{UL}},l^{\text{UL}}\right)=\left(p^*\left(\overline{l}\right),\overline{l}\right)$. Moreover, $p^*\left(\overline{l}\right)\in\left[\underline{p},\overline{p}_{\alpha}\right)$ is the constrained marginal-balance strategy on the upper boundary; if it is interior, then it satisfies $G_p\left(p^*\left(\overline{l}\right),\overline{l}\right)=0$. Thus ${\bm\theta}^{\text{UL}}\in\Theta_{\text{F,S}}^{\left(=,<\right)}\subseteq\Theta_{\text{F,S}}^{\left(=,<\right)}\cup\Theta_{\text{F,S}}^{\left(>,<\right)}\cup\Theta_{\text{F,S}}^{\left(>,=\right)}$. Let $\left(p,l\right)\in\Theta_{\text{F,S}}^{\left(=,<\right)}\cup\Theta_{\text{F,S}}^{\left(>,<\right)}\cup\Theta_{\text{F,S}}^{\left(>,=\right)}$. Clearly, $l\leq\overline{l}=l^{\text{UL}}$. If $p^{\text{UL}}=\underline{p}$, then $p^{\text{UL}}\leq p$. If $p^{\text{UL}}>\underline{p}$, suppose, to the contrary, that $p<p^{\text{UL}}$. Since ${\bm\theta}^{\text{UL}}\in\Theta_{\text{F,S}}^{\left(=,<\right)}\subseteq\Theta_{\text{F}}^{\left(=\right)}$, as well as $p<p^{\text{UL}}$ and $l\leq l^{\text{UL}}$, Lemma \ref{eq:gradient_property} implies that $\left(p,l\right)\in\Theta_{\text{F}}^{\left(<\right)}$, which contradicts $\left(p,l\right)\in\Theta_{\text{F,S}}^{\left(=,<\right)}\cup\Theta_{\text{F,S}}^{\left(>,<\right)}\cup\Theta_{\text{F,S}}^{\left(>,=\right)}$. Hence $p^{\text{UL}}\leq p$.
\item[(II)] When $\left(\underline{p},\overline{l}\right)\in\Theta_{\text{F,S}}^{\left(>,<\right)}$, ${\bm\theta}^{\text{UL}}=\left(p^{\text{UL}},l^{\text{UL}}\right)=\left(\underline{p},\overline{l}\right)$, and hence ${\bm\theta}^{\text{UL}}\in\Theta_{\text{F,S}}^{\left(>,<\right)}$. For any $\left(p,l\right)\in\Theta_{\text{F,S}}^{\left(=,<\right)}\cup\Theta_{\text{F,S}}^{\left(>,<\right)}\cup\Theta_{\text{F,S}}^{\left(>,=\right)}$, we have immediately $p^{\text{UL}}=\underline{p}\leq p$ and $l\leq\overline{l}=l^{\text{UL}}$.
\item[(III)] Suppose that $\left(\underline{p},\overline{l}\right)\in\Theta_{\text{S}}^{\left(\geq\right)}$. Then, under the present ordering, $\left(\underline{p},\overline{l}\right)\in\Theta_{\text{F}}^{\left(>\right)}$. In this case, ${\bm\theta}^{\text{UL}}=\left(p^{\text{UL}},l^{\text{UL}}\right)=\left(\underline{p},l^*\left(\underline{p}\right)\right)$. Moreover, $l^*\left(\underline{p}\right)\in\left(\underline{l},\overline{l}\right]$ is the constrained marginal-balance strategy on the left boundary; if it is interior, then it satisfies $G_l\left(\underline{p},l^*\left(\underline{p}\right)\right)=0$. Thus ${\bm\theta}^{\text{UL}}\in\Theta_{\text{F,S}}^{\left(>,=\right)}\subseteq\Theta_{\text{F,S}}^{\left(=,<\right)}\cup\Theta_{\text{F,S}}^{\left(>,<\right)}\cup\Theta_{\text{F,S}}^{\left(>,=\right)}$. Let $\left(p,l\right)\in\Theta_{\text{F,S}}^{\left(=,<\right)}\cup\Theta_{\text{F,S}}^{\left(>,<\right)}\cup\Theta_{\text{F,S}}^{\left(>,=\right)}$. Clearly, $p^{\text{UL}}=\underline{p}\leq p$. If $l^{\text{UL}}=\overline{l}$, then $l\leq l^{\text{UL}}$. If $l^{\text{UL}}<\overline{l}$, suppose, to the contrary, that $l>l^{\text{UL}}$. Since ${\bm\theta}^{\text{UL}}\in\Theta_{\text{F,S}}^{\left(>,=\right)}\subseteq\Theta_{\text{S}}^{\left(=\right)}$, as well as $p^{\text{UL}}\leq p$ and $l^{\text{UL}}<l$, Lemma \ref{eq:gradient_property} implies that $\left(p,l\right)\in\Theta_{\text{S}}^{\left(>\right)}$, which contradicts $\left(p,l\right)\in\Theta_{\text{F,S}}^{\left(=,<\right)}\cup\Theta_{\text{F,S}}^{\left(>,<\right)}\cup\Theta_{\text{F,S}}^{\left(>,=\right)}$. Hence $l\leq l^{\text{UL}}$.
\end{enumerate}

Combining the three cases, for every $\left(\tilde{p},\tilde{l}\right)\in\left(\Theta_{\text{F,S}}^{\left(=,<\right)}\cup\Theta_{\text{F,S}}^{\left(>,<\right)}\cup\Theta_{\text{F,S}}^{\left(>,=\right)}\right)\setminus\left\{{\bm\theta}^{\text{UL}}\right\}$, we have $p^{\text{UL}}\leq\tilde{p}$ and $\tilde{l}\leq l^{\text{UL}}$, with at least one inequality strict. Moreover, throughout $\Theta_{\text{F,S}}^{\left(=,<\right)}\cup\Theta_{\text{F,S}}^{\left(>,<\right)}\cup\Theta_{\text{F,S}}^{\left(>,=\right)}$, we have $G_p\geq 0$ and $G_l\leq 0$. By the separation geometry in Lemma \ref{eq:gradient_property}, there exists a monotone path from $\left(\tilde{p},\tilde{l}\right)$ to ${\bm\theta}^{\text{UL}}$, contained in $\Theta_{\text{F,S}}^{\left(=,<\right)}\cup\Theta_{\text{F,S}}^{\left(>,<\right)}\cup\Theta_{\text{F,S}}^{\left(>,=\right)}$, along which $p$ decreases and $l$ increases, with at least one coordinate changing strictly. Along this path, $\text{d}G=G_p\text{d}p+G_l\text{d}l\leq 0$, with strict inequality on a set of positive path length. Hence, by the gradient theorem, $G\left({\bm\theta}^{\text{UL}}\right)<G\left(\tilde{p},\tilde{l}\right)$. Consequently, ${\bm\theta}^{*\left(\text{L}\right)}={\bm\theta}^{\text{UL}}=\left(p^*\left(\overline{l}\right),l^*\left(\underline{p}\right)\right)$. This proves (i).

The proof of (ii) is analogous, with the roles of the upper-left and lower-right candidates reversed. Under the ordering $\Theta_{\text{F}}^{\left(=\right)}\subseteq\Theta_{\text{S}}^{\left(>\right)}$ or, equivalently, $\Theta_{\text{S}}^{\left(=\right)}\subseteq\Theta_{\text{F}}^{\left(<\right)}$, the joint sign regions that an optimal solution can lie in, after eliminating $\Theta_{\text{F,S}}^{\left(<,<\right)}$ and $\Theta_{\text{F,S}}^{\left(>,>\right)}$, are $\Theta_{\text{F,S}}^{\left(<,=\right)}\cup\Theta_{\text{F,S}}^{\left(<,>\right)}\cup\Theta_{\text{F,S}}^{\left(=,>\right)}$. The same argument shows that every strategy in this set lies northwest of ${\bm\theta}^{\text{LR}}$, and that moving southeast toward ${\bm\theta}^{\text{LR}}$ decreases $G$, with strict decrease unless the starting point is ${\bm\theta}^{\text{LR}}$ itself in the set. Hence ${\bm\theta}^{*\left(\text{L}\right)}={\bm\theta}^{\text{LR}}=\left(p^*\left(\underline{l}\right),l^*\left(\overline{p}_{\alpha}\right)\right)$. This proves (ii).
\end{proof}

Under the assumptions of Proposition \ref{prop:tvar_second_result}, the inequalities $G_p\left(\underline{p},\underline{l}\right)<0<G_p\left(\overline{p}_{\alpha},\overline{l}\right)$ and $G_l\left(\underline{p},\underline{l}\right)<0<G_l\left(\overline{p}_{\alpha},\overline{l}\right)$ ensure that both first-order isoquants cut through the rectangle $\Theta^{\left(\text{L}\right)}$. Equivalently, the sign regions $\Theta_{\text{F}}^{\left(<\right)}$, $\Theta_{\text{F}}^{\left(=\right)}$, $\Theta_{\text{F}}^{\left(>\right)}$, $\Theta_{\text{S}}^{\left(<\right)}$, $\Theta_{\text{S}}^{\left(=\right)}$, and $\Theta_{\text{S}}^{\left(>\right)}$ are all non-empty. The additional condition $\Theta_{\text{F,S}}^{\left(=,=\right)}=\emptyset$ means that the two marginal-balance isoquants do not touch or cross. Under positive net interaction, this separated configuration has only two possible orderings: $\Theta_{\text{F}}^{\left(=\right)}\subseteq\Theta_{\text{S}}^{\left(<\right)}$ and equivalently $\Theta_{\text{S}}^{\left(=\right)}\subseteq\Theta_{\text{F}}^{\left(>\right)}$, or $\Theta_{\text{F}}^{\left(=\right)}\subseteq\Theta_{\text{S}}^{\left(>\right)}$ and equivalently $\Theta_{\text{S}}^{\left(=\right)}\subseteq\Theta_{\text{F}}^{\left(<\right)}$. These two orderings are exactly the cases covered by Proposition \ref{prop:tvar_second_result}. Economically, both self-protection and self-insurance have meaningful interior marginal-balance loci, but these two loci are not jointly compatible. Thus there is no feasible strategy at which the marginal TVaR benefit and marginal cost of both activities are simultaneously balanced. The optimizer is therefore selected by one of two extreme constrained marginal-balance candidates. If the frequency marginal-balance isoquant lies below the severity marginal-balance isoquant, the upper-left candidate ${\bm\theta}^{\text{UL}}$ in \eqref{eq:theta_UL} is optimal; if it lies above, the lower-right candidate ${\bm\theta}^{\text{LR}}$ in \eqref{eq:theta_LR} is optimal. In economic terms, the RH chooses the extreme combination that best reconciles the incompatible marginal incentives.

The next case occurs when $\Theta_{\text{F,S}}^{\left(=,=\right)}\neq\emptyset$, so that the two downward-sloping isoquants touch or cross inside the left-region rectangle $\Theta^{\left(\text{L}\right)}$. We now study this touching-or-crossing case in detail. Suppose that $\Theta_{\text{F,S}}^{\left(=,=\right)}=\mathcal{I}_{\text{F}}\cap\mathcal{I}_{\text{S}}\neq\emptyset$. Then there exists at least one feasible strategy $\left(p,l\right)\in\Theta^{\left(\text{L}\right)}$ at which both marginal-balance conditions hold: $G_p\left(p,l\right)=0$ and $G_l\left(p,l\right)=0$. Geometrically, such points are precisely the intersection points, or common components, of the frequency and severity marginal-balance isoquants.

When the two isoquants intersect, their common set may take different forms. They may cross at isolated points, coincide over non-degenerate intervals, or touch without changing their relative vertical order. These distinctions matter for the global solution of the left-region problem. A crossing changes the relative ordering of the two marginal-balance curves and therefore changes the local descent geometry of the objective. A touching component, by contrast, produces a common marginal-balance component without reversing the relative ordering of the two isoquants.

To describe these possibilities, we work with graph representations of the two isoquants. Whenever the isoquants are non-empty and can be represented as graphs, write
\begin{equation*}
\mathcal{I}_{\text{F}}=\left\{\left(p,g_{\text{F}}\left(p\right)\right)\in\Theta^{\left(\text{L}\right)}:p\in\mathcal{P}_{\text{F}}\right\},
\end{equation*}
and
\begin{equation*}
\mathcal{I}_{\text{S}}=\left\{\left(p,g_{\text{S}}\left(p\right)\right)\in\Theta^{\left(\text{L}\right)}:p\in\mathcal{P}_{\text{S}}\right\},
\end{equation*}
where $\mathcal{P}_{\text{F}}$ and $\mathcal{P}_{\text{S}}$ are subintervals of $\left[\underline{p},\overline{p}_{\alpha}\right]$. The functions $g_{\text{F}}$ and $g_{\text{S}}$ are implicitly defined by
\begin{equation*}
G_p\left(p,g_{\text{F}}\left(p\right)\right)=0,\quad p\in\mathcal{P}_{\text{F}},
\end{equation*}
and
\begin{equation*}
G_l\left(p,g_{\text{S}}\left(p\right)\right)=0,\quad p\in\mathcal{P}_{\text{S}}.
\end{equation*}
Under the positive net interaction assumption $G_{pl}>0$, by Lemma \ref{eq:gradient_property}, both $g_{\text{F}}$ and $g_{\text{S}}$ are strictly decreasing on their respective domains.

Let $\mathcal{P}=\mathcal{P}_{\text{F}}\cap\mathcal{P}_{\text{S}}$ be the common $p$-domain on which the two isoquants can be compared vertically. Since $\Theta_{\text{F,S}}^{\left(=,=\right)}=\mathcal{I}_{\text{F}}\cap\mathcal{I}_{\text{S}}\neq\emptyset$, we have $\mathcal{P}\neq\emptyset$. Write $\mathcal{P}=\left[\underline{p}_{\cap},\overline{p}_{\cap}\right]$. On this common domain, the relative position of the two isoquants is determined by the sign of the difference function
\begin{equation}
\Delta\left(p\right)=g_{\text{F}}\left(p\right)-g_{\text{S}}\left(p\right).
\label{eq:Delta_function}
\end{equation}
The equality set
\begin{equation}
\mathcal{P}_0=\left\{p\in\mathcal{P}:g_{\text{F}}\left(p\right)=g_{\text{S}}\left(p\right)\right\}=\left\{p\in\mathcal{P}:\Delta\left(p\right)=0\right\}
\label{eq:equality_set}
\end{equation}
is the $p$-projection of the non-empty common marginal-balance set $\Theta_{\text{F,S}}^{\left(=,=\right)}=\mathcal{I}_{\text{F}}\cap\mathcal{I}_{\text{S}}$. More precisely,
\begin{equation*}
\Theta_{\text{F,S}}^{\left(=,=\right)}=\mathcal{I}_{\text{F}}\cap\mathcal{I}_{\text{S}}=\left\{\left(p,g_{\text{F}}\left(p\right)\right)\in\Theta^{\left(\text{L}\right)}:p\in\mathcal{P}_0\right\}=\left\{\left(p,g_{\text{S}}\left(p\right)\right)\in\Theta^{\left(\text{L}\right)}:p\in\mathcal{P}_0\right\}.
\end{equation*}
The following definition formalizes what it means for the two isoquants to cross or touch on components of $\mathcal{P}_0$.

\begin{definition}\label{def:crossing_touching}
Let $\left[a,b\right]\subseteq\mathcal{P}_0$ be a non-empty connected component of the equality set $\mathcal{P}_0$. Thus, $\Delta\left(p\right)=0$, or equivalently, $g_{\text{F}}\left(p\right)=g_{\text{S}}\left(p\right)$, for all $p\in\left[a,b\right]$.
\begin{enumerate}
\item[(i)] The two isoquants $\mathcal{I}_{\text{F}}$ and $\mathcal{I}_{\text{S}}$ cross each other on the component $\left[a,b\right]$ if $\underline{p}_{\cap}<a\leq b<\overline{p}_{\cap}$, and there exists $\varepsilon>0$ such that, for any $p_L\in\left(a-\varepsilon,a\right)\cap\left[\underline{p}_{\cap},a\right)$ and any $p_R\in\left(b,b+\varepsilon\right)\cap\left(b,\overline{p}_{\cap}\right]$, we have $\Delta\left(p_L\right)\Delta\left(p_R\right)<0$. That is, the relative vertical order of the two isoquants changes when passing from the left side of $\left[a,b\right]$ to the right side.
\item[(ii)] The two isoquants $\mathcal{I}_{\text{F}}$ and  $\mathcal{I}_{\text{S}}$ touch each other on the component $\left[a,b\right]$ if either (I) $a=\underline{p}_{\cap}$, or (II) $b=\overline{p}_{\cap}$, or (III) $\underline{p}_{\cap}<a\leq b<\overline{p}_{\cap}$ and there exists $\varepsilon>0$ such that, for any $p_L\in\left(a-\varepsilon,a\right)\cap\left[\underline{p}_{\cap},a\right)$ and any $p_R\in\left(b,b+\varepsilon\right)\cap\left(b,\overline{p}_{\cap}\right]$, we have $\Delta\left(p_L\right)\Delta\left(p_R\right)>0$. That is, the two isoquants share the component $\left[a,b\right]$ without reversing their relative vertical order; a touching component may therefore attach to the left endpoint of $\mathcal{P}$, attach to the right endpoint of $\mathcal{P}$, or appear in the interior of $\mathcal{P}$.
\end{enumerate}
\end{definition}

\begin{lemma}\label{cross_touch_lemma}
Assume that $\Theta^{\left(\text{L}\right)}\neq\emptyset$, $G_{pl}>0$ on $\Theta^{\left(\text{L}\right)}$, and $\Theta_{\text{F,S}}^{\left(=,=\right)}=\mathcal{I}_{\text{F}}\cap\mathcal{I}_{\text{S}}\neq\emptyset$. Suppose that the equality set $\mathcal{P}_0$ in \eqref{eq:equality_set} has finitely many connected components. Then there exist $m,n_1,n_2,\dots,n_{m+1}\in\mathbb{N}_0=\mathbb{N}\cup\left\{0\right\}$, with at least one of them belonging to $\mathbb{N}$, such that the connected components of $\mathcal{P}_0$ can be decomposed into crossing components and touching components as follows.

If $m\in\mathbb{N}$, there are crossing intervals $\left[p^{\left(\text{C}\right)}_{i,1},p^{\left(\text{C}\right)}_{i,2}\right]\subseteq\mathcal{P}_0$, $i=1,2,\dots,m$. For each $i=1,2,\dots,m+1$, if $n_i\in\mathbb{N}$, there are touching intervals $\left[p^{\left(\text{T}\right)}_{i,j,1},p^{\left(\text{T}\right)}_{i,j,2}\right]\subseteq\mathcal{P}_0$, $j=1,2,\dots,n_i$. These components are ordered from left to right along the $p$-axis in the following way:
\begin{equation*}
\mathcal{T}_1,\mathcal{C}_1,\mathcal{T}_2,\mathcal{C}_2,\dots,\mathcal{T}_m,\mathcal{C}_m,\mathcal{T}_{m+1},
\end{equation*}
where $\mathcal{C}_i=\left[p^{\left(\text{C}\right)}_{i,1},p^{\left(\text{C}\right)}_{i,2}\right]$, $i=1,2,\dots,m$, and $\mathcal{T}_i=\cup_{j=1}^{n_i}\left[p^{\left(\text{T}\right)}_{i,j,1},p^{\left(\text{T}\right)}_{i,j,2}\right]$, $i=1,2,\dots,m+1$. Within each non-empty touching block $\mathcal{T}_i$, the touching intervals are ordered as 
\begin{equation*}
p^{\left(\text{T}\right)}_{i,1,1}\leq p^{\left(\text{T}\right)}_{i,1,2}<p^{\left(\text{T}\right)}_{i,2,1}\leq p^{\left(\text{T}\right)}_{i,2,2}<\cdots<p^{\left(\text{T}\right)}_{i,n_i,1}\leq p^{\left(\text{T}\right)}_{i,n_i,2}.
\end{equation*}
The first touching block $\mathcal{T}_1$, if non-empty, lies before the first crossing interval; the last touching block $\mathcal{T}_{m+1}$, if non-empty, lies after the last crossing interval. Empty touching blocks and crossing blocks are omitted whenever the corresponding indices do not exist.

To summarize, the crossing and touching components satisfy the following properties.
\begin{enumerate}
\item[(i)] The two isoquants $\mathcal{I}_{\text{F}}$ and  $\mathcal{I}_{\text{S}}$ cross each other on each crossing interval $\left[p^{\left(\text{C}\right)}_{i,1},p^{\left(\text{C}\right)}_{i,2}\right]$, $i=1,2,\dots,m$.
\item[(ii)] The two isoquants $\mathcal{I}_{\text{F}}$ and  $\mathcal{I}_{\text{S}}$ touch each other on each touching interval $\left[p^{\left(\text{T}\right)}_{i,j,1},p^{\left(\text{T}\right)}_{i,j,2}\right]$, $i=1,2,\dots,m+1$, $j=1,2,\dots,n_i$.
\item[(iii)] On every connected component of $\mathcal{P}\setminus\mathcal{P}_0$, the sign of the difference function $\Delta$ in \eqref{eq:Delta_function} is constant. Equivalently, on each such interval, either $g_{\text{F}}>g_{\text{S}}$ or $g_{\text{F}}<g_{\text{S}}$.
\item[(iv)] The sign of $\Delta$ in \eqref{eq:Delta_function} reverses when passing through a crossing interval and does not reverse when passing through a touching interval. Hence the relative vertical order of the two marginal-balance isoquants changes across crossing components and is preserved across touching components.
\end{enumerate}
\end{lemma}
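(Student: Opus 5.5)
The plan is to work entirely with the continuous difference function $\Delta=g_{\text{F}}-g_{\text{S}}$ on the compact interval $\mathcal{P}=\left[\underline{p}_{\cap},\overline{p}_{\cap}\right]$. First I establish continuity of $g_{\text{F}}$ and $g_{\text{S}}$ on their domains. By Lemma \ref{eq:gradient_property}, $G_p$ is strictly increasing in $l$, so for each $p$ there is at most one $l$ with $G_p(p,l)=0$. Since $G_{pl}>0$, the implicit function theorem gives continuity in the interior, and the one-sided derivative assumptions give continuity up to the boundary. The same argument applies to $g_{\text{S}}$. Hence $\Delta$ is continuous, $\mathcal{P}_0=\Delta^{-1}(0)$ is closed in $\mathcal{P}$, and every connected component of $\mathcal{P}_0$ is a closed interval $[a,b]$, possibly a single point. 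By hypothesis there are finitely many such components, say $K\geq 1$. I label them $[a_1,b_1],\dots,[a_K,b_K]$ with $b_k<a_{k+1}$. The strict inequality holds because distinct closed components of a closed subset of an interval are separated by a gap.

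Next comes property (iii). The complement $\mathcal{P}\setminus\mathcal{P}_0$ is a finite union of relatively open intervals: the gaps $(b_k,a_{k+1})$, together with possibly $[\underline{p}_{\cap},a_1)$ and $(b_K,\overline{p}_{\cap}]$. On each gap $\Delta$ is continuous and never zero, so by the intermediate value theorem it has constant sign.

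With (iii) in hand, Definition \ref{def:crossing_touching} classifies every component. Components attached to $\underline{p}_{\cap}$ or $\overline{p}_{\cap}$ are touching by clauses (I) and (II). For an interior component $[a_k,b_k]$, take $\varepsilon$ smaller than the lengths of the adjacent gaps. Then $\Delta(p_L)$ has the constant sign of the left gap and $\Delta(p_R)$ the constant sign of the right gap. Their product is therefore nonzero and of fixed sign: negative gives crossing, positive gives touching. This yields (i), (ii) and (iv) essentially by definition.

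Finally I build the blocks. Let $m$ be the number of crossing components and denote them $\mathcal{C}_1,\dots,\mathcal{C}_m$ from left to right. Let $\mathcal{T}_i$ collect the touching components lying strictly between $\mathcal{C}_{i-1}$ and $\mathcal{C}_i$, with the conventions $\mathcal{C}_0=-\infty$ and $\mathcal{C}_{m+1}=+\infty$. Set $n_i$ to be their number. The within-block ordering inequalities follow from $b_k<a_{k+1}$. Since $K\geq1$, at least one of $m,n_1,\dots,n_{m+1}$ is positive.

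The main obstacle I expect is the first step. The paper only posits graph representations "whenever they exist," so I must argue carefully that $\mathcal{P}_{\text{F}}$ and $\mathcal{P}_{\text{S}}$ are intervals and that $\mathcal{P}$ is a nonempty closed interval. Nonemptiness follows from $\mathcal{I}_{\text{F}}\cap\mathcal{I}_{\text{S}}\neq\emptyset$. For the interval property, I would use monotonicity of $G_p$ in $l$: the set of $p$ with $G_p(p,\underline{l})\leq0\leq G_p(p,\overline{l})$ is an interval because $G_p$ is also increasing in $p$, and it is closed by continuity of $G_p$. I will then treat the degenerate case $\underline{p}_{\cap}=\overline{p}_{\cap}$ separately: there the single component touches by clause (I), giving $m=0$ and $n_1=1$.
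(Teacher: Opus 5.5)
Your proposal is correct and follows essentially the same route as the paper. In both, continuity of $\Delta=g_{\text{F}}-g_{\text{S}}$ makes $\mathcal{P}_0$ closed with finitely many closed-interval components, $\Delta$ has constant sign on each gap, and Definition \ref{def:crossing_touching} then sorts the components into crossing and touching ones, grouped as $\mathcal{T}_1,\mathcal{C}_1,\dots,\mathcal{C}_m,\mathcal{T}_{m+1}$.

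Your version is more careful than the paper's in three respects:
\begin{enumerate}
\item[(a)] You justify continuity of $g_{\text{F}}$ and $g_{\text{S}}$.
\item[(b)] You show that $\mathcal{P}$ is a closed interval.
\item[(c)] You prove (iii) before invoking the definition. This is what actually guarantees that every interior component is either crossing or touching; the paper asserts this dichotomy ``by Definition'' before establishing sign constancy on the gaps.
\end{enumerate}
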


\begin{proof}
Since the functions $g_{\text{F}}$ and $g_{\text{S}}$ are continuous on their respective domains, the difference function $\Delta=g_{\text{F}}-g_{\text{S}}$ is continuous on the common domain $\mathcal{P}$. Therefore, the equality set $\mathcal{P}_0$ is closed in $\mathcal{P}$. By assumption, $\mathcal{P}_0$ has finitely many connected components. Since $\mathcal{P}_0\neq\emptyset$, at least one such component exists, and each component is a closed interval, possibly degenerate.

By Definition \ref{def:crossing_touching}, each connected component of $\mathcal{P}_0$ is either a crossing component or a touching component. Ordering these components from left to right along the $p$-axis gives the crossing intervals $\left[p^{\left(\text{C}\right)}_{i,1},p^{\left(\text{C}\right)}_{i,2}\right]$, $i=1,2,\dots,m$, and the touching intervals $\left[p^{\left(\text{T}\right)}_{i,j,1},p^{\left(\text{T}\right)}_{i,j,2}\right]$, $i=1,2,\dots,m+1$, $j=1,2,\dots,n_i$. This proves the stated ordering.

On each connected component of $\mathcal{P}\setminus\mathcal{P}_0$, the function $\Delta$ is continuous and never equal to zero. Hence $\Delta$ has a constant sign on each such component. This proves (iii).

Finally, by Definition \ref{def:crossing_touching}, a crossing component is precisely a common component across which the sign of $\Delta$ changes, whereas a touching component is a common component across which the sign of $\Delta$ does not change, or one that attaches to an endpoint of the common domain $\mathcal{P}$. This proves (i), (ii), and (iv).
\end{proof}

For later use, we translate intervals in the equality set $\mathcal{P}_0$ in \eqref{eq:equality_set} into their corresponding common marginal-balance components in the feasible set $\Theta^{\left(\text{L}\right)}$. For any set $A\subseteq\mathcal{P}_0$, define its lifted common-isoquant component by
\begin{equation*}
\mathcal{I}\left(A\right)=\left\{\left(p,g_{\text{F}}\left(p\right)\right)\in\mathcal{I}_{\text{F}}\cap\mathcal{I}_{\text{S}}:p\in A\right\}=\left\{\left(p,g_{\text{S}}\left(p\right)\right)\in\mathcal{I}_{\text{F}}\cap\mathcal{I}_{\text{S}}:p\in A\right\}.
\end{equation*}
If $A=\emptyset$, we use the convention that $\mathcal{I}\left(A\right)=\emptyset$. For any $\left(p,l\right)\in\mathcal{I}\left(A\right)$, both first-order conditions $G_p\left(p,l\right)=0$ and $G_l\left(p,l\right)=0$ are satisfied. Therefore, if $A$ is a connected interval, such as a crossing interval or a touching interval, the gradient of $G$ is zero at every strategy of $\mathcal{I}\left(A\right)$, and hence $G$ is constant on $\mathcal{I}\left(A\right)$. This observation will be used in Propositions \ref{prop:tvar_third_result} and \ref{prop:tvar_crossing} below.

With this notation, the crossing and touching parts of the common marginal-balance set are
\begin{equation*}
\mathcal{I}^{\left(\text{C}\right)}=\cup_{i=1}^{m}\mathcal{I}\left(\left[p^{\left(\text{C}\right)}_{i,1},p^{\left(\text{C}\right)}_{i,2}\right]\right),
\end{equation*}
and
\begin{equation*}
\mathcal{I}^{\left(\text{T}\right)}=\cup_{i=1}^{m+1}\cup_{j=1}^{n_i}\mathcal{I}\left(\left[p^{\left(\text{T}\right)}_{i,j,1},p^{\left(\text{T}\right)}_{i,j,2}\right]\right),
\end{equation*}
where unions over empty index sets are interpreted as empty. Hence
\begin{equation*}
\mathcal{I}\left(\mathcal{P}_0\right)=\mathcal{I}_{\text{F}}\cap\mathcal{I}_{\text{S}}=\Theta_{\text{F,S}}^{\left(=,=\right)}=\mathcal{I}^{\left(\text{C}\right)}\cup\mathcal{I}^{\left(\text{T}\right)}.
\end{equation*}

The ordering of the intervals in Lemma \ref{cross_touch_lemma} induces the same left-to-right ordering of the lifted components in $\Theta^{\left(\text{L}\right)}$. Crossing components are precisely those common marginal-balance components across which the sign of the difference function $\Delta$ in \eqref{eq:Delta_function} reverses. Touching components are those common marginal-balance components across which the sign of $\Delta$ in \eqref{eq:Delta_function} does not reverse, or those that attach to one of the endpoints of the common comparison domain $\mathcal{P}$.

For later reference, define the upper-left endpoint candidate set by
\begin{equation}
\Theta^{\text{UL}}
=
\begin{cases}
\mathcal{I}\left(\left[p^{\left(\text{T}\right)}_{1,1,1},p^{\left(\text{T}\right)}_{1,1,2}\right]\right),&\text{if }n_1\in\mathbb{N}\text{ and }\underline{p}_{\cap}=p^{\left(\text{T}\right)}_{1,1,1},\\
\left\{{\bm\theta}^{\text{UL}}\right\},&\text{otherwise},
\end{cases}
\label{eq:ThetaUL}
\end{equation}
and define the lower-right endpoint candidate set by
\begin{equation}
\Theta^{\text{LR}}
=
\begin{cases}
\mathcal{I}\left(\left[p^{\left(\text{T}\right)}_{m+1,n_{m+1},1},p^{\left(\text{T}\right)}_{m+1,n_{m+1},2}\right]\right),
&
\text{if }n_{m+1}\in\mathbb{N}\text{ and }p^{\left(\text{T}\right)}_{m+1,n_{m+1},2}=\overline{p}_{\cap},\\
\left\{{\bm\theta}^{\text{LR}}\right\},&\text{otherwise}.
\end{cases}
\label{eq:ThetaLR}
\end{equation}
When an endpoint touching component attaches to the corresponding end of $\mathcal{P}$, the endpoint candidate is that entire touching component; otherwise, it is the singleton constrained marginal-balance candidate defined in \eqref{eq:theta_UL} and \eqref{eq:theta_LR}.

We first consider the case in which the two isoquants touch but do not cross. By Lemma \ref{cross_touch_lemma}, this corresponds to $m=0$, $n_1\in\mathbb{N}$, and all connected components of $\mathcal{P}_0$ are touching components $\left[p^{\left(\text{T}\right)}_{1,j,1},p^{\left(\text{T}\right)}_{1,j,2}\right]$, $j=1,2,\dots,n_1$.

\begin{proposition}\label{prop:tvar_third_result}
Assume that $\Theta^{\left(\text{L}\right)}\neq\emptyset$, $G_{pl}>0$ on $\Theta^{\left(\text{L}\right)}$, and $\Theta_{\text{F,S}}^{\left(=,=\right)}=\mathcal{I}_{\text{F}}\cap\mathcal{I}_{\text{S}}\neq\emptyset$. Suppose that the equality set $\mathcal{P}_0$ in \eqref{eq:equality_set} has finitely many connected components. Suppose that $m=0$ in Lemma \ref{cross_touch_lemma}. Then the following statements hold.

\begin{enumerate}
\item[(i)] If $\mathcal{P}\setminus\mathcal{T}_1\neq\emptyset$ and $\Delta\left(p\right)<0$ for all $p\in\mathcal{P}\setminus\mathcal{T}_1$, or equivalently, if $\mathcal{I}_{\text{F}}$ lies below $\mathcal{I}_{\text{S}}$ on every non-touching subinterval of $\mathcal{P}$, then the set of optimal solutions of the left-region problem \eqref{eq:TVaR_problem<=} is
\begin{equation}
\argmin_{\left(p,l\right)\in\Theta^{\left(\text{L}\right)}}G\left(p,l\right)=\Theta^{\text{UL}}.
\label{prop:tvar_third_result_result1}
\end{equation}
\item[(ii)] If $\mathcal{P}\setminus\mathcal{T}_1\neq\emptyset$ and $\Delta\left(p\right)>0$ for all $p\in\mathcal{P}\setminus\mathcal{T}_1$, or equivalently, if $\mathcal{I}_{\text{F}}$ lies above $\mathcal{I}_{\text{S}}$ on every non-touching subinterval of $\mathcal{P}$, then the set of optimal solutions of the left-region problem \eqref{eq:TVaR_problem<=} is
\begin{equation}
\argmin_{\left(p,l\right)\in\Theta^{\left(\text{L}\right)}}G\left(p,l\right)=\Theta^{\text{LR}}.
\label{prop:tvar_third_result_result2}
\end{equation}
\item[(iii)] If $\Delta\left(p\right)=0$ for all $p\in\mathcal{P}$, or equivalently, if $\mathcal{P}_0=\mathcal{P}$, then the set of optimal solutions of the left-region problem \eqref{eq:TVaR_problem<=} is
\begin{equation}
\argmin_{\left(p,l\right)\in\Theta^{\left(\text{L}\right)}}G\left(p,l\right)=\mathcal{I}\left(\mathcal{P}_0\right)=\mathcal{I}_{\text{F}}\cap\mathcal{I}_{\text{S}}=\Theta_{\text{F,S}}^{\left(=,=\right)}.
\label{prop:tvar_third_result_result3}
\end{equation}
\end{enumerate}
\end{proposition}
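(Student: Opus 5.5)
The plan is to reuse the domination argument in the proof of Proposition \ref{prop:tvar_second_result}, now allowing the isoquants to share touching components. The approach has two parts. First, strategies outside the relevant ``ordered'' region are strictly dominated. Second, every remaining strategy is joined to the endpoint candidate set by a monotone path along which $G$ weakly decreases. Existence of a minimizer is automatic, since $G$ is continuous on the compact rectangle $\Theta^{\left(\text{L}\right)}$. Hence it suffices to show that every point outside the claimed set is strictly dominated, and that $G$ is constant on the claimed set. Constancy holds because each lifted component $\mathcal{I}\left(A\right)$ of an interval $A$ has zero gradient.

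For (i), assume $\Delta<0$ off $\mathcal{T}_1$, so $\mathcal{I}_{\text{F}}$ lies weakly below $\mathcal{I}_{\text{S}}$ on $\mathcal{P}$, with equality only on touching intervals. By Lemma \ref{eq:gradient_property}, this gives the weak analogue of the configuration in Proposition \ref{prop:tvar_second_result}(i): $\Theta_{\text{F}}^{\left(=\right)}\subseteq\Theta_{\text{S}}^{\left(\leq\right)}$ and $\Theta_{\text{S}}^{\left(=\right)}\subseteq\Theta_{\text{F}}^{\left(\geq\right)}$. Consequently $\Theta_{\text{F,S}}^{\left(<,>\right)}=\emptyset$. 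Points in $\Theta_{\text{F,S}}^{\left(<,<\right)}$ or $\Theta_{\text{F,S}}^{\left(>,>\right)}$ are strictly dominated by a northeast or southwest move, exactly as before. Every remaining point lies in $D=\Theta_{\text{F}}^{\left(\geq\right)}\cap\Theta_{\text{S}}^{\left(\leq\right)}$.

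Next I would identify the northwest-most part of $D$. If the first touching interval starts at $\underline{p}_{\cap}$, it equals the lifted component in $\Theta^{\text{UL}}$. Otherwise it is the point ${\bm\theta}^{\text{UL}}$, by the same three-case analysis (I)--(III) used before. In the first subcase the boundary minimizers $p^*\left(\overline{l}\right)$ and $l^*\left(\underline{p}\right)$ both land on the common component, since the common component reaches the boundary there. From any point of $D$ off $\Theta^{\text{UL}}$, I would move northwest (decreasing $p$, increasing $l$) inside $D$. Along such a path $\text{d}G=G_p\text{d}p+G_l\text{d}l\leq 0$. The decrease is strict on a set of positive length unless the path stays within $\Theta_{\text{F,S}}^{\left(=,=\right)}$. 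Interior touching components reached in this way are passed through rather than stopped at: moving northwest from an interior touching component enters a region where $\Delta<0$. Near the common point on that side, $D$ contains a strip between the two isoquants in which $G_p>0$ or $G_l<0$, so $G$ strictly decreases further. Hence interior touching components, and all other points of $D\setminus\Theta^{\text{UL}}$, are strictly dominated, and the argmin equals $\Theta^{\text{UL}}$. Part (ii) follows by the mirror argument, moving southeast toward $\Theta^{\text{LR}}$.

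For (iii), $\mathcal{P}_0=\mathcal{P}$, so the isoquants coincide on their common $p$-domain. Outside $\mathcal{P}$, at most one isoquant lives over a given $p$, and points there lie in strict sign regions. Every point outside $\mathcal{I}\left(\mathcal{P}_0\right)$ is then either in $\Theta_{\text{F,S}}^{\left(<,<\right)}\cup\Theta_{\text{F,S}}^{\left(>,>\right)}$, hence strictly dominated, or in a mixed region. A mixed region would be bounded on one side by the common curve, and one moves monotonically toward that curve with strictly decreasing $G$. Since $G$ is constant on the connected set $\mathcal{I}\left(\mathcal{P}_0\right)$, that set is the argmin.

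The main obstacle is making the path construction rigorous. Specifically, I must show that a monotone northwest path from an arbitrary point of $D$ to $\Theta^{\text{UL}}$ exists and stays within $D$. I must also show that the decrease is strict despite passing through interior touching components. I would handle this with the graph representations $g_{\text{F}},g_{\text{S}}$ and a staircase path: alternately move vertically up to $\mathcal{I}_{\text{S}}$ and then left to $\mathcal{I}_{\text{F}}$. The monotonicity of $g_{\text{F}},g_{\text{S}}$ keeps this path in $D$, and each step strictly lowers $G$ unless the path is already on the common set.
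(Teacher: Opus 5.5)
Your proposal is correct and follows the paper's own argument. You rule out $\Theta_{\text{F,S}}^{(<,<)}\cup\Theta_{\text{F,S}}^{(>,>)}$ by local descent, then use a monotone path and the gradient theorem to push every remaining point between the isoquants northwest (resp.\ southeast) to $\Theta^{\text{UL}}$ (resp.\ $\Theta^{\text{LR}}$), on which $G$ is constant; your explicit handling of interior touching components fills in what the paper leaves implicit.

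One technical caveat: the vertical/horizontal staircase only accumulates at an interior touching point and cannot be continued from it, since moving up or left from there leaves $D$ and raises $G$. Past such a point, follow the isoquant itself instead, e.g.\ $p\mapsto\left(p,g_{\text{F}}\left(p\right)\right)$ with $p$ decreasing, along which $\text{d}G=G_l\,\text{d}l<0$ because $g_{\text{F}}<g_{\text{S}}$ there.
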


\begin{proof}
We prove (i). Suppose that $\mathcal{P}\setminus\mathcal{T}_1\neq\emptyset$ and $\Delta\left(p\right)<0$ for all $p\in\mathcal{P}\setminus\mathcal{T}_1$. Under this configuration, the feasible rectangle is decomposed as
\begin{equation*}
\Theta^{\left(\text{L}\right)}=\Theta_{\text{F,S}}^{\left(<,<\right)}\cup\Theta_{\text{F,S}}^{\left(=,<\right)}\cup\Theta_{\text{F,S}}^{\left(=,=\right)}\cup\Theta_{\text{F,S}}^{\left(>,<\right)}\cup\Theta_{\text{F,S}}^{\left(>,=\right)}\cup\Theta_{\text{F,S}}^{\left(>,>\right)}.
\end{equation*}
Here $\Theta_{\text{F,S}}^{\left(=,=\right)}=\mathcal{I}_{\text{F}}\cap\mathcal{I}_{\text{S}}=\mathcal{I}\left(\mathcal{P}_0\right)$ is non-empty and consists only of touching components, while the joint sign regions $\Theta_{\text{F,S}}^{\left(<,=\right)}$, $\Theta_{\text{F,S}}^{\left(<,>\right)}$, $\Theta_{\text{F,S}}^{\left(=,>\right)}$ are empty. By the same separation-geometry argument used in the proof of Proposition \ref{prop:tvar_second_result}, no strategy in $\Theta_{\text{F,S}}^{\left(<,<\right)}\cup\Theta_{\text{F,S}}^{\left(>,>\right)}$ can be optimal. Thus any optimal solution must lie in $\Theta_{\text{F,S}}^{\left(=,<\right)}\cup\Theta_{\text{F,S}}^{\left(=,=\right)}\cup\Theta_{\text{F,S}}^{\left(>,<\right)}\cup\Theta_{\text{F,S}}^{\left(>,=\right)}$.

Consider first the case $\underline{p}_{\cap}<p^{\left(\text{T}\right)}_{1,1,1}$. Then the first touching component does not attach to the left endpoint of the common comparison domain $\mathcal{P}$. The same separation-geometry argument, together with the gradient theorem, as in Proposition \ref{prop:tvar_second_result}, shows that every strategy in $\left(\Theta_{\text{F,S}}^{\left(=,<\right)}\cup\Theta_{\text{F,S}}^{\left(=,=\right)}\cup\Theta_{\text{F,S}}^{\left(>,<\right)}\cup\Theta_{\text{F,S}}^{\left(>,=\right)}\right)\setminus\left\{{\bm\theta}^{\text{UL}}\right\}$ has strictly larger objective value than ${\bm\theta}^{\text{UL}}\in\left(\Theta_{\text{F,S}}^{\left(=,<\right)}\cup\Theta_{\text{F,S}}^{\left(>,<\right)}\cup\Theta_{\text{F,S}}^{\left(>,=\right)}\right)$. Hence \eqref{prop:tvar_third_result_result1} holds with $\Theta^{\text{UL}}=\left\{{\bm\theta}^{\text{UL}}\right\}$.

Next consider the case $\underline{p}_{\cap}=p^{\left(\text{T}\right)}_{1,1,1}$. Then the first touching component attaches to the left endpoint of $\mathcal{P}$. The objective $G$ is constant on $\mathcal{I}\left(\left[p^{\left(\text{T}\right)}_{1,1,1},p^{\left(\text{T}\right)}_{1,1,2}\right]\right)$. Again, the same separation-geometry argument, together with the gradient theorem, shows that every strategy in $\left(\Theta_{\text{F,S}}^{\left(=,<\right)}\cup\Theta_{\text{F,S}}^{\left(=,=\right)}\cup\Theta_{\text{F,S}}^{\left(>,<\right)}\cup\Theta_{\text{F,S}}^{\left(>,=\right)}\right)\setminus\mathcal{I}\left(\left[p^{\left(\text{T}\right)}_{1,1,1},p^{\left(\text{T}\right)}_{1,1,2}\right]\right)$ has objective value strictly larger than the common value attained on this component $\mathcal{I}\left(\left[p^{\left(\text{T}\right)}_{1,1,1},p^{\left(\text{T}\right)}_{1,1,2}\right]\right)\subseteq\Theta_{\text{F,S}}^{\left(=,=\right)}$. Hence \eqref{prop:tvar_third_result_result1} holds with $\Theta^{\text{UL}}=\mathcal{I}\left(\left[p^{\left(\text{T}\right)}_{1,1,1},p^{\left(\text{T}\right)}_{1,1,2}\right]\right)$.
This proves (i).

The proof of (ii) is analogous. Suppose that $\mathcal{P}\setminus\mathcal{T}_1\neq\emptyset$ and $\Delta\left(p\right)>0$ for all $p\in\mathcal{P}\setminus\mathcal{T}_1$. Under this configuration, any optimal solution must lie in $\Theta_{\text{F,S}}^{\left(<,=\right)}\cup\Theta_{\text{F,S}}^{\left(<,>\right)}\cup\Theta_{\text{F,S}}^{\left(=,=\right)}\cup\Theta_{\text{F,S}}^{\left(=,>\right)}$. If the last touching component does not attach to the right endpoint of $\mathcal{P}$, the separation-geometry argument, together with the gradient theorem, selects the lower-right constrained candidate ${\bm\theta}^{\text{LR}}$. If the last touching component attaches to the right endpoint of $\mathcal{P}$, the entire last touching component $\mathcal{I}\left(\left[p^{\left(\text{T}\right)}_{1,n_1,1},p^{\left(\text{T}\right)}_{1,n_1,2}\right]\right)$ is optimal. Therefore, \eqref{prop:tvar_third_result_result2} holds. This proves (ii).

Finally, suppose that $\Delta\left(p\right)=0$ for all $p\in\mathcal{P}$. Then $\mathcal{P}_0=\mathcal{P}$. Under this configuration, the feasible rectangle is decomposed as
\begin{equation*}
\Theta^{\left(\text{L}\right)}=\Theta_{\text{F,S}}^{\left(<,<\right)}\cup\Theta_{\text{F,S}}^{\left(=,=\right)}\cup\Theta_{\text{F,S}}^{\left(>,>\right)}.
\end{equation*}
Again, no strategy in $\Theta_{\text{F,S}}^{\left(<,<\right)}\cup\Theta_{\text{F,S}}^{\left(>,>\right)}$ can be optimal. Thus any optimal solution must lie in $\Theta_{\text{F,S}}^{\left(=,=\right)}=\mathcal{I}_{\text{F}}\cap\mathcal{I}_{\text{S}}=\mathcal{I}\left(\mathcal{P}_0\right)$. The objective $G$ is constant on $\mathcal{I}\left(\mathcal{P}_0\right)$, and thus \eqref{prop:tvar_third_result_result3} holds. This proves (iii).
\end{proof}

Proposition \ref{prop:tvar_third_result} treats the case in which the two isoquants touch but do not cross. Economically, the two risk-reduction margins become jointly balanced on a common component, but the relative ordering of the marginal-balance curves does not reverse. A touching component therefore behaves like an endpoint extension of the separated-isoquant case. If the touching component attaches to the upper-left side of the relevant geometry, the upper-left strategy is replaced by an entire upper-left touching component. Similarly, if it attaches to the lower-right side, the lower-right strategy is replaced by an entire lower-right touching component. If the two isoquants coincide over the whole common domain, then every strategy on the common marginal-balance set is optimal, reflecting complete local agreement between the self-protection and self-insurance marginal conditions.

We next consider the case in which the two isoquants cross at least once. By Lemma \ref{cross_touch_lemma}, this corresponds to $m\in\mathbb{N}$, and $n_1,n_2,\dots,n_{m+1}\in\mathbb{N}_0$. In this case, for $i=1,2,\dots,m+1$, when $n_i\in\mathbb{N}$, touching components may still occur between crossing components, or attach to the two ends of the common comparison domain $\mathcal{P}$; when $n_i=0$, the corresponding touching block is empty.

As in the touching-without-crossing case, endpoint touching components, as defined in \eqref{eq:ThetaUL} and \eqref{eq:ThetaLR} may replace the endpoint candidates ${\bm\theta}^{\text{UL}}$ and ${\bm\theta}^{\text{LR}}$, as defined in \eqref{eq:theta_UL} and \eqref{eq:theta_LR}. Define the first non-touching comparison region before the first crossing component by
\begin{equation*}
\mathcal{P}_1=\left[\underline{p}_{\cap},p^{\left(\text{C}\right)}_{1,1}\right)\setminus\mathcal{T}_1.
\end{equation*}
By Lemma \ref{cross_touch_lemma}, the difference function $\Delta$ in \eqref{eq:Delta_function} has a constant sign on $\mathcal{P}_1$. This sign is crucial for determining which crossing components, together with which endpoint candidate sets, can contain global minimizers of the left-region problem \eqref{eq:TVaR_problem<=}.

\begin{proposition}\label{prop:tvar_crossing}
Assume that $\Theta^{\left(\text{L}\right)}\neq\emptyset$, $G_{pl}>0$ on $\Theta^{\left(\text{L}\right)}$, and $\Theta_{\text{F,S}}^{\left(=,=\right)}=\mathcal{I}_{\text{F}}\cap\mathcal{I}_{\text{S}}\neq\emptyset$. Suppose that the equality set $\mathcal{P}_0$ in \eqref{eq:equality_set} has finitely many connected components. Suppose that $m\in\mathbb{N}$ in Lemma \ref{cross_touch_lemma}. Then the following statements hold.

\begin{enumerate}
\item[(i)] Suppose that $m=2s+1$ for some $s\in\mathbb{N}_0$.
\begin{enumerate}
\item[(a)] If $\Delta\left(p\right)<0$ for all $p\in\mathcal{P}_1$, or equivalently, if $\mathcal{I}_{\text{F}}$ initially lies below $\mathcal{I}_{\text{S}}$ before the first crossing component except on the touching components, then the set of optimal solutions of the left-region problem \eqref{eq:TVaR_problem<=} is
\begin{equation*}
\argmin_{\left(p,l\right)\in\Theta^{\left(\text{L}\right)}}G\left(p,l\right)=\argmin_{{\bm\theta}\in\mathcal{K}_{\text{odd}}^{\left(-\right)}}G\left({\bm\theta}\right),
\end{equation*}
where the odd-number-crossing negative-initial-difference candidate set is defined by
\begin{equation*}
\mathcal{K}_{\text{odd}}^{\left(-\right)}=\Theta^{\text{UL}}\cup\left(\cup_{u=1}^{s}\mathcal{I}\left(\left[p^{\left(\text{C}\right)}_{2u,1},p^{\left(\text{C}\right)}_{2u,2}\right]\right)\right)\cup\Theta^{\text{LR}}.
\end{equation*}

\item[(b)] If $\Delta\left(p\right)>0$ for all $p\in\mathcal{P}_1$, or equivalently, if $\mathcal{I}_{\text{F}}$ initially lies above $\mathcal{I}_{\text{S}}$ before the first crossing component except on the touching components, then the set of optimal solutions of the left-region problem \eqref{eq:TVaR_problem<=} is
\begin{equation*}
\argmin_{\left(p,l\right)\in\Theta^{\left(\text{L}\right)}}G\left(p,l\right)=\argmin_{{\bm\theta}\in\mathcal{K}_{\text{odd}}^{\left(+\right)}}G\left({\bm\theta}\right),
\end{equation*}
where the odd-number-crossing positive-initial-difference candidate set is defined by
\begin{equation*}
\mathcal{K}_{\text{odd}}^{\left(+\right)}=\cup_{u=0}^{s}\mathcal{I}\left(\left[p^{\left(\text{C}\right)}_{2u+1,1},p^{\left(\text{C}\right)}_{2u+1,2}\right]\right).
\end{equation*}
\end{enumerate}

\item[(ii)] Suppose that $m=2s$ for some $s\in\mathbb{N}$.

\begin{enumerate}
\item[(a)] If $\Delta\left(p\right)<0$ for all $p\in\mathcal{P}_1$, or equivalently, if $\mathcal{I}_{\text{F}}$ initially lies below $\mathcal{I}_{\text{S}}$ before the first crossing component except on the touching components, then the set of optimal solutions of the left-region problem \eqref{eq:TVaR_problem<=} is
\begin{equation*}
\argmin_{\left(p,l\right)\in\Theta^{\left(\text{L}\right)}}G\left(p,l\right)=\argmin_{{\bm\theta}\in\mathcal{K}_{\text{even}}^{\left(-\right)}}G\left({\bm\theta}\right),
\end{equation*}
where the even-number-crossing negative-initial-difference candidate set is defined by
\begin{equation*}
\mathcal{K}_{\text{even}}^{\left(-\right)}=\Theta^{\text{UL}}\cup\left(\cup_{u=1}^{s}\mathcal{I}\left(\left[p^{\left(\text{C}\right)}_{2u,1},p^{\left(\text{C}\right)}_{2u,2}\right]\right)\right).
\end{equation*}

\item[(b)] If $\Delta\left(p\right)>0$ for all $p\in\mathcal{P}_1$, or equivalently, if $\mathcal{I}_{\text{F}}$ initially lies above $\mathcal{I}_{\text{S}}$ before the first crossing component except on the touching components, then the set of optimal solutions of the left-region problem \eqref{eq:TVaR_problem<=} is
\begin{equation*}
\argmin_{\left(p,l\right)\in\Theta^{\left(\text{L}\right)}}G\left(p,l\right)=\argmin_{{\bm\theta}\in\mathcal{K}_{\text{even}}^{\left(+\right)}}G\left({\bm\theta}\right),
\end{equation*}
where the even-number-crossing positive-initial-difference candidate set is defined by
\begin{equation*}
\mathcal{K}_{\text{even}}^{\left(+\right)}=\left(\cup_{u=0}^{s-1}\mathcal{I}\left(\left[p^{\left(\text{C}\right)}_{2u+1,1},p^{\left(\text{C}\right)}_{2u+1,2}\right]\right)\right)\cup\Theta^{\text{LR}}.
\end{equation*}
\end{enumerate}
\end{enumerate}
\end{proposition}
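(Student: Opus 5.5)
The plan is to reduce the crossing case to a sequence of "local" problems of the type already solved in Propositions \ref{prop:tvar_second_result} and \ref{prop:tvar_third_result}, by cutting the rectangle $\Theta^{\left(\text{L}\right)}$ along the crossing components. The key structural fact is that, under $G_{pl}>0$, every point of a crossing component $\mathcal{I}\left(\mathcal{C}_i\right)$ is a stationary point of $G$, so $G$ is constant on each $\mathcal{I}\left(\mathcal{C}_i\right)$. Moreover, the sign of $\Delta$ on the non-touching parts between consecutive crossing components is determined by the initial sign on $\mathcal{P}_1$ and the parity of the index. This follows from Lemma \ref{cross_touch_lemma}(iii)--(iv): the sign alternates across crossings and is preserved across touchings. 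So if $\Delta<0$ on $\mathcal{P}_1$, then $\Delta<0$ between $\mathcal{C}_{2u}$ and $\mathcal{C}_{2u+1}$ and $\Delta>0$ between $\mathcal{C}_{2u-1}$ and $\mathcal{C}_{2u}$. The case $\Delta>0$ on $\mathcal{P}_1$ is the mirror image.

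First, I will show as before that no optimizer lies in $\Theta_{\text{F,S}}^{\left(<,<\right)}\cup\Theta_{\text{F,S}}^{\left(>,>\right)}$. This uses the northeast/southwest descent argument from Lemma \ref{eq:gradient_property} and Proposition \ref{prop:tvar_second_result}. The remaining "mixed" set consists of the strips $\Theta_{\text{F,S}}^{\left(\geq,\leq\right)}$ (where $\mathcal{I}_{\text{F}}$ lies below $\mathcal{I}_{\text{S}}$) and $\Theta_{\text{F,S}}^{\left(\leq,\geq\right)}$ (where it lies above). Each is a band between the two downward-sloping isoquants, and each connected piece is bounded at its ends by either a boundary of the rectangle or a crossing component.

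Second, on each band piece I will apply the monotone-path/gradient-theorem argument of Proposition \ref{prop:tvar_second_result}.
\begin{itemize}
\item On a $\Delta<0$ piece, $G_p\geq0$ and $G_l\leq0$. Moving northwest (decreasing $p$, increasing $l$) within the band weakly decreases $G$. Hence every point of the piece is weakly dominated by its upper-left end. That end is $\Theta^{\text{UL}}$ if the piece is the first one (reaching the left/top of the rectangle), and is the crossing component bounding it on the left otherwise.
\item On a $\Delta>0$ piece, $G_p\leq0$ and $G_l\geq0$. Moving southeast weakly decreases $G$, so every point is dominated by its lower-right end. That end is $\Theta^{\text{LR}}$ for the last piece, and is the crossing component on its right otherwise.
\end{itemize}
Domination is strict off the end set, as in the earlier proofs. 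Touching components inside a piece do not reverse the descent direction, so they are passed through at equal-or-higher value. The exception is an endpoint touching component, which is absorbed into $\Theta^{\text{UL}}$ or $\Theta^{\text{LR}}$ exactly as in Proposition \ref{prop:tvar_third_result}.

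Third, I will collect the surviving ends, which is pure bookkeeping on parity.
\begin{itemize}
\item Case (i)(a), $m$ odd, initial $\Delta<0$: the first piece sends mass to $\Theta^{\text{UL}}$. Since $m$ is odd, the last piece has $\Delta>0$ and sends mass to $\Theta^{\text{LR}}$. The intermediate pieces send mass to the even-indexed crossings: a $\Delta>0$ piece between $\mathcal{C}_{2u-1}$ and $\mathcal{C}_{2u}$ goes right to $\mathcal{C}_{2u}$, and a $\Delta<0$ piece between $\mathcal{C}_{2u}$ and $\mathcal{C}_{2u+1}$ goes left to $\mathcal{C}_{2u}$. This gives $\mathcal{K}_{\text{odd}}^{\left(-\right)}$.
\item Cases (i)(b), (ii)(a), (ii)(b): the same count yields $\mathcal{K}_{\text{odd}}^{\left(+\right)}$, $\mathcal{K}_{\text{even}}^{\left(-\right)}$ and $\mathcal{K}_{\text{even}}^{\left(+\right)}$.
\end{itemize}
The odd-indexed crossings in case (a), or even-indexed ones in case (b), are "saddle-type" crossings from which descent leaves on both sides. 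They are excluded because each point on them is dominated by moving into an adjacent band. Since every feasible point is weakly dominated by a point of the candidate set $\mathcal{K}$, and strictly dominated unless it already lies in $\mathcal{K}$, the argmin over $\Theta^{\left(\text{L}\right)}$ equals the argmin over $\mathcal{K}$. Existence follows from compactness and continuity.

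The main obstacle is the rigorous construction of the monotone descent paths inside each band piece. I need to check that the band between the two isoquants is "monotone-path connected" toward its designated end, and that one cannot escape through a crossing in the wrong direction. I would handle this as follows:
\begin{itemize}
\item parametrize each band piece by $p$ using the graphs $g_{\text{F}},g_{\text{S}}$ over the relevant subinterval of $\mathcal{P}$;
\item use strict monotonicity of $g_{\text{F}}$ and $g_{\text{S}}$ to build a staircase path (alternately moving along $l$ up to $g_{\text{S}}$ and along $p$ down to $g_{\text{F}}$ in the $\Delta<0$ case), which stays in the band;
\item treat separately the parts of the band outside $\mathcal{P}$, where one isoquant has exited the rectangle, by the boundary argument of Proposition \ref{prop:tvar_second_result}, cases (I)--(III).
\end{itemize}
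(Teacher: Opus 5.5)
Your plan is correct and is essentially the paper's own argument. You discard $\Theta_{\text{F,S}}^{(<,<)}\cup\Theta_{\text{F,S}}^{(>,>)}$ by northeast/southwest moves, run the northwest/southeast gradient-theorem descent of Proposition \ref{prop:tvar_second_result} on each band between consecutive crossing components, and read off the surviving candidates by parity, which you do correctly in all four cases.

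The one step to tighten is your staircase path. It is stationary at every point of $\mathcal{I}_{\text{F}}\cap\mathcal{I}_{\text{S}}$, so it can neither pass an interior touching component nor leave a non-selected crossing. The equality of the two argmin sets needs \emph{strict} domination of those sets. In a band with $\Delta<0$, you get this by moving vertically to $\mathcal{I}_{\text{S}}$ and then sliding left along it, where $\mathrm{d}G=G_p\,\mathrm{d}p\leq 0$ with strict inequality wherever $\Delta<0$. Symmetrically, in a band with $\Delta>0$, move horizontally to $\mathcal{I}_{\text{F}}$ and slide right along it, where $\mathrm{d}G=G_l\,\mathrm{d}l$.
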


\begin{proof}
The proof follows from Lemma \ref{cross_touch_lemma}, Lemma \ref{eq:gradient_property}, and the same separation-geometry argument, together with the gradient theorem, used in Propositions \ref{prop:tvar_second_result} and \ref{prop:tvar_third_result}.

By Lemma \ref{cross_touch_lemma}, the sign of the difference function $\Delta$ in \eqref{eq:Delta_function} is constant on each non-touching subinterval of $\mathcal{P}$, is preserved after each touching component, and reverses after each crossing component. Thus the relative vertical order of the two marginal-balance isoquants, $\mathcal{I}_{\text{F}}$ and $\mathcal{I}_{\text{S}}$, alternates as $p$ increases across the crossing components. If $\Delta<0$ on $\mathcal{P}_1$, then $\mathcal{I}_{\text{F}}$ initially lies below $\mathcal{I}_{\text{S}}$ before the first crossing component except on the touching components. After the first crossing component, the order is reversed, and after the second crossing component, if it exists, the order is reversed back, and so on. If $\Delta>0$ on $\mathcal{P}_1$, then the same alternation occurs with the roles of the two vertical orderings reversed.

On any region where $\mathcal{I}_{\text{F}}$ lies below $\mathcal{I}_{\text{S}}$, the same argument as in Proposition \ref{prop:tvar_second_result}(i) shows that all non-candidate strategies in that region are dominated by the nearest relevant upper-left endpoint candidate or crossing component. On any region where $\mathcal{I}_{\text{F}}$ lies above $\mathcal{I}_{\text{S}}$, the same argument as in Proposition \ref{prop:tvar_second_result}(ii) shows that all non-candidate strategies in that region are dominated by the nearest relevant lower-right endpoint candidate or crossing component. In all cases, strategies in $\Theta_{\text{F,S}}^{\left(<,<\right)}\cup\Theta_{\text{F,S}}^{\left(>,>\right)}$ are ruled out by northeast or southwest improvements, respectively. Interior touching components that do not attach to an endpoint are dominated by the adjacent crossing or endpoint candidates, since touching does not reverse the sign of $\Delta$ and the objective $G$ is constant on them.

Hence it remains only to identify the candidate set to which all global minimizers must belong, by examining the sign-alteration of the difference function $\Delta$ as $p$ increases. Suppose first that $m=2s+1$ for some $s\in\mathbb{N}_0$. If $\Delta<0$ on $\mathcal{P}_1$, then the candidates are those in the upper-left endpoint candidate set, the even-indexed crossing components, and those in the lower-right endpoint candidate set. This gives $\mathcal{K}_{\text{odd}}^{\left(-\right)}$. Indeed, by the preceding dominance argument, every feasible strategy outside this set has strictly larger objective value than some strategy in this set, so (i)(a) holds. If $\Delta>0$ on $\mathcal{P}_1$, the parity is reversed, and the candidates are the odd-indexed crossing components. This gives $\mathcal{K}_{\text{odd}}^{\left(+\right)}$, and (i)(b) follows.

Now suppose that $m=2s$ for some $s\in\mathbb{N}$. If $\Delta<0$ on $\mathcal{P}_1$, the same parity argument gives the upper-left endpoint candidate set together with the even-indexed crossing components, namely $\mathcal{K}_{\text{even}}^{\left(-\right)}$. Hence (ii)(a) holds. If $\Delta>0$ on $\mathcal{P}_1$, the surviving candidates are the odd-indexed crossing components together with the lower-right endpoint candidate set, namely $\mathcal{K}_{\text{even}}^{\left(+\right)}$. Hence (ii)(b) holds.
\end{proof}

Proposition \ref{prop:tvar_crossing} treats the case in which the two marginal-balance isoquants cross. Economically, crossings represent changes in the relative strength of the frequency and severity marginal incentives. Each crossing reverses which margin is locally more urgent in the TVaR tradeoff. As a result, the relevant candidates alternate as one moves from left to right across the feasible rectangle. The initial vertical ordering of the two isoquants determines whether the first relevant candidate is an endpoint candidate or a crossing component, and the parity of the number of crossings determines whether the final relevant candidate is another endpoint candidate or a crossing component. Thus, unlike VaR, TVaR can generate genuinely joint optima involving both self-protection and self-insurance, located on crossing components where both marginal-balance conditions hold.

Combining Propositions \ref{prop:tvar_first_result}, \ref{prop:tvar_second_result}, \ref{prop:tvar_third_result}, and \ref{prop:tvar_crossing}, we obtain a complete characterization of the left-region TVaR problem \eqref{eq:TVaR_problem<=} under positive net interaction, subject to the finite-component condition on the equality set $\mathcal{P}_0$, as defined in \eqref{eq:equality_set}, in the touching-or-crossing case. These results show that, under positive net interaction, the left-region TVaR problem is driven by the relative geometry of the two marginal-balance isoquants. When one margin dominates globally, the solution lies on a boundary. When the two margins are separated, the solution is one of the two extreme constrained marginal-balance candidates. When they touch, an endpoint candidate may expand into a whole common marginal-balance component. When they cross, the optimal candidate set alternates across crossing components according to the parity of the crossings. This is the main economic difference between TVaR and VaR: TVaR creates a direct residual-frequency/residual-severity interaction through the bilinear tail-risk term $pl/\left(1-\alpha\right)$, so the optimal use of self-protection and self-insurance is governed by joint marginal-balance geometry rather than by a single probability threshold. This echoes the contrast emphasized at the end of Section \ref{sec:VaR}.

\subsubsection{Negative Net Interaction}\label{sec:negative_net}

In this subsection, assume that $G_{pl}<0$ on $\Theta^{\left(\text{L}\right)}$. Recall from Section \ref{sec:iso_geo} that, under this negative net interaction condition, the first-order isoquants $\mathcal{I}_{\text{F}}=\Theta_{\text{F}}^{\left(=\right)}$ and $\mathcal{I}_{\text{S}}=\Theta_{\text{S}}^{\left(=\right)}$ are upward-sloping whenever they are non-empty and can be represented as graphs. In this case, the first-order derivative $G_p$ increases in residual frequency $p$ but decreases in residual severity $l$, while $G_l$ decreases in $p$ but increases in $l$. Thus the separation geometry is upper-left/lower-right rather than lower-left/upper-right, in contrast to the positive-net-interaction case in Section \ref{sec:positive_net}. Motivated by this reflected separation geometry, we avoid repeating the full analysis and instead reduce the negative-net-interaction case to the already-solved positive-net-interaction case by reflecting the severity coordinate.

For ${\bm\theta}=\left(p,l\right)\in\Theta^{\left(\text{L}\right)}$, define $\lambda=\underline{l}+\overline{l}-l$, and $\hat{{\bm\theta}}=\left(p,\lambda\right)$. There is a one-to-one correspondence between ${\bm\theta}=\left(p,l\right)\in\Theta^{\left(\text{L}\right)}$ and $\hat{{\bm\theta}}=\left(p,\lambda\right)\in\hat{\Theta}^{\left(\text{L}\right)}$, where $\hat{\Theta}^{\left(\text{L}\right)}=\left[\underline{p},\overline{p}_{\alpha}\right]\times\left[\underline{l},\overline{l}\right]$. As a set, $\hat{\Theta}^{\left(\text{L}\right)}$ has the same rectangular form as $\Theta^{\left(\text{L}\right)}$, but its second coordinate is the reflected severity coordinate $\lambda$. For $\hat{{\bm\theta}}=\left(p,\lambda\right)\in\hat{\Theta}^{\left(\text{L}\right)}$, define the transformed objective
\begin{equation*}
\hat{G}\left(p,\lambda\right)=G\left(p,\underline{l}+\overline{l}-\lambda\right).
\end{equation*}
This is well-defined on $\hat{\Theta}^{\left(\text{L}\right)}$, because $\underline{l}+\overline{l}-\lambda\in\left[\underline{l},\overline{l}\right]$ whenever $\lambda\in\left[\underline{l},\overline{l}\right]$.

The first-order derivatives of $\hat{G}$ are $\hat{G}_p\left(p,\lambda\right)=G_p\left(p,\underline{l}+\overline{l}-\lambda\right)$ and $\hat{G}_{\lambda}\left(p,\lambda\right)=-G_l\left(p,\underline{l}+\overline{l}-\lambda\right)$. The second-order derivatives are $\hat{G}_{pp}\left(p,\lambda\right)=G_{pp}\left(p,\underline{l}+\overline{l}-\lambda\right)>0$, $\hat{G}_{p\lambda}\left(p,\lambda\right)=-G_{pl}\left(p,\underline{l}+\overline{l}-\lambda\right)>0$, and $\hat{G}_{\lambda\lambda}\left(p,\lambda\right)=G_{ll}\left(p,\underline{l}+\overline{l}-\lambda\right)>0$. Therefore, the transformed objective $\hat{G}$ satisfies the positive-net-interaction condition on $\hat{\Theta}^{\left(\text{L}\right)}$. The negative-net-interaction problem for $G$ is thus equivalent to a positive-net-interaction problem for $\hat{G}$.

Let $\hat{\Theta}^{*\left(\text{L}\right)}=\argmin_{\left(p,\lambda\right)\in\hat{\Theta}^{\left(\text{L}\right)}}\hat{G}\left(p,\lambda\right)$ denote the optimal set of the transformed left-region problem. By the positive-net-interaction analysis in Section \ref{sec:positive_net}, the set $\hat{\Theta}^{*\left(\text{L}\right)}$ is characterized by Propositions \ref{prop:tvar_first_result}, \ref{prop:tvar_second_result}, \ref{prop:tvar_third_result}, and \ref{prop:tvar_crossing}, applied to the transformed objective $\hat{G}$. The optimal set of the original negative-net-interaction left-region problem is obtained by transforming back. We summarize this in the following proposition; its proof has been outlined above, and hence shall be omitted.

\begin{proposition}\label{prop:tvar_negative_net}
Assume that $\Theta^{\left(\text{L}\right)}\neq\emptyset$ and $G_{pl}<0$ on $\Theta^{\left(\text{L}\right)}$. Then the set of optimal solutions of the left-region problem \eqref{eq:TVaR_problem<=} is
\begin{equation*}
\argmin_{\left(p,l\right)\in\Theta^{\left(\text{L}\right)}}G\left(p,l\right)=\left\{\left(p,l\right)\in\Theta^{\left(\text{L}\right)}:\left(p,\underline{l}+\overline{l}-l\right)\in\hat{\Theta}^{*\left(\text{L}\right)}\right\},
\end{equation*}
where $\hat{\Theta}^{*\left(\text{L}\right)}$ is characterized by Propositions \ref{prop:tvar_first_result}, \ref{prop:tvar_second_result}, \ref{prop:tvar_third_result}, and \ref{prop:tvar_crossing}, applied to the transformed objective $\hat{G}$.
\end{proposition}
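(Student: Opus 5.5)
The plan is a change of variables that carries the negative-net-interaction problem over to the positive-net-interaction case already solved, and then carries the optimal set back. Define the reflection $R:\Theta^{\left(\text{L}\right)}\to\hat{\Theta}^{\left(\text{L}\right)}$ by $R\left(p,l\right)=\left(p,\underline{l}+\overline{l}-l\right)$. The first step is to record that $R$ is a bijection. Since $\lambda=\underline{l}+\overline{l}-l$ maps $\left[\underline{l},\overline{l}\right]$ onto itself and is self-inverse, $R$ is a bijection of the rectangle $\left[\underline{p},\overline{p}_{\alpha}\right]\times\left[\underline{l},\overline{l}\right]$ onto itself, and $R^{-1}=R$. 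By the definition of $\hat{G}$ we have $G=\hat{G}\circ R$ on $\Theta^{\left(\text{L}\right)}$.

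The second step is the purely set-theoretic transfer of minimizers. For any bijection $R$ and any $G=\hat{G}\circ R$, the infima of $G$ over $\Theta^{\left(\text{L}\right)}$ and of $\hat{G}$ over $\hat{\Theta}^{\left(\text{L}\right)}$ coincide. Moreover, ${\bm\theta}$ attains $\min G$ if and only if $R\left({\bm\theta}\right)$ attains $\min\hat{G}$. Hence
\begin{equation*}
\argmin_{\Theta^{\left(\text{L}\right)}}G=R^{-1}\left(\hat{\Theta}^{*\left(\text{L}\right)}\right)=\left\{\left(p,l\right)\in\Theta^{\left(\text{L}\right)}:\left(p,\underline{l}+\overline{l}-l\right)\in\hat{\Theta}^{*\left(\text{L}\right)}\right\}.
\end{equation*}
This is exactly the displayed identity. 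Existence of minimizers follows from continuity of $G$ on the compact rectangle, which the reflection preserves.

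The third step is to justify that $\hat{\Theta}^{*\left(\text{L}\right)}$ is characterized by Propositions \ref{prop:tvar_first_result}--\ref{prop:tvar_crossing}. Those results use only three things: that the domain is a rectangle, that the objective is $C^2$ with the stated one-sided derivatives, and that $\hat{G}_{pp}>0$, $\hat{G}_{\lambda\lambda}>0$, $\hat{G}_{p\lambda}>0$. In particular, Lemma \ref{eq:gradient_property} and the separation and gradient-theorem arguments rely only on these sign conditions, not on the specific form $\pi+pl/\left(1-\alpha\right)$.

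\begin{itemize}
\item The chain rule gives $\hat{G}_p=G_p\circ R$, $\hat{G}_\lambda=-G_l\circ R$, $\hat{G}_{pp}=G_{pp}\circ R>0$, $\hat{G}_{\lambda\lambda}=G_{ll}\circ R>0$, and $\hat{G}_{p\lambda}=-G_{pl}\circ R>0$, using $G_{pl}<0$.
\item Because $R$ maps boundary to boundary and interior to interior, the one-sided derivatives on the boundary transfer with the same regularity.
\end{itemize}

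The step needing the most care is this third one: checking that every ingredient of the positive-case propositions is stated abstractly enough to apply to $\hat{G}$. One point to confirm is that the one-dimensional minimizers in \eqref{eq:boundary_l^*} and \eqref{eq:boundary_p^*} are replaced by their $\hat{G}$ analogues, and their uniqueness still follows from $\hat{G}_{pp},\hat{G}_{\lambda\lambda}>0$. Another is that the finite-component hypothesis in the touching/crossing cases is imposed on the equality set of the transformed isoquants $\hat{\mathcal{I}}_{\text{F}}=R\left(\mathcal{I}_{\text{F}}\right)$ and $\hat{\mathcal{I}}_{\text{S}}=R\left(\mathcal{I}_{\text{S}}\right)$. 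Once these are noted, the conclusion follows.
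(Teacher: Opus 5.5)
Your proposal is correct and matches the argument the paper outlines before the proposition: reflect the severity coordinate via $\lambda=\underline{l}+\overline{l}-l$, check that $\hat{G}_{pp},\hat{G}_{\lambda\lambda},\hat{G}_{p\lambda}>0$, transfer the argmin through the self-inverse bijection, and apply the positive-net-interaction results to $\hat{G}$. You also spell out two points the paper leaves implicit: those results rely only on the rectangular domain and the three sign conditions, and the finite-component hypothesis must be imposed on the transformed isoquants.
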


Economically, the upward-sloping-isoquant case corresponds to a negative net interaction between residual frequency and residual severity. In this case, the cost-side interaction is negative enough to offset the TVaR tail-risk interaction, so the marginal-balance tradeoff between residual frequency and residual severity is reversed. The reflection argument shows that the upward-sloping geometry is not a fundamentally new optimization problem; rather, it is the downward-sloping geometry viewed through the reflected severity coordinate. The economic interpretation changes, however: the endpoint and crossing candidates identified in the transformed problem must be mapped back through $l=\underline{l}+\overline{l}-\lambda$, so their vertical positions are reflected in the original residual-severity scale. Hence upward-sloping isoquants describe a regime in which the interaction between self-protection and self-insurance reverses the direction of marginal-balance tradeoffs, but the optimization logic remains governed by the same separation, touching, and crossing principles.

\subsubsection{Zero Net Interaction}\label{sec:zero_net}

In this subsection, assume the degenerate case in which $G_{pl}=0$ on $\Theta^{\left(\text{L}\right)}$. This case occurs when the cost-side interaction exactly offsets the TVaR tail-risk interaction, namely when $\pi_{pl}=c_{xy}=-1/\left(1-\alpha\right)$. As discussed in Section \ref{sec:iso_geo}, the marginal effect of changing residual severity is independent of residual frequency, and the marginal effect of changing residual frequency is independent of residual severity. The two marginal-balance isoquants are coordinatewise: $\mathcal{I}_{\text{F}}=\Theta_{\text{F}}^{\left(=\right)}$ is vertical, while $\mathcal{I}_{\text{S}}=\Theta_{\text{S}}^{\left(=\right)}$ is horizontal.

Recall the marginal optima $l^*\left(p_0\right)$, for any fixed $p_0\in\left[\underline{p},\overline{p}_{\alpha}\right]$, in \eqref{eq:boundary_l^*}, and $p^*\left(l_0\right)$, for any fixed $l_0\in\left[\underline{l},\overline{l}\right]$, in \eqref{eq:boundary_p^*}. Since $G_{pl}=0$, the minimizer $p^*\left(l_0\right)$ is independent of the choice of $l_0$, and the minimizer $l^*\left(p_0\right)$ is independent of the choice of $p_0$. We therefore write $p^0=p^*\left(l_0\right)$, for any $l_0\in\left[\underline{l},\overline{l}\right]$, and $l^0=l^*\left(p_0\right)$, for any $p_0\in\left[\underline{p},\overline{p}_{\alpha}\right]$. Since $G_{pp}>0$ and $G_{ll}>0$, these coordinatewise minimizers are unique.

\begin{proposition}\label{prop:tvar_degenerate}
Assume that $\Theta^{\left(\text{L}\right)}\neq\emptyset$ and $G_{pl}=0$ on $\Theta^{\left(\text{L}\right)}$. Then the unique optimal solution of the left-region problem \eqref{eq:TVaR_problem<=} is ${\bm\theta}^{*\left(\text{L}\right)}=\left(p^0,l^0\right)$.
\end{proposition}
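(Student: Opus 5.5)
The plan is to show that, under zero net interaction, $G$ is additively separable on the rectangle $\Theta^{\left(\text{L}\right)}=\left[\underline{p},\overline{p}_{\alpha}\right]\times\left[\underline{l},\overline{l}\right]$. The two-dimensional problem then splits into two independent strictly convex one-dimensional problems, whose minimizers are exactly $p^0$ and $l^0$.

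The first step establishes separability. Since $G_{pl}=0$ on $\Theta^{\left(\text{L}\right)}$, the map $l\mapsto G_p\left(p,l\right)$ is constant for each fixed $p$. Fix a reference point $\left(p_0,l_0\right)$, say $\left(\underline{p},\underline{l}\right)$, and integrate along the two legs of a rectilinear path. This gives
\begin{equation*}
G\left(p,l\right)=G\left(p_0,l_0\right)+\int_{p_0}^{p}G_p\left(s,l_0\right)\text{d}s+\int_{l_0}^{l}G_l\left(p,t\right)\text{d}t .
\end{equation*}
Since $G_l\left(p,t\right)=G_l\left(p_0,t\right)$ by $G_{lp}=0$, we obtain $G\left(p,l\right)=\phi\left(p\right)+\psi\left(l\right)$ with $\phi\left(p\right)=G\left(p,l_0\right)$ and $\psi\left(l\right)=G\left(p_0,l\right)-G\left(p_0,l_0\right)$. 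Twice continuous differentiability on the interior, continuity up to the boundary, and the one-sided derivative conventions justify these integrations on the closed rectangle; boundary values follow by continuity.

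The second step solves the separated problem. We have $\phi''=G_{pp}=\pi_{pp}>0$ and $\psi''=G_{ll}=\pi_{ll}>0$, so both functions are strictly convex on compact intervals. Hence each has a unique minimizer: $\phi$ is minimized at $p^*\left(l_0\right)=p^0$, and $\psi$ at $l^*\left(p_0\right)=l^0$. The independence of these minimizers from $l_0$ and $p_0$, as recorded before the statement, is itself immediate from the separability. Therefore, for any $\left(p,l\right)\in\Theta^{\left(\text{L}\right)}$,
\begin{equation*}
G\left(p,l\right)=\phi\left(p\right)+\psi\left(l\right)\geq\phi\left(p^0\right)+\psi\left(l^0\right)=G\left(p^0,l^0\right).
\end{equation*}
Equality holds only if $\phi\left(p\right)=\phi\left(p^0\right)$ and $\psi\left(l\right)=\psi\left(l^0\right)$, which forces $p=p^0$ and $l=l^0$ by uniqueness. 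This gives existence and uniqueness of ${\bm\theta}^{*\left(\text{L}\right)}=\left(p^0,l^0\right)$.

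The only step requiring some care is the first: justifying $G_{pl}=0\Rightarrow$ additive separability on the closed rectangle, including its boundary, where only one-sided derivatives are assumed. I would handle this by proving the identity on the interior via the mean value theorem in each coordinate and then extending it to the boundary by continuity of $G$. An alternative that avoids separability is to observe that $G\left(p,l\right)\geq G\left(p,l^*\left(p\right)\right)=G\left(p,l^0\right)\geq G\left(p^0,l^0\right)$, with both inequalities strict unless $l=l^0$ and $p=p^0$. This uses only the independence facts already stated before the proposition and strict convexity, so I would present it as the main argument and keep the separability computation as supporting justification of those facts.
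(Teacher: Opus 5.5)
Your main argument, $G(p,l)\ge G(p,l^0)\ge G(p^0,l^0)$, is correct and is the paper's proof with the two coordinatewise minimizations taken in the opposite order: the paper first replaces $p$ by $p^0$ at fixed $l$, then $l$ by $l^0$. Your separability computation adds a justification for the independence of $p^*(\cdot)$ and $l^*(\cdot)$, which the paper only asserts, and your equality analysis spells out the uniqueness step that the paper covers in one line with ``strict own-curvature.''
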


\begin{proof}
For any $\left(p,l\right)\in\Theta^{\left(\text{L}\right)}$, since $p^0$ minimizes $p\mapsto G\left(p,l\right)$, we have $G\left(p^0,l\right)\leq G\left(p,l\right)$. Since $l^0$ minimizes $l\mapsto G\left(p^0,l\right)$, we have $G\left(p^0,l^0\right)\leq G\left(p^0,l\right)$. Combining these inequalities gives $G\left(p^0,l^0\right)\leq G\left(p,l\right)$ for all $\left(p,l\right)\in\Theta^{\left(\text{L}\right)}$. Strict own-curvature gives uniqueness and concludes.
\end{proof}

Economically, the degenerate case is the knife-edge regime in which the technological interaction in the cost function exactly cancels the TVaR interaction between residual frequency and residual severity. The choice of self-protection and the choice of self-insurance become coordinatewise: the RH can determine the optimal residual frequency without considering residual severity, and can determine the optimal residual severity without considering residual frequency. Thus, unlike the positive- or negative-net-interaction cases, the solution is not governed by crossing or touching geometry. It is governed by two independent one-dimensional marginal balances.

\subsection{Confidence-Level Effects on Net Interaction}\label{sec:confidence_net_interaction}

The preceding subsections classify the left-region TVaR problem according to the uniform sign of the net interaction term $G_{pl}$ defined in \eqref{eq:G_pl}. This section studies how the confidence level $\alpha$ affects this sign. The key observation is that the TVaR tail-risk interaction $1/\left(1-\alpha\right)$ is increasing in $\alpha$. Thus, as the confidence level increases, the TVaR evaluation in the left region places more weight on the joint effect of residual frequency and residual severity. Since the left region is non-empty only when $\alpha\leq 1-\underline{p}$, this interpretation is understood within the range in which the left-region problem is relevant.

For a fixed cost technology, the sign of $G_{pl}$ is determined by the comparison between the cost-side interaction $c_{xy}$ and the TVaR interaction $1/\left(1-\alpha\right)$. Recall that, pointwise on the left-region feasible set, the positive-net-interaction case occurs when $c_{xy}>-1/\left(1-\alpha\right)$, the zero-net-interaction case occurs when $c_{xy}=-1/\left(1-\alpha\right)$, and the negative-net-interaction case occurs when $c_{xy}<-1/\left(1-\alpha\right)$. The corresponding analysis in Sections \ref{sec:positive_net}, \ref{sec:negative_net}, and \ref{sec:zero_net} applies when these inequalities hold uniformly on $\Theta^{\left(\text{L}\right)}$. If the inequality changes sign across \(\Theta^{(L)}\), the problem falls into the mixed-sign case left for future work. Thus, as $\alpha$ increases, the TVaR interaction shifts the problem toward the positive-net-interaction regime.

If the joint cost function is strictly supermodular, so that $c_{xy}>0$, then the technological interaction and the TVaR tail-risk interaction reinforce each other. In this case, $G_{pl}=c_{xy}+\frac{1}{1-\alpha}>0$, and therefore the positive-net-interaction analysis in Section \ref{sec:positive_net} applies for every confidence level $\alpha\in\left[0,1\right)$. Economically, when self-protection and self-insurance are technological substitutes in the cost technology, the cost-side interaction and the TVaR interaction both push the marginal-balance isoquants into the downward-sloping regime. Along these downward-sloping marginal-balance curves, the two risk-reduction activities exhibit a substitution-type tradeoff: more self-protection, meaning a lower residual probability, is locally balanced against less self-insurance, meaning a higher residual severity, and vice versa. Except for boundary cases in which the optimizer is one of the endpoint candidates, this substitution-type marginal tradeoff is reflected in the candidate strategies selected by the positive-net-interaction analysis in Section \ref{sec:positive_net}.

If the joint cost function is strictly submodular, so that $c_{xy}<0$, then the technological complementarity in the cost function offsets the TVaR tail-risk interaction. In this case, the sign of $G_{pl}$ depends on the strength of the negative cost-side interaction relative to $1/\left(1-\alpha\right)$. When the cost-side complementarity is mild, so that $c_{xy}$ is negative but not too negative, the TVaR interaction dominates and the positive-net-interaction case still applies; the marginal-balance geometry remains substitution-like. When the cost-side complementarity is sufficiently strong, so that $c_{xy}<-1/\left(1-\alpha\right)$, the net interaction becomes negative, and the negative-net-interaction analysis in Section \ref{sec:negative_net} applies. In that case, the marginal-balance isoquants are upward-sloping, so the two risk-reduction activities exhibit a complementarity-type marginal tradeoff: more self-protection is locally balanced with more self-insurance, and less self-protection with less self-insurance. The knife-edge case occurs when the two effects exactly cancel, giving the zero-net-interaction case in Section \ref{sec:zero_net}.

Economically, the confidence level $\alpha$ determines which net-interaction geometry is relevant. At lower confidence levels, a strong cost-side complementarity may dominate the TVaR interaction, producing upward-sloping marginal-balance isoquants. At higher confidence levels, the TVaR tail interaction becomes stronger and may dominate the cost-side complementarity, producing downward-sloping isoquants. Hence $\alpha$ affects not only the probability threshold $1-\alpha$, but also the qualitative interaction between self-protection and self-insurance in the TVaR problem.

To conclude the TVaR problem, the following theorem gives its complete solution. As stated at the beginning of this section, the net interaction matters only when $\alpha\leq 1-\underline{p}$.
\begin{theorem}\label{thm:tvar}
Let $\tilde{l}^*$ be defined as in Proposition \ref{lemma:VaR_problem>}. Whenever $\Theta^{\left(\text{L}\right)}\neq\emptyset$, define the left-region optimal set by $\Theta^{*\left(\text{L}\right)}=\argmin_{\left(p,l\right)\in\Theta^{\left(\text{L}\right)}}G\left(p,l\right)$, and, whenever $\Theta^{\left(\text{R}\right)}\neq\emptyset$, define the right-region optimal set by $\Theta^{*\left(\text{R}\right)}=\left\{\left(\overline{p},\tilde{l}^*\right)\right\}$. When $G_{pl}>0$, $\Theta^{*\left(\text{L}\right)}$ is characterized by Propositions \ref{prop:tvar_first_result}, \ref{prop:tvar_second_result}, \ref{prop:tvar_third_result}, and \ref{prop:tvar_crossing}. When $G_{pl}<0$, $\Theta^{*\left(\text{L}\right)}$ is characterized by Proposition \ref{prop:tvar_negative_net}. When $G_{pl}=0$, the set $\Theta^{*\left(\text{L}\right)}=\left\{\left(p^0,l^0\right)\right\}$. Then the optimal set of the TVaR problem \eqref{eq:TVaR_problem} is characterized as follows.
\begin{enumerate}
\item[(i)] If $\alpha\in\left[0,1-\overline{p}\right]$, then ${\bm\theta}^*\in\Theta^{*\left(\text{L}\right)}$.
\item[(ii)] If $\alpha\in\left(1-\overline{p},1-\underline{p}\right]$, then ${\bm\theta}^*\in\argmin_{{\bm\theta}\in\left(\Theta^{*\left(\text{L}\right)}\cup\Theta^{*\left(\text{R}\right)}\right)}G\left({\bm\theta}\right)$.
\item[(iii)] If $\alpha\in\left(1-\underline{p},1\right]$, then ${\bm\theta}^*\in\Theta^{*\left(\text{R}\right)}=\left\{\left(\overline{p},\tilde{l}^*\right)\right\}$.
\end{enumerate}
\end{theorem}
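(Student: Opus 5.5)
The plan mirrors the proof of Theorem \ref{thm:var}: split into three cases according to where the threshold $1-\alpha$ falls relative to $[\underline{p},\overline{p}]$. In each case I will check which of $\Theta^{\left(\text{L}\right)}$ and $\Theta^{\left(\text{R}\right)}$ is non-empty, then invoke the local results.

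The first step handles existence. The map $\bm\theta\mapsto\text{TVaR}_\alpha(X(\bm\theta))$ is continuous on $\Theta$, and $\pi$ is continuous, so $G$ is continuous on the compact set $\Theta$. By Corollary \ref{coro:lower_semicontinuous}, Problems \eqref{eq:risk_problem} and \eqref{eq:risk_problem_2} are then equivalent. Moreover, $\Theta^{\left(\text{L}\right)}=[\underline{p},\overline{p}_\alpha]\times[\underline{l},\overline{l}]$ is compact whenever it is non-empty. Hence $G$ attains its minimum there, so $\Theta^{*\left(\text{L}\right)}$ is non-empty. It is described by Propositions \ref{prop:tvar_first_result}--\ref{prop:tvar_crossing}, Proposition \ref{prop:tvar_negative_net}, or Proposition \ref{prop:tvar_degenerate}, according to the uniform sign of $G_{pl}$.

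\emph{Case (i)}, $\alpha\in[0,1-\overline{p}]$. Here every feasible $p$ satisfies $p\le\overline{p}\le1-\alpha$. Hence $\Theta=\Theta^{\left(\text{L}\right)}$, $\Theta^{\left(\text{R}\right)}=\emptyset$ and $\overline{p}_\alpha=\overline{p}$. The problem is then exactly \eqref{eq:TVaR_problem<=}, so the optimal set is $\Theta^{*\left(\text{L}\right)}$.

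\emph{Case (iii)}, $\alpha\in(1-\underline{p},1]$. For $\alpha<1$ we have $1-\alpha<\underline{p}\le p$, so $\Theta=\Theta^{\left(\text{R}\right)}$. The objective is then $\pi(p,l)+l$, and Proposition \ref{lemma:VaR_problem>} gives the unique optimizer $(\overline{p},\tilde{l}^*)$. For $\alpha=1$ the objective is \eqref{eq:tvar_problem_alpha=1}, which is the same function, as already noted at the start of this section. Proposition \ref{lemma:VaR_problem>} therefore applies verbatim.

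\emph{Case (ii)}, $\alpha\in(1-\overline{p},1-\underline{p}]$. Here $\underline{p}\le1-\alpha<\overline{p}$, so both regions are non-empty and they partition $\Theta$. For any $\bm\theta\in\Theta$, one of two things holds. Either $\bm\theta\in\Theta^{\left(\text{L}\right)}$, and then $G(\bm\theta)\ge\min_{\Theta^{\left(\text{L}\right)}}G$. Or $\bm\theta\in\Theta^{\left(\text{R}\right)}$, and then $G(\bm\theta)\ge G(\overline{p},\tilde{l}^*)$ by Proposition \ref{lemma:VaR_problem>}. Consequently
\[
\min_\Theta G=\min\Bigl\{\min_{\Theta^{\left(\text{L}\right)}}G,\;G(\overline{p},\tilde{l}^*)\Bigr\},
\]
and every global minimizer lies in $\Theta^{*\left(\text{L}\right)}\cup\Theta^{*\left(\text{R}\right)}$ and minimizes $G$ over that union. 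Conversely, any point of the union attaining the smaller value is a global minimizer. This gives (ii).

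The only delicate point is the openness of $\Theta^{\left(\text{R}\right)}$ at $p=1-\alpha$. The right-region infimum is still attained, because Proposition \ref{lemma:VaR_problem>} shows that the optimizer sits at $p=\overline{p}>1-\alpha$. Beyond that, the proof is bookkeeping on top of the earlier propositions. I expect no genuine obstacle, only care about the $\alpha=1$ endpoint and the non-emptiness conditions.
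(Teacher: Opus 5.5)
Your proposal is correct and follows essentially the same route as the paper's proof. Both use the same three-way case split on the position of $1-\alpha$ relative to $[\underline{p},\overline{p}]$: cases (i) and (iii) reduce to the left- and right-region problems, and case (ii) compares the two local optimal sets. Your remarks on attainment of the left-region minimum, your explicit min-over-the-union argument in case (ii), and your handling of the $\alpha=1$ endpoint just spell out what the paper's shorter proof leaves implicit.
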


\begin{proof}
For $\alpha\in\left[0,1\right)$, the feasible set is decomposed into $\Theta^{\left(\text{L}\right)}$ in \eqref{eq:Theta^L} and $\Theta^{\left(\text{R}\right)}$ in \eqref{eq:Theta^R}. If $\alpha\in\left[0,1-\overline{p}\right]$, then $\overline{p}\leq 1-\alpha$, so $\Theta=\Theta^{\left(\text{L}\right)}$. Hence the problem coincides with the left-region problem, proving (i). If $\alpha\in\left(1-\overline{p},1-\underline{p}\right]$, then $\underline{p}\leq 1-\alpha<\overline{p}$, so both $\Theta^{\left(\text{L}\right)}$ and $\Theta^{\left(\text{R}\right)}$ are non-empty. The optimum is obtained by comparing the left-region optimal set $\Theta^{*\left(\text{L}\right)}$ with the unique right-region optimum $\Theta^{*\left(\text{R}\right)}$, proving (ii). If $\alpha\in\left(1-\underline{p},1\right]$, then $1-\alpha<\underline{p}$, so  $\Theta=\Theta^{\left(\text{R}\right)}$. Hence the problem coincides with the right-region problem. By Proposition \ref{lemma:VaR_problem>}, its unique optimizer is $\left(\overline{p},\tilde{l}^*\right)$. Finally, when $\alpha=1$, the TVaR objective is given by \eqref{eq:tvar_problem_alpha=1}, which coincides with the right-region objective. Therefore the same right-region solution applies. This proves (iii).
\end{proof}

\section{Illustrative Examples}\label{sec:illustrative}

This section illustrates the results in Sections \ref{sec:VaR} and \ref{sec:tvar}. Throughout, take $\underline{p}=0.05$, $\overline{p}=0.3$, $\underline{l}=0.2$, and $\overline{l}=1$. Thus, before risk reduction, the probability of a positive loss is $0.3$, and the normalized severity is $1$. The risk holder can reduce the residual loss probability to as low as $0.05$ and the residual severity to as low as $0.2$.

\subsection{Comparing Value-at-Risk and Tail Value-at-Risk Strategies}\label{sec:quad_cost}

In this subsection, we adapt the quadratic family of joint cost functions in \eqref{eq:cost_function}, introduced in Example \ref{eg:cost_function}. Recall that the quadratic specification is convenient because the sign of the parameter $c_{xy}=\delta$ determines whether self-protection and self-insurance are technological substitutes $\left(\delta>0\right)$, technological complements $\left(\delta<0\right)$, or technologically independent $\left(\delta=0\right)$ in the cost technology. Throughout this subsection, fix $A=20$, $B=5$, $a=2.05$, and $b=0.65$. Only the interaction parameter $\delta$ changes across the examples below, and the chosen values of $a$ and $b$ ensure positive marginal costs for both the positive and negative values of $\delta$ considered below. Since $1-\overline{p}=0.7$ and $1-\underline{p}=0.95$, these two thresholds determine which part of Theorem \ref{thm:var} or Theorem \ref{thm:tvar} is invoked as the confidence level changes.

\subsubsection{Threshold-Driven Strategies under Value-at-Risk}

We first illustrate the optimal strategies under VaR. For the numerical values in this subsection, we set $\delta=1$. This choice is used only to compute the objective values and to ensure positive marginal costs. As shown in Theorem \ref{thm:var}, the qualitative structure of the VaR solution does not depend on whether $\delta$ is positive, negative, or zero. Under VaR, the problem is governed by the probability threshold $1-\alpha$, rather than by the sign of the cost-side interaction. For this specification, the one-dimensional right-region severity optimizer in Proposition \ref{lemma:VaR_problem>} is $\tilde{l}^*=0.93$.

Table \ref{tab:var_illustration} summarizes the optimal VaR strategies for different confidence levels.
\begin{table}[htbp]
\centering
\begin{tabular}{c c c c}
\hline
\(\alpha\) & Result invoked & \({\bm\theta}^*=\left(p^*,l^*\right)\) & $F$ \\
\hline
0.5 & Theorem \ref{thm:var}(i) & \(\left(0.3,1\right)\) & 0 \\
0.7 & Theorem \ref{thm:var}(i) & \(\left(0.3,1\right)\) & 0 \\
0.85 & Theorem \ref{thm:var}(ii) & \(\left(0.15,1\right)\) & 0.5325 \\
0.9 & Theorem \ref{thm:var}(ii) & \(\left(0.1,1\right)\) & 0.81 \\
0.95 & Theorem \ref{thm:var}(ii) & \(\left(0.3,0.93\right)\) & 0.9878 \\
0.99 & Theorem \ref{thm:var}(iii) & \(\left(0.3,0.93\right)\) & 0.9878 \\
\hline
\end{tabular}
\caption{VaR optimal strategies under the quadratic cost specification.}
\label{tab:var_illustration}
\end{table}

\noindent
Table \ref{tab:var_illustration} illustrates the fully threshold-driven nature of the VaR strategies. When $\alpha\leq 1-\overline{p}=0.7$, Theorem \ref{thm:var}(i) applies: the positive loss is outside the VaR tail for all feasible residual probabilities, and thus the status quo $\left(\overline{p},\overline{l}\right)=\left(0.3,1\right)$ is optimal. When $0.7=1-\overline{p}<\alpha\leq 1-\underline{p}=0.95$, Theorem \ref{thm:var}(ii) applies: the risk holder compares a pure self-protection threshold strategy with a pure self-insurance strategy. At $\alpha=0.85$ and $\alpha=0.9$, the pure self-protection threshold strategy is optimal. At $\alpha=0.95$, the threshold strategy becomes sufficiently costly that the pure self-insurance strategy $\left(\overline p,\tilde l^*\right)=\left(0.3,0.93\right)$ is selected. When $\alpha>1-\underline{p}=0.95$, Theorem \ref{thm:var}(iii) applies: all feasible residual probabilities lie above the VaR threshold, and the same pure self-insurance strategy is optimal.

Thus, even though the cost technology allows interaction between self-protection and self-insurance, VaR selects strategies that do not require their simultaneous use. This illustrates the substitution-type structure of the VaR solution.

\subsubsection{Strategies under Tail Value-at-Risk with Cost-Side Substitutes}

We next illustrate the optimal strategies under TVaR. We keep $\delta=1>0$, so self-protection and self-insurance are technological substitutes in the cost technology. The net interaction term on the left region is $G_{pl}=1+\frac{1}{1-\alpha}>0$ for every $\alpha\in\left[0,1\right)$. Hence, whenever the TVaR left-region problem is relevant, the positive-net-interaction analysis in Section \ref{sec:positive_net} applies.

Table \ref{tab:tvar_positive_illustration} summarizes the optimal TVaR strategies for different confidence levels. For compactness, ``Thm.'' abbreviates Theorem and ``Prop.'' abbreviates Proposition.
\begin{table}[htbp]
\centering
\begin{tabular}{c c c c c}
\hline
\(\alpha\) & \(G_{pl}\) & Result(s) invoked & \({\bm\theta}^*=\left(p^*,l^*\right)\) & $G$ \\
\hline
0.5 & 3 & Thm. \ref{thm:tvar}(i), Prop. \ref{prop:tvar_first_result}(i)/(iii) & \(\left(0.3,1\right)\) & 0.6 \\
0.6 & 3.5 & Thm. \ref{thm:tvar}(i), Prop. \ref{prop:tvar_crossing}(i)(b) & \(\left(0.2783,0.9952\right)\) & 0.7449 \\
0.7 & 4.3333 & Thm. \ref{thm:tvar}(i), Prop. \ref{prop:tvar_crossing}(i)(b) & \(\left(0.2397,0.9823\right)\) & 0.9582 \\
0.8 & 6 & Thm. \ref{thm:tvar}(ii), Prop. \ref{prop:tvar_second_result}(i) & \(\left(0.3,0.93\right)\) & 0.9878 \\
0.99 & 101 & Thm. \ref{thm:tvar}(iii) & \(\left(0.3,0.93\right)\) & 0.9878 \\
\hline
\end{tabular}
\caption{TVaR optimal strategies under the quadratic cost specification with cost-side substitutes.}
\label{tab:tvar_positive_illustration}
\end{table}

\noindent
For $\alpha=0.5$, the entire feasible rectangle lies in the left region and Theorem \ref{thm:tvar}(i) applies; in this case, the status quo is optimal. For $\alpha=0.6$ and $\alpha=0.7$, the left-region problem remains relevant and $G_{pl}>0$. In these two cases, the optimal strategy uses both self-protection and self-insurance: both the residual probability and the residual severity are reduced from their baseline values. Geometrically, these optimal strategies correspond to interior crossing points of the two marginal-balance isoquants, where both marginal-balance conditions are satisfied.

For $\alpha=0.8$, both $\Theta^{\left(\text{L}\right)}$ and $\Theta^{\left(\text{R}\right)}$ are non-empty, so Theorem \ref{thm:tvar}(ii) applies. Although the left-region problem still has positive net interaction, the global TVaR optimum is obtained by comparing the left-region optimizer with the right-region optimizer. In this case, the right-region solution \(\left(\overline{p},\tilde {l}^*\right)=\left(0.3,0.93\right)\) is globally optimal. For \(\alpha=0.99\), all feasible residual probabilities lie in the right region, so Theorem \ref{thm:tvar}(iii) applies and the same right-region solution is optimal.

This example illustrates two points. First, unlike VaR, TVaR can induce genuinely joint self-protection/self-insurance strategies in the left region. Second, the global TVaR solution still depends on the left-region/right-region decomposition in Theorem \ref{thm:tvar}; an interior left-region marginal-balance strategy need not be globally optimal when the right-region candidate has a lower objective value.

Finally, when self-protection and self-insurance are technologically independent in the cost technology, the findings are similar to those in this subsection because \(G_{pl}=1/(1-\alpha)>0\) for all \(\alpha\in[0,1)\).

\subsubsection{Strategies under Tail Value-at-Risk with Cost-Side Complements}

We now set $\delta=-2.5<0$, so self-protection and self-insurance are technological complements in the cost technology. The net interaction term on the left region is $G_{pl}=-2.5+\frac{1}{1-\alpha}$, $\alpha\in\left[0,1\right)$. Hence, $G_{pl}<0$ for $\alpha\in\left[0,0.6\right)$, $G_{pl}=0$ for $\alpha=0.6$, and $G_{pl}>0$ for $\alpha\in\left(0.6,1\right)$. At lower confidence levels, the cost-side complementarity dominates the TVaR tail-risk interaction, and the marginal-balance isoquants are upward-sloping. At the knife-edge confidence level $\alpha=0.6$, the cost-side complementarity exactly offsets the TVaR tail-risk interaction, and the isoquants become coordinatewise. At higher confidence levels, the TVaR tail-risk interaction dominates, and the isoquants become downward-sloping.

Table \ref{tab:tvar_confidence_illustration} summarizes the optimal TVaR strategies for different confidence levels. For compactness, ``Thm.'' abbreviates Theorem and ``Prop.'' abbreviates Proposition. In the rows with $G_{pl}<0$, ``refl. Prop.'' refers to the proposition applied to the transformed objective $\hat{G}$ after the severity reflection $\lambda=\underline{l}+\overline{l}-l$.

\begin{table}[htbp]
\centering
\small
\begin{tabular}{c c c c c}
\hline
\(\alpha\) & \(G_{pl}\) & Result(s) invoked & \({\bm\theta}^*=\left(p^*,l^*\right)\) & \(G\) \\
\hline
0.5 & \(-0.5\) & Thm. \ref{thm:tvar}(i), Prop. \ref{prop:tvar_negative_net}, refl. Prop. \ref{prop:tvar_second_result}(ii) & \(\left(0.3,1\right)\) & 0.6 \\
0.55 & \(-0.2778\) & Thm. \ref{thm:tvar}(i), Prop. \ref{prop:tvar_negative_net}, refl. Prop. \ref{prop:tvar_crossing}(i)(b) & \(\left(0.2913,0.9962\right)\) & 0.6659 \\
0.58 & \(-0.119\) & Thm. \ref{thm:tvar}(i), Prop. \ref{prop:tvar_negative_net}, refl. Prop. \ref{prop:tvar_crossing}(i)(b) & \(\left(0.2834,0.9867\right)\) & 0.7111 \\
0.59 & \(-0.061\) & Thm. \ref{thm:tvar}(i), Prop. \ref{prop:tvar_negative_net}, refl. Prop. \ref{prop:tvar_crossing}(i)(b) & \(\left(0.2805,0.9834\right)\) & 0.7272 \\
0.6 & 0 & Thm. \ref{thm:tvar}(i), Prop. \ref{prop:tvar_degenerate} & \(\left(0.2775,0.98\right)\) & 0.7439 \\
0.7 & 0.8333 & Thm. \ref{thm:tvar}(i), Prop. \ref{prop:tvar_crossing}(i)(b) & \(\left(0.2383,0.9403\right)\) & 0.95 \\
0.8 & 2.5 & Thm. \ref{thm:tvar}(ii), Prop. \ref{prop:tvar_crossing}(i)(b) & \(\left(0.3,0.93\right)\) & 0.9878 \\
0.99 & 97.5 & Thm. \ref{thm:tvar}(iii) & \(\left(0.3,0.93\right)\) & 0.9878 \\
\hline
\end{tabular}
\caption{TVaR optimal strategies under the quadratic cost specification with cost-side complements.}
\label{tab:tvar_confidence_illustration}
\end{table}

\noindent
For $\alpha=0.5$, the entire feasible rectangle lies in the left region and the net interaction is negative. Hence Theorem \ref{thm:tvar}(i) applies together with Proposition \ref{prop:tvar_negative_net}; in this case, the status quo is optimal. For $\alpha=0.55$, $\alpha=0.58$, and $\alpha=0.59$, the entire feasible rectangle still lies in the left region and $G_{pl}<0$, so Theorem \ref{thm:tvar}(i) again applies together with Proposition \ref{prop:tvar_negative_net}. In these three cases, the optimal strategy uses both self-protection and self-insurance. Thus, the upward-sloping isoquant case can also induce genuine joint self-protection/self-insurance optima, rather than only boundary or status-quo solutions.

At $\alpha=0.6$, the entire feasible rectangle still lies in the left region, but the net interaction is zero. Hence Theorem \ref{thm:tvar}(i) applies, and the two marginal-balance isoquants become coordinatewise. Proposition \ref{prop:tvar_degenerate} therefore applies to the left-region problem. At $\alpha=0.7$, the entire feasible rectangle still lies in the left region, but the net interaction becomes positive. Thus Theorem \ref{thm:tvar}(i) applies together with the positive-net-interaction analysis in Section \ref{sec:positive_net}; in this case, the two marginal-balance isoquants are downward-sloping. In both cases, the optimal strategy again uses both self-protection and self-insurance.

For $\alpha=0.8$, both $\Theta^{\left(\text{L}\right)}$ and $\Theta^{\left(\text{R}\right)}$ are non-empty, so Theorem \ref{thm:tvar}(ii) applies. Comparing the left-region optimizer with the right-region optimizer, the globally optimal strategy is the right-region solution \(\left(\overline{p},\tilde {l}^*\right)=\left(0.3,0.93\right)\). For \(\alpha=0.99\), all feasible residual probabilities lie in the right region, so Theorem \ref{thm:tvar}(iii) applies and the same right-region solution is optimal.

This example illustrates Propositions \ref{prop:tvar_negative_net} and \ref{prop:tvar_degenerate}. When the cost-side complementarity is strong enough relative to the TVaR tail-risk interaction, the net interaction is negative and the marginal-balance isoquants are upward-sloping. In this case, the optimal strategy may still be an interior joint self-protection/self-insurance strategy. As the confidence level increases, the TVaR tail-risk interaction becomes stronger until it exactly offsets the cost-side complementarity. In this knife-edge case, the optimal strategy is determined by coordinatewise best self-protection and self-insurance choices. As the confidence level increases further, the net interaction turns positive, and the problem shifts back to the downward-sloping isoquant geometry studied in Section \ref{sec:positive_net}.

\subsection{Higher-Order Joint Cost Function}

The quadratic cost specification in Section \ref{sec:quad_cost} is useful for illustrating the comparison between VaR and TVaR solutions and for showing how TVaR strategies differ under cost-side substitutes and complements. However, in the left-region TVaR problem, a quadratic cost function generates affine marginal-balance isoquants, so the two isoquants can cross at most once. To illustrate the multiple-crossing case in Proposition \ref{prop:tvar_crossing}, we now consider a higher-order joint cost specification. Throughout this subsection, set $\alpha=0.5$. Since $\alpha=0.5<1-\overline{p}=0.7$, the entire feasible rectangle lies in the left region, so $\Theta=\Theta^{\left(\text{L}\right)}$, and Theorem \ref{thm:tvar}(i) applies.

Define, for any ${\bm\theta}\in\Theta$,
\begin{align*}
\pi\left(p,l\right)=&\;\Phi\left(p\right)+pl+l^2-\frac{5}{2}l-K\\=&\;25p^4-23p^3+10.06p^2-4.9092p+pl+l^2-\frac{5}{2}l-K,
\end{align*}
where $K$ is the normalizing constant such that $\pi\left(\overline{p},\overline{l}\right)=\pi\left(0.3,1\right)=0$. A direct check gives $\pi_p<0$, $\pi_l<0$, $\pi_{pp}>0$, and $\pi_{ll}>0$ on $\Theta$. Thus the maintained monotonicity and own-curvature assumptions are satisfied. Equivalently, this defines a higher-order joint cost function in reduction variables through \(c\left(x,y\right)=\pi\left(\overline{p}-x,\overline{l}-y\right)\).

Since $\alpha=0.5$, the left-region TVaR objective is, for any $\left(p,l\right)\in\Theta$,
\begin{equation*}
G\left(p,l\right)=\pi\left(p,l\right)+2pl=\Phi\left(p\right)+3pl+l^2-\frac{5}{2}l-K.
\end{equation*}
Therefore, on $\Theta$, $G_{pl}=3>0$, and the positive-net-interaction analysis in Section \ref{sec:positive_net} applies. The two marginal-balance isoquants are
\begin{equation*}
\mathcal{I}_{\text{F}}=\left\{\left(p,l\right)\in\Theta:G_p\left(p,l\right)=\Phi'\left(p\right)+3l=0\right\},
\end{equation*}
and
\begin{equation*}
\mathcal{I}_{\text{S}}=\left\{\left(p,l\right)\in\Theta:G_l\left(p,l\right)=3p+2l-\frac{5}{2}=0\right\}.
\end{equation*}
The two isoquants $\mathcal{I}_{\text{F}}$ and $\mathcal{I}_{\text{S}}$ cross three times inside the feasible rectangle, at approximately $\left(0.18,0.98\right)$, $\left(0.23,0.905\right)$, and $\left(0.28,0.83\right)$. Table \ref{tab:multiple_crossings_polynomial} summarizes the objective values at these three crossings and indicates which crossings are selected by the candidate-set geometry in Proposition \ref{prop:tvar_crossing}.
\begin{table}[htbp]
\centering
\begin{tabular}{c c c c c}
\hline
Crossing & \(\left(p,l\right)\) & \(G\) & Selected candidate & Global optimum\\
\hline
1 & $\left(0.18,0.98\right)$ & 0.559856 & Yes & Yes\\
2 & $\left(0.23,0.905\right)$ & 0.560012 & No & No\\
3 & $\left(0.28,0.83\right)$ & 0.559856 & Yes & Yes\\
\hline
\end{tabular}
\caption{TVaR optimal strategies under the higher-order cost specification with cost-side substitutes at $\alpha=0.5$.}
\label{tab:multiple_crossings_polynomial}
\end{table}

This example illustrates why Proposition \ref{prop:tvar_crossing} is needed. With three isolated crossing components, the common marginal-balance set cannot be summarized by a single crossing point. Instead, the relative vertical ordering of \(\mathcal I_{\mathrm F}\) and \(\mathcal I_{\mathrm S}\) changes as $p$ increases across the feasible rectangle. In the notation of Proposition \ref{prop:tvar_crossing}, the number of crossings is odd, with \(m=3=2s+1\) and \(s=1\). The relevant candidate set is therefore determined by the odd-crossing case in Proposition \ref{prop:tvar_crossing}(i)(b), according to the initial vertical ordering of the two isoquants. In the present example, the first and third crossings are selected as candidates, whereas the second crossing is not.

The example also shows that the left-region TVaR problem may have non-unique local optima and, in this case, non-unique global optima. The first and third crossings are both local minima and attain the same objective value. The second crossing satisfies both marginal-balance conditions, but it is a saddle point and has a slightly larger objective value. Thus, when cost interactions generate multiple crossings, the set of minimizers need not be a singleton. Economically, the risk holder may face two distinct optimal ways of balancing self-protection and self-insurance: one with more self-protection and less self-insurance, and another with less self-protection and more self-insurance. This reflects the substitution-type marginal tradeoff under positive net interaction. The isoquant geometry identifies both possibilities and reduces the global problem to a comparison over the structured candidate set.

Finally, we emphasize that non-unique local optima do not necessarily imply non-unique global optima. The numerical values in this example are chosen so that the first and third crossings are both local and global optima, thereby illustrating the possibility of non-unique substitution-type optimal strategies under positive net interaction. In general, the objective values of the selected local candidates must be compared to determine the global optimum. Under negative net interaction, the same logic applies after reflecting the severity coordinate. If non-unique global optima arise in the reflected problem, then, after mapping back to the original residual-risk variables, they correspond to complementarity-type optimal strategies.



\section{Concluding Remarks and Future Directions}\label{sec:conclusion}

This paper studies how a risk holder should strategically combine self-protection and self-insurance when market insurance is set aside. In the Bernoulli loss model considered here, self-protection reduces the residual probability of a positive loss, while self-insurance reduces the residual severity conditional on loss occurrence. The risk holder incurs a joint risk-reduction cost that allows technological interaction between the two activities and evaluates residual risk using either VaR or TVaR.

Under VaR, the problem is fully threshold-driven. The optimal strategy is either no risk reduction, a pure self-protection threshold strategy, or a pure self-insurance strategy. The qualitative form of the VaR solution is not affected by whether self-protection and self-insurance are technological substitutes or complements in the cost function. Thus, VaR exhibits a substitution-type structure between the two risk-reduction strategies: the optimizer does not require their simultaneous use.

The TVaR problem is fundamentally different. On the left-region feasible set, TVaR introduces a direct interaction between residual frequency and residual severity, making the objective generally non-convex even in the Bernoulli setting. To solve this problem, we develop an isoquant geometry method based on the marginal-balance curves for self-protection and self-insurance. The sign of the net interaction term, which combines the cost-side technological interaction and the TVaR tail-risk interaction, determines whether the isoquants are downward-sloping, upward-sloping, or coordinatewise. This geometry identifies boundary candidates, extreme constrained marginal-balance candidates, touching components, and crossing components. Economically, the analysis shows that self-protection and self-insurance may behave as substitutes or complements in the optimal risk-reduction strategy depending on the balance between the cost technology, the confidence level, and the TVaR tail-risk interaction.

The illustrative examples reinforce these analytical conclusions. Under a common quadratic cost specification, they show that VaR remains fully threshold-driven, whereas TVaR can generate joint self-protection/self-insurance strategies. They also show how the confidence level can move the TVaR problem from negative net interaction, to zero net interaction, and then to positive net interaction. Finally, a higher-order cost example demonstrates that the marginal-balance isoquants may cross multiple times, leading to a structured candidate set and, in some cases, non-unique optimal joint risk-reduction strategies that exhibit substitution-type or complementarity-type behavior.

Several directions are natural for future research. First, the Bernoulli loss model can be extended to non-Bernoulli risks with multiple loss levels, continuous severities, or richer frequency-severity dependence. Second, the analysis can be developed for more general risk measures beyond VaR and TVaR, such as distortion risk measures, expectiles, mean-variance criteria, or entropic risk measures. Third, market insurance can be reintroduced to study how self-protection and self-insurance interact with insurance demand, premium principles, deductibles, coinsurance, and coverage limits. Finally, dynamic and multi-period versions of the model would allow risk-reduction decisions to interact with learning, capital constraints, and future risk exposures.

\bibliographystyle{apalike}
\bibliography{ref}

\end{document}